\newcommand\eg{\emph{e.g.}\ }
\theoremstyle{definition}
\newtheorem{theorem}{Theorem}
\newtheorem{lemma}[theorem]{Lemma}
\newtheorem{prop}[theorem]{Proposition}
\newtheorem{cor}[theorem]{Corollary}
\newtheorem{example}[theorem]{Example}
\newcommand{\KExpansions}[1]{\textsc{K-Expansions}(#1)}
\newcommand{\SExpansions}[1]{\textsc{S-Expansions}(#1)}
\newcommand{\RewritingSet}[2]{\textsc{RewritingSet}(#1,#2)}
\newcommand{\MeshSet}[2]{\textsc{MeshSet}(#1,#2)}
\newcommand{\potential}[1]{\pi(#1)}
\newcommand{\rank}[1]{\rho(#1)}
\newcommand{\gammaF}[1]{\Gamma(#1,1)}
\begin{document}

\title{Normal-order reduction grammars}
\author{Maciej Bendkowski}
\address{
  Theoretical Computer Science Department\\
  Faculty of Mathematics and Computer Science\\
  Jagiellonian University\\
  ul. Prof. {\L}ojasiewicza 6, 30-348 Krak\'ow, Poland}
\email{bendkowski@tcs.uj.edu.pl}
\thanks{This work was partially supported within the grant 2013/11/B/ST6/00975
            founded by the Polish National Science Center.} 

\maketitle

\begin{abstract}
We present an algorithm which, for given $n$, generates an unambiguous regular tree grammar defining the set of combinatory logic terms, over the set $\{S,K\}$ of primitive combinators, requiring exactly $n$ normal-order reduction steps to normalize. As a consequence of Curry and Feys's standardization theorem, our reduction grammars form a complete syntactic characterization of normalizing combinatory logic terms. Using them, we provide a recursive method of constructing ordinary generating functions counting the number of $S K$-combinators reducing in $n$ normal-order reduction steps. Finally, we investigate the size of generated grammars, giving a primitive recursive upper bound.
\end{abstract}

\section{Introduction}

Since the time of the pioneering works of Moses Schönfinkel~\cite{schonfinkel} and Haskell Curry~\cite{curry}, combinatory logic is known as a powerful, yet extremely simple in structure, formalism expressing the notion of computability. With the dawn of functional programming languages in the early 1970s,
combinatory logic, with its standard normal-order reduction scheme~\cite{curry-feys1958}, is used as a practical implementation of
lazy semantics in languages such as SASL~\cite{turner1979} or its successor
Miranda~\cite{turner1986}. Lack of bound variables in the language resolves
the intrinsic problem of substitution in \mbox{$\lambda$-calculus}, making the reduction relation a simple computational step and so, in consequence, the leading workhorse in implementing call-by-need reduction schemes.

Surprisingly, little is known about the combinatorial properties of normal-order reduction and, in particular, its behaviour in the `typical' case of large random combinators. With the growing popularity of random software testing (see, \eg \cite{DBLP:conf/icse/PalkaCRH11}) `typical' properties of random $\lambda$-terms and combinators became of immense practical importance. In this approach to software verification, large random terms are generated and used to check the programmer-declared function invariants, making it crucial to understand and exploit the semantic properties of so generated terms. 

State-of-the-art research in this field includes counting and generating $\lambda$-terms (see \eg~\cite{DBLP:journals/jfp/GrygielL13}~\cite{DBLP:journals/corr/Lescanne14}~\cite{DBLP:conf/stacs/GittenbergerG16}), their restricted classes~\cite{1510.01167}, investigating their asymptotic properties~\cite{lmcs:848}~\cite{Bendkowski2016} as well as the asymptotic properties of combinatory logic~\cite{Bendkowski2015}. 

Main tools used in this line of research include formal power series and generating functions. Interested in a particular counting sequence ${(a_n)}_{n \in \mathbb{N}}$ corresponding to a set of terms $A$, we construct a suitable generating function, which treated as a complex function in one variable $z$ yields a Taylor series expansion around $z=0$ with coefficients forming our sequence ${(a_n)}_{n \in \mathbb{N}}$. Methods of analytic combinatorics~\cite{Flajolet:2009:AC:1506267} allow us to derive, sometimes surprisingly accurate, asymptotic approximations of the growth rate of ${(a_n)}_{n \in \mathbb{N}}$ and, in consequence, use them to study the asymptotic behaviour of $A$. 

Finding appropriate generating functions plays therefore an important role in the process of investigating properties of `typical' terms. In~\cite{Bendkowski2015}, authors investigated the asymptotic density of weakly normalizing terms in the set of all combinators, showing that a `typical' combinator cannot have a trivial $0$ -- $1$ asymptotic probability of normalization. The result was obtained by constructing large classes of terms with and without the normalization property. Though sufficient for the purpose of showing the non-trivial behaviour of normalization, their classes reveal the combinatorial structure of just a small asymptotic portion of normalizing terms.

In this paper we give a complete combinatorial characterization of normalizing combinatory logic terms over the set $\{S, K\}$ of primitive combinators. We construct a recursive family ${\{R_n\}}_{n \in \mathbb{N}}$ of regular tree grammars defining combinators reducing in exactly $n$ normal-order reductions. By Curry and Feys's standardization theorem~\cite{curry-feys1958}, normal-order evaluation of normalizing combinators leads to their normal forms, hence our \emph{normal-order reduction grammars} form a complete partition of normalizing combinators. Our approach is algorithmic in nature and provides fully automated methods for constructing ${\{R_n\}}_{n \in \mathbb{N}}$ as well as their corresponding ordinary generating functions.

The paper is organized as follows. In Sections~\ref{sec:combinatory-logic}, and~\ref{sec:regular-tree-grammars} we give preliminary definitions and notational conventions. In Section~\ref{sec-pseudocodes} we explain our pseudo-code notation and related implementation. In Section~\ref{sec:algorithm} we present a high-level overview on the algorithm. In Section~\ref{sec:analysis} we analyse the algorithm giving proofs of soundness~\ref{sec:soundness}, completeness~\ref{sec:completeness} and unambiguity~\ref{sec:unambiguity}. In Section~\ref{sec:gen-funs} we give a recursive construction of ordinary generating functions corresponding to ${\{R_n\}}_{n \in \mathbb{N}}$. In Section~\ref{sec:applications} we discuss some consequences and applications of normal-order reduction grammars. Finally, in Section~\ref{sec:upper-bound} we investigate the size of the generated grammars.

\subsection{Combinatory Logic}\label{sec:combinatory-logic}
We consider the set of terms over primitive combinators $S$ and $K$. In other words, the set $\mathcal{C}$ of combinatory logic terms defined as $\mathcal{C} := S~|~K~|~\mathcal{C}\, \mathcal{C}$. We follow standard notational conventions (see e.g.~\cite{barendregt1984}) --- we omit outermost parentheses and drop parentheses from left-associated terms, e.g.~instead of $((S K) (K K))$ we write $S K (K K)$. We use $\to_w$ to denote the normal-order reduction relation (reduce the leftmost outermost redex) to which we usually refer briefly as the reduction relation. We use lower case letters $x,y,z,\ldots$ to denote combinatory logic terms. For an introduction to combinatory logic we refer the
reader to~\cite{barendregt1984},~\cite{curry-feys1958}.

\subsection{Regular tree grammars}\label{sec:regular-tree-grammars}
In order to characterize terms normalizing in $n$ steps we use
regular tree grammars (see e.g.~\cite{tata2007}), a generalization of regular
word grammars. A \emph{regular tree grammar}
$G = (S, N, \mathcal{F}, P)$ consists of an axiom $S$, a set $N$ of non-terminal
symbols such that $S \in N$, a set of terminal symbols $\mathcal{F}$ with
corresponding arities and a finite set of production rules $P$ of the form
$\alpha \to \beta$ where $\alpha \in N$ is a non-terminal and $\beta \in
T_{\mathcal{F}}(N)$ is a term in the corresponding term algebra
$T_{\mathcal{F}}(N)$, i.e.~the set of directed trees built upon terminals
$\mathcal{F}$ according to their associated arities. To build terms of grammar
$G$, we start with the axiom $S$ and use the corresponding derivation relation,
denoted by $\to$, as defined through the set of production rules $P$.

\begin{example} 
Consider the following regular tree grammar defined as $B = (S,N,\mathcal{F},P)$ where
$S := \mathcal{B}$, $N := \{\mathcal{B}\}$, $\mathcal{F} := \{`(!), `(?)(\cdot,\,
\cdot)\}$, and $P$ consists of the two following rules: 
\begin{displaymath}
\left\{\begin{array}{r@{}l@{\qquad}l}
    \mathcal{B} &\to `(?)(\mathcal{B}, \mathcal{B})\\
    \mathcal{B} &\to `(!)
  \end{array}\right.
  \end{displaymath}
Note that $B$ defines the set of terms isomorphic to plane binary trees where
leafs correspond to the nullary constant $`(!)$ and inner nodes correspond to
the binary terminal $`(?)(\cdot,\, \cdot)$.
\end{example}

In our endeavour, we are going to recursively construct regular tree grammars
generating sets of combinatory logic terms. We set a priori their axioms and
both terminal and non-terminal symbols, leaving the algorithm to define the
remaining production rules. And so, the $n$th grammar $R_n$ will have:
\begin{enumerate}[(i)]
    \item an axiom $S = R_n$,
    \item a set $\mathcal{F}$ of terminal symbols consisting of two nullary
            constants $S$, $K$ and a single binary application operator,
    \item a set of non-terminal symbols $N = \{\mathcal{C}\} \cup
            \{R_0,\ldots,R_n\}$ where $\mathcal{C}$ denotes the axiom of the set
            of all combinatory logic terms, as defined in the previous section.
\end{enumerate}
In other words, the grammar $R_n$ defining terms normalizing in $n$ steps, will
reference all previous grammars $R_0,\ldots,R_{n-1}$ and the set of all
combinatory logic terms $\mathcal{C}$. 

Throughout the paper, we adopt the following common definitions and notational
conventions regarding trees. We use lower case letters
$\alpha,\beta,\gamma,\delta,\ldots$ to denote trees, i.e.~elements of the term
algebra $T_{\mathcal{F}}(N)$ where $N = \{\mathcal{C}\} \cup \{R_0,\ldots,R_n\}$
for some $n$. Whenever we want to use a combinator without specifying its type,
we use capital letters $X,Y,\ldots$. We define the \emph{size} of $\alpha$ as
the number of applications in $\alpha$.  We say that $\alpha$ is \emph{normal}
if either $\alpha$ is of size $0$, or $\alpha = X \alpha_1 \ldots \alpha_m$, for
some $m \geq 1$, where all $\alpha_1,\ldots,\alpha_m$ are normal. In the latter
case we say moreover that $\alpha$ is \emph{complex}. Since we are going to work exclusively with normal trees, we assume that all trees are henceforth normal.
We say that a complex $\alpha$ is of \emph{length} $m$ if $\alpha$ is in form of $X \alpha_1 \ldots \alpha_m$. Otherwise, if $\alpha$ is not complex, we say that it is of length $0$. The \emph{degree} of $\alpha$, denoted as $\rank{\alpha}$, is the minimum natural number $n$ such that $\alpha$ does not contain references to any $R_i$ for $i \geq n$. In particular, if $\alpha$ does not reference any reduction grammar, its degree is equal to $0$. We use $L_G(\alpha)$ to denote the language of $\alpha$ in grammar $G$. Since $R_n$ does not reference grammars of greater index, we have
$L_{R_{\rank{\alpha}-1}}(\alpha) = L_{R_n}(\alpha)$ for arbitrary $n \geq \rank{\alpha}$.
And so, for convenience, we use $L(\alpha)$ to denote the language of $\alpha$ in grammar $R_{\rank{\alpha}-1}$ if $\rank{\alpha} > 0$. Otherwise, if 
$\rank{\alpha} = 0$ we assume that $L(\alpha)$ denotes the language of $\alpha$ in grammar $\mathcal{C}$. Finally, we say that two normal trees are
\emph{similar} if both start with the same combinator $X$ and are of equal
length.

\begin{example}
        Consider the following trees:
        \begin{enumerate}[(i)]
                \item $\alpha = S (K R_1) \mathcal{C}$, and
            \item $\beta = K (\mathcal{C} S) R_0$.
        \end{enumerate}
        Note that both $\alpha$ and $\beta$ are of size $3$ and of equal length
        $2$, although they are not similar since both start with
        different combinators. Moreover, only $\alpha$ is normal as $\beta$ has
        a subtree $\mathcal{C} S$, which is of positive size, but does not start
        with a combinator. Since $\alpha$ contains a reference to $R_1$ and no
        other reduction grammar, its degree is equal to $2$, whereas the degree of
        $\beta$ is equal to $1$.
\end{example}

A crucial observation, which we are going to exploit in our construction, is the fact that normal trees \emph{preserve} length of generated terms. In other words, if $\alpha$ is of length $m \geq 1$, then any term $x \in L(\alpha)$ is of length $m$ as well, i.e.~$x = X x_1 \ldots x_m$.

\subsection{Pseudo-codes and implementation}\label{sec-pseudocodes}

We state our algorithm using functional pseudo-codes formalising key design subroutines. The adopted syntax echoes basic Haskell notation and build-in primitives, though we use certain abbreviations making the overall presentation more comprehensible. And so, we use the following data structure representing normal trees.

\begin{lstlisting}
    -- | Normal trees.
    data Tree = S | K | C | R Int
              | App Tree Tree
\end{lstlisting}
 
In our subroutines, we use the following `syntactic sugar' abbreviating the structure of normal trees.

\begin{lstlisting}
    -- | Syntactic sugar.
    X a_1 ... a_m := App X (App a_1 (... App a_{m-1} a_m) ...)
\end{lstlisting}

Moreover, we allow the use of this abbreviated notation in pattern matching, meaning that by writing \verb|(X a_1 ... a_m)| we expect a complex tree of length $m$ for some $m \in \mathbb{N}$. If multiple arguments are supposed to share the same length, we use the same natural number $m$, e.g.~\verb|(X a_1 ... a_m)| and \verb|(X b_1 ... b_m)|.
A working Haskell implementation of our algorithm is available at~\cite{mb-haskell-implementation}.

\section{Algorithm}\label{sec:algorithm}

The key idea used in the construction of reduction grammars is to generate new productions in $R_{n+1}$ based on the productions in $R_n$. Necessarily, any term normalizing in $n+1$ steps reduces directly to a term normalizing
in $n$ steps, hence their syntactic structure should be closely related. As the base of our inductive construction, we use the set of normal forms $R_0$ given by
 \begin{equation*}
 R_0 := S~|~K~|~S R_0~|~K R_0~|~S R_0 R_0.
 \end{equation*}
 
 Clearly, primitive combinators $S$ and $K$ are in normal form. If we take a normal form $x$, then both $S\, x$ and $K\, x$ are again normal since we did not create any new redex. For the same reason, any term $S x_1 x_2$ where $x_1$ and $x_2$ are normal forms, is itself in normal form. And so, with the above grammar we have captured exactly all redex-free terms.

Let us consider productions of $R_0$. Note that from both the cases of $S R_0$ and $K R_0$ we can abstract a more general rule --- if $x$ reduces in $n$ steps, then $S x$ and $K x$ reduce in $n$ steps as well, since after reducing $x$ we have no additional redexes left to consider. It follows that any $R_n$ should contain productions $S R_n$ and $K R_n$. Similarly, from the case of $S R_0 R_0$ we can abstract a more general rule --- if $S x_1 x_2$ reduces in $n$ steps, then both $x_1$ and $x_2$ must reduce in total of $n$ steps. The normal-order reduction of $S x_1 x_2$ proceeds to normalize $x_1$ and $x_2$ sequentially. As there is no head redex, after $n$ steps we obtain a term in normal form. And so, $R_n$ should also contain productions $S R_i R_{n-i}$ for $i \in \{0,\ldots,n\}$.
 
 As we have noticed, all the above productions do not contain head redexes and hence do not increase the total amount of required reduction steps to normalize. Formalizing the above observations, we say that $\alpha$ is \emph{short} if either $\alpha = X \alpha_1$ or $\alpha = S \alpha_1 \alpha_2$. Otherwise, $\alpha$ is said to be \emph{long}. Hence, we can set a priori the short productions of $R_n$ for $n \geq 1$ and continue to construct the remaining long productions. Naturally, as we consider terms over two primitive combinators $S$ and $K$, we distinguish two types of long productions, i.e.~\textsc{S-} and \textsc{K-Expansions}.
 
 \subsection{K-Expansions}\label{sec:k-expansions}
Let us consider a production $\alpha = X \alpha_1 \ldots \alpha_m$ where $m \geq 0$. The set $\KExpansions{\alpha}$ is defined as
\[ \Big\{ K (X \alpha_1 \ldots \alpha_k) \mathcal{C} \alpha_{k+1} \ldots \alpha_m~|~k \in \{0,\ldots,m-1\} \Big\}. \]

\begin{prop}\label{prop-K}
        Let $x \in L(K (X \alpha_1 \ldots \alpha_k) \mathcal{C} \alpha_{k+1}
        \ldots \alpha_m)$. If $x \to_w y$, then $y \in L(X \alpha_1 \ldots
        \alpha_m)$.
\end{prop}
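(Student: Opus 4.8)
\textbf{Proof plan.} The plan is to unfold the definition of the left-hand language, observe that normal-order reduction is forced to contract the head redex, and then repack the reduct into the right-hand language using the fact that normal trees preserve length.

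First I would record the shape of $x$. The tree $K (X \alpha_1 \ldots \alpha_k) \mathcal{C} \alpha_{k+1} \ldots \alpha_m$ is complex with head $K$ and $m-k+2 \geq 3$ arguments, so since normal trees preserve length, every $x$ in its language has the form $x = K\, u\, v\, w_{k+1} \ldots w_m$ with $u \in L(X \alpha_1 \ldots \alpha_k)$, $v \in L(\mathcal{C})$, and $w_i \in L(\alpha_i)$ for $k+1 \leq i \leq m$. Applying the same observation to $u$: if $k \geq 1$ then $u = X\, u_1 \ldots u_k$ with $u_i \in L(\alpha_i)$, and if $k = 0$ then $u = X$.

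Next I would identify the redex contracted by $\to_w$. The head of $x$ is $K$ applied to at least the two arguments $u$ and $v$, so $K\, u\, v$ is a redex; it is leftmost-outermost because the spine to its left is just the leaf $K$, so no redex lies to its left and none properly contains it. Hence $x \to_w y$ is forced and $y = u\, w_{k+1} \ldots w_m$. Substituting the shape of $u$ yields $y = X\, u_1 \ldots u_k\, w_{k+1} \ldots w_m$ (respectively $y = X\, w_1 \ldots w_m$ when $k = 0$). Since each component lies in $L(\alpha_i)$ for the corresponding index, and since inserting the extra $\mathcal{C}$ does not change the degree — so $L(\cdot)$ for both $K (X \alpha_1 \ldots \alpha_k) \mathcal{C} \alpha_{k+1} \ldots \alpha_m$ and $X \alpha_1 \ldots \alpha_m$ is taken in the same grammar — we get $y \in L(X \alpha_1 \ldots \alpha_m)$.

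I do not expect a real obstacle here: the argument is essentially a one-step computation. The only points needing a careful word are the justification that normal-order reduction contracts the head redex $K\, u\, v$ rather than something inside the arguments, and the degenerate case $k = 0$, in which the subtree $X \alpha_1 \ldots \alpha_k$ collapses to the bare combinator $X$.
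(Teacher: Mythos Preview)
Your proposal is correct and follows essentially the same approach as the paper's proof: unpack $x$ into its components according to the tree structure, contract the head $K$-redex, and observe that the resulting components still lie in the corresponding $L(\alpha_i)$. Your version is simply more detailed than the paper's two-line argument, in particular your explicit treatment of the $k=0$ case and the remark that the degree is unchanged by the extra $\mathcal{C}$.
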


\begin{proof}
    Let $x = K (X x_1 \ldots x_k) z x_{k+1} \ldots x_m$. Consider its direct reduct $y = X x_1 \ldots x_k x_{k+1} \ldots x_m$. Clearly, $x_i \in L(\alpha_i)$ for $i \in \{1,\ldots,m\}$ which finishes the proof.
\end{proof}

In other words, the set $\KExpansions{\alpha}$ has the property that any \textsc{K-Expansion} of $\alpha$ generates terms that reduce in one step to terms generated by $\alpha$. If we compute the sets $\KExpansions{\alpha}$ for all productions $\alpha \in R_n$, we have almost constructed all of the long $K$-productions in $R_{n+1}$. What remains is to include the production $K R_n \mathcal{C}$ as any term $x \in L(K R_n \mathcal{C})$ reduces directly to $y \in L(\alpha)$ for some production $\alpha \in R_n$.

We use the following subroutine computing the set of \textsc{K-Expansions} of a given production.

\begin{lstlisting}
    -- | Returns K-Expansions of the given production.
    kExpansions :: Tree -> [Tree]
    kExpansions p = case p of
        (K a_1 ... a_m) -> kExpansions' K [a_1,...,a_m]
        (S a_1 ... a_m) -> kExpansions' S [a_1,...,a_m]
        where
            kExpansions' _ [] = []
            kExpansions' h [x_1,...,x_k] = K h C x_1 ... x_k
                : kExpansions' (App h x_1) [x_2,...,x_k]
\end{lstlisting}

\subsection{S-Expansions}\label{sec:s-expansions}
Let us consider a production $\alpha = X \alpha_1 \ldots \alpha_m$ where $m \geq 0$. We would like to define the set $\SExpansions{\alpha}$ similarly to $\KExpansions{\alpha}$, i.e.~in such a way that any term generated by an \textsc{S-Expansion} of $\alpha$ reduces in a single step to some $y \in L(\alpha)$. Unfortunately, defining and computing such a set is significantly more complex than the corresponding $\KExpansions{\alpha}$.

Let $q = X x_1 \ldots x_k z (y z)$. Suppose that $q \in L(\alpha)$ for some production $\alpha \in R_n$. Evidently, $S (X x_1 \ldots x_k) y z \to_w q$ and so we would like to guarantee that $q \in L(\beta)$ for some $\beta \in \SExpansions{\alpha}$. Assume that $\alpha = X \alpha_1 \ldots \alpha_k \gamma \delta$ where $z \in L(\gamma)$ and $yz \in L(\delta)$. Unfortunately, in order to guarantee that we capture all terms reducing to $\alpha$ via an $S$-redex and nothing more, we cannot use both $\gamma$ and $\delta$ directly. We require an additional `rewriting' operation that would extract the important sublanguages of $\gamma$ and $\delta$ so that we can operate on them, instead of $\gamma$ and $\delta$. 

Hence, let us consider the following rewriting relation $\triangleright$, extending the standard derivation relation:

\[ \alpha \triangleright \beta \Leftrightarrow \alpha \to \beta \lor \left(\alpha = \mathcal{C} \land  \exists_{n \in \mathbb{N}}~\beta = R_n \right). \]

We use $\trianglerighteq$ to denote the transitive-reflexive closure of $\triangleright$. The important property of $\trianglerighteq$ is the fact that if $\alpha \trianglerighteq \beta$, then $L(\beta) \subseteq L(\alpha)$. To denote the fact that $\alpha$ does not rewrite to $\beta$ and vice versa, we use the symbol $\alpha \parallel \beta$. In such case we say that $\alpha$ and $\beta$ are \emph{non-rewritable}. Otherwise, if one of them rewrites to the other, meaning that $\alpha$ and $\beta$ are \emph{rewritable}, we use the symbol $\alpha \bowtie \beta$.

\subsubsection{Mesh Set}
In the endeavour of finding appropriate \textsc{S-Expansions} rewritings, we need to find common \emph{meshes} of given non-rewritable trees $\alpha \parallel \beta$. In other words, a complete partition of $L(\alpha) \cap L(\beta)$ using all possible trees $\gamma$ such that $\alpha,\beta \trianglerighteq \gamma$. For this purpose, we use the following pseudo-code subroutines.

\begin{lstlisting}[mathescape=true]
    -- | Given X $\alpha_1 \ldots \alpha_m$ and X $\beta_1 \ldots \beta_m$ computes
    -- the family $\{\gamma_1, \ldots, \gamma_m\}$ of tree meshes. 
    mesh :: [Tree] -> [Tree] -> [[Tree]]
    mesh (x : xs) (y : ys)
        | x `rew` y = [y] : mesh xs ys -- case when x $\trianglerighteq$ y
        | y `rew` x = [x] : mesh xs ys -- case when y $\trianglerighteq$ x
        | otherwise = meshSet x y : mesh xs ys -- case when x $\parallel$ y
    mesh [] [] = []
\end{lstlisting}

The function \textsc{Mesh}, when given two similar productions $\alpha = X \alpha_1 \ldots \alpha_m$ and $\beta = X \beta_1 \ldots \beta_m$, constructs a family $\{\gamma_i\}_{i = 1}^{m}$ where each $\gamma_i$ depends on the comparison of corresponding arguments. In the case when $x$ rewrites to $y$ (denoted as \verb|x `rew` y| in the pseudo-code) the singleton $\{y\}$ is constructed. Similarly, when $y \trianglerighteq x$, the singleton $\{x\}$ is constructed. Otherwise, when $x$ and $y$ are both non-rewritable, $\gamma_i$ is computed using the \textsc{MeshSet} subroutine.

\begin{lstlisting}
    -- | Returns the mesh set of given trees.
    meshSet :: Tree -> Tree -> [Tree]
    meshSet (X a_1 ... a_m) (X b_1 ... b_m) =
    	cartesian X [mesh a_i b_i | i <- [1..m]]
    meshSet (R k) b @ (X b_1 ... b_m) = 
    	nub $ concatMap (\p -> meshSet p b) $ productions (R k)
    meshSet b @ (X b_1 ... b_m) (R k) =
    	nub $ concatMap (\p -> meshSet b p) $ productions (R k)
    meshSet _ _ = []		
\end{lstlisting}

When given two similar trees $\alpha = X \alpha_1 \ldots \alpha_m$ and $\beta = X \beta_1 \ldots \beta_m$, \textsc{MeshSet} computes meshes $\gamma_1,\ldots,\gamma_m$ of corresponding arguments $\alpha_i$ and $\beta_i$ using the subroutine \textsc{Mesh}. Next, argument meshes $\{\gamma_i\}_{i = 1}^{m}$ are used to construct meshes for $\alpha$ and $\beta$, using the subroutine \textsc{Cartesian} which computes the Cartesian product $\{X\} \times \gamma_1 \times \cdots \times \gamma_m$ using term application.
In the case when one of \textsc{MeshSet}'s argument is a reduction grammar $R_k$ and the other $\alpha$ is complex, \textsc{MeshSet} computes recursively mesh sets of $\alpha$ and each production $\delta \in R_k$, outputting their set-theoretic union. In any other case, \textsc{MeshSet} returns the empty set.

\begin{example}
Let $\alpha = K \mathcal{C} R_0 S$ and $\beta = K S (S R_0 \mathcal{C}) S$. Consider $\MeshSet{\alpha}{\beta}$. Both $\alpha$ and $\beta$ are similar and complex, hence
\textsc{MeshSet} proceeds directly to construct mesh sets of corresponding arguments
of $\alpha$ and $\beta$. Since $\mathcal{C} \trianglerighteq S$, we get $\gamma_1 = \{S\}$. Then, as both $R_0$ and $S R_0 \mathcal{C}$ are non-rewritable, $\gamma_2 = \MeshSet{R_0}{S R_0 \mathcal{C}}$. It follows that $\MeshSet{R_0}{S R_0 \mathcal{C}}$
is equal to $\bigcup_{\delta \in R_0} \MeshSet{\delta}{S R_0 \mathcal{C}}$. Further inspection reveals that
$\MeshSet{R_0}{S R_0 \mathcal{C}} = \{S R_0 R_0\}$ and thus $\gamma_2 = \{S R_0 R_0\}$. Finally, $\gamma_3 = \{S\}$ as $S$ rewrites trivially to itself. Since each $\gamma_i$ is a singleton, it follows that
\[ \MeshSet{\alpha}{\beta} = \{K S (S R_0 R_0) S\}. \]
\end{example}

We leave the analysis of \textsc{MeshSet}  until we fully define the construction of reduction grammars ${\{R_n\}}_{n \in \mathbb{N}}$.

\subsubsection{Rewriting Set}
Consider again our previous example of $q = X x_1 \ldots x_k z (y z) \in L(\alpha)$ where $\alpha = X \alpha_1 \ldots \alpha_k \gamma \delta$ such that both $z \in L(\gamma)$ and $yz \in L(\delta)$. In order to capture terms reducing to $\alpha$ via an $S$-redex, we need to find all pairs of trees $\eta,\zeta$ such that $\gamma \trianglerighteq \zeta$ and $\delta \trianglerighteq \eta\, \zeta$. Since such pairs of trees follow exactly the structure of $z (y z)$ we can use them to define the set $\SExpansions{\alpha}$. And so, to find such rewriting pairs, we use the following \textsc{RewritingSet} pseudo-code subroutine.

\begin{lstlisting}[escapeinside={(*}{*)}]
    -- | Given (*$\alpha$*) and (*$\beta$*) computes their rewriting set. 
    rewritingSet :: Tree -> Tree -> [Tree]
    rewritingSet a S = []
    rewritingSet a K = []
    rewritingSet a C = [C a]
    rewritingSet a (R k) = 
    	nub $ concatMap (\p -> rewritingSet a p) $ productions (R k)
    rewritingSet a (X b_1 ... b_m)
    	| a `rew` b_m => [X b_1 ... b_m]
    	| b_m `rew` a => [X b_1 ... b_{m-1} a]
    	| otherwise => 
    		cartesian (X b_1 ... b_{m-1}) [meshSet a b_m] 
\end{lstlisting}

The outcome of \textsc{RewritingSet}($\alpha,\beta$) depends on $\beta$'s structure. If $\beta$ is a primitive combinator $S$ or $K$, \textsc{RewritingSet} returns the empty set. If $\beta = \mathcal{C}$, a singleton $\{\mathcal{C} \alpha\}$ is returned. When $\beta = R_k$ for some $k\in \mathbb{N}$, \textsc{RewritingSet} computes recursively the rewriting sets of $\alpha$ and $\gamma \in R_k$, outputting their set-theoretic union. Otherwise when $\beta = X \beta_1 \ldots \beta_m$, \textsc{RewritingSet} determines whether $\alpha \bowtie \beta_m$. If $\alpha \trianglerighteq \beta_m$, a singleton $\{X \beta_1, \ldots, \beta_m\}$ is returned. Conversely, in the case of $\beta_m \trianglerighteq \alpha$, \textsc{RewritingSet} returns $\{X \beta_1, \ldots, \beta_{m-1} \alpha\}$. Finally if $\alpha$ and $\beta_m$ are non-rewritable, \textsc{RewritingSet} invokes the \textsc{Cartesian} subroutine computing the Cartesian product of $\{X \beta_1, \ldots, \beta_{m-1}\} \times \textsc{MeshSet}(\alpha,\beta_m)$ using term application, passing afterwards its result as the computed rewriting set.

\begin{example}
Let us consider the rewriting set $\RewritingSet{S}{R_0}$. Since $\beta = R_0$, we know that $\RewritingSet{S}{R_0} = \bigcup_{\gamma \in R_0} \RewritingSet{S}{\gamma}$.
It follows therefore that in order to compute $\RewritingSet{S}{R_0}$, we have to consider rewriting sets involving productions of $R_0$. Note that both productions $S$ and $K$ do not contribute new trees. It remains to consider productions $S R_0$, $K R_0$ and $S R_0 R_0$. Evidently, each of them is complex and has $R_0$ as its final argument. Hence, their corresponding rewriting sets are $S S$, $K S$ and $S R_0 S$, respectively. And so, we obtain that
\[ \RewritingSet{S}{R_0} = \{S S, K S, S R_0 S\}. \]
\end{example}

Similarly to the case of \textsc{MeshSet}, we postpone the analysis until we define the construction of ${\{R_n\}}_{n \in \mathbb{N}}$.

Equipped with the notion of mesh and rewriting sets, we are ready to define the set of \textsc{S-Expansions}. And so, let $\alpha = X \alpha_1 \ldots \alpha_m$ where $m \geq 0$. The set $\SExpansions{\alpha}$ is defined as
\[ \Big\{ S(X \alpha_1 \ldots \alpha_k) \varphi_l
                \varphi_r \alpha_{k+3} \ldots \alpha_m~|~k \in \{0,\ldots,m-2\} \Big\}, \]
where $(\varphi_l \varphi_r) \in \RewritingSet{\alpha_{k+1}}{\alpha_{k+2}}$. We use the following subroutine computing the set of \textsc{S-Expansions} for a given $\alpha$.

\begin{lstlisting}
  -- | Returns S-Expansions of the given production.
  sExpansions :: Tree -> [Tree]
  sExpansions p = case p of
      (K a_1 ... a_m) -> sExpansions' K [a_1,...,a_m]
      (S a_1 ... a_m) -> sExpansions' S [a_1,...,a_m]
      where
          sExpansions' _ [] = []
          sExpansions' _ [_] = []
          sExpansions' h [x_1,x_2,...,x_k] =
                map (\(App l r) -> S h l r x_3 ... x_m) 
                (rewritingSet x_1 x_2) ++
                     sExpansions' (App h x_1) [x_2,...,x_m]
\end{lstlisting}

\begin{prop}\label{prop-S}
 Let $x \in L(S(X \alpha_1 \ldots \alpha_k) \varphi_l
 \varphi_r \alpha_{k+3} \ldots \alpha_m)$. If $x \to_w y$, then\\ $y \in L(X
 \alpha_1 \ldots \alpha_k \varphi_r (\varphi_l\, \varphi_r) \alpha_{k+3} \ldots
 \alpha_m)$.
\end{prop}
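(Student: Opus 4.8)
The plan is to mirror the (very short) argument used for Proposition~\ref{prop-K}: take an arbitrary element of the source language, write it in fully decomposed form, locate the unique leftmost--outermost redex, contract it, and read off that the result lies in the target language. The statement is purely about languages and a single $\to_w$ step, so no induction is needed.

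Concretely, I would fix an arbitrary $x \in L(S(X \alpha_1 \ldots \alpha_k)\varphi_l\varphi_r\alpha_{k+3}\ldots\alpha_m)$. Since this tree is normal and normal trees preserve length, $x$ must have the shape
\[ x = S\,(X x_1 \ldots x_k)\, u\, v\, w_{k+3} \ldots w_m, \]
where $x_i \in L(\alpha_i)$ for $1 \le i \le k$, $u \in L(\varphi_l)$, $v \in L(\varphi_r)$ and $w_j \in L(\alpha_j)$ for $k+3 \le j \le m$ (with the usual convention that the initial segment $X x_1 \ldots x_k$ collapses to $X$ when $k=0$ and the trailing segment is empty when $m=k+2$). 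Here the length-preservation remark is what justifies that the first argument is itself a complex tree of length $k$ headed by $X$, and that $\varphi_l\varphi_r$, being the application node with children $\varphi_l$ and $\varphi_r$, contributes exactly an application $u\,v$ with $u \in L(\varphi_l)$, $v \in L(\varphi_r)$.

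Next I would observe that $x$ begins with $S$ applied to at least three arguments, so there is no redex to the left of the head and the normal-order step is forced to contract the head $S$-redex:
\[ x \to_w (X x_1 \ldots x_k)\, v\, (u\, v)\, w_{k+3} \ldots w_m = X x_1 \ldots x_k\, v\,(u\,v)\, w_{k+3} \ldots w_m =: y, \]
and by determinism of $\to_w$ this is the $y$ of the statement. Membership is then immediate: $x_i \in L(\alpha_i)$, $v \in L(\varphi_r)$, $u\,v \in L(\varphi_l\varphi_r)$, and $w_j \in L(\alpha_j)$, hence $y \in L(X \alpha_1 \ldots \alpha_k \varphi_r(\varphi_l\,\varphi_r)\alpha_{k+3}\ldots\alpha_m)$, as required.

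There is essentially no hard step; the only points deserving a word of care are (i) justifying the decomposition of $x$ --- in particular that its first argument is a complex tree of length $k$ headed by $X$, which is exactly where the length-preservation property of normal trees enters --- and (ii) noting that the displayed $S$-redex is genuinely the leftmost--outermost one, so that $\to_w$ does what we want. Everything else is bookkeeping of which subterm lies in which argument language.
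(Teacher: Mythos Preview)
Your proof is correct and follows essentially the same route as the paper's: decompose $x$ according to the normal tree, contract the head $S$-redex, and read off membership argument by argument. The paper's version is terser (it simply writes $x = S(Xx_1\ldots x_k)\,w\,z\,x_{k+3}\ldots x_m$ and observes $w\in L(\varphi_l)$, $z\in L(\varphi_r)$), but the content is identical; one small wording slip in your write-up is that in the \emph{source} tree $\varphi_l$ and $\varphi_r$ are two separate arguments of $S$, not a single application node --- the application $\varphi_l\varphi_r$ only appears in the \emph{target} tree.
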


\begin{proof}
    Let $x = S (X x_1 \ldots x_k) w z x_{k+3} \ldots x_m$. Let us consider its direct reduct $y$ in form of $X x_1 \ldots x_k z (w\, z) x_{k+3} \ldots x_m$. Clearly, $x_i \in  L(\alpha_i)$ for $i$ in proper range. Moreover, both $w \in L(\varphi_l)$ and $z \in L(\varphi_r)$, which finishes the proof.
\end{proof}

\subsection{Algorithm pseudo-code}\label{sec:pceudo-code}
With the complete and formal definitions of both \textsc{S-} and \textsc{K-Expansions} we are ready to give the main algorithm \textsc{Reduction Grammar}, which for given $n \in \mathbb{N}$ constructs the grammar $R_n$.

\begin{lstlisting}[escapeinside={(*}{*)}]
  -- | Given (*$n \in \mathbb{N}$*) constructs (*$R_n$*).
  reductionGrammar :: Integer -> [Tree]
  reductionGrammar 0 = [S, K, S (R 0), K (R 0), S (R 0) (R 0)]
  reductionGrammar n = [S (R n), K (R n)]
  	++ [S (R $ n-i) R_i | i <- [0..n]]
  	++ [K (R $ n-1) C]
  	++ concatMap kExpansions (reductionGrammar $ n-1)
  	++ concatMap sExpansions (reductionGrammar $ n-1)
\end{lstlisting}

\begin{example}
Let us consider $\alpha = S S S R_0$. Since $\alpha \in \SExpansions{S R_0 R_0}$ we get $\alpha \in R_1$. Note that $\SExpansions{\alpha}$ contains $\beta_1 = S (S S) S S$ and
 $\beta_2 = S (S S ) K S$. It follows that $\beta_1, \beta_2 \in R_{2}$.
\end{example}

\section{Analysis}\label{sec:analysis}
\subsection{Tree potential}\label{sec:tree-potential}
Most of our proofs in the following sections are using inductive reasoning on the underlying tree structure. Unfortunately, in certain cases most natural candidates for induction such as tree size fail due to \emph{self-referencing productions}, i.e.~productions of $R_n$ which explicitly use the non-terminal symbol $R_n$. In order to remedy such problems, we introduce the notion of \emph{tree potential} $\potential{\alpha}$, defined inductively as\newpage
\[ \potential{S} = \potential{K} = \potential{\mathcal{C}} = 0,\]
\[ \potential{X \alpha_1 \ldots \alpha_m} = m + \sum_{i=1}^{m}
\potential{\alpha_i}, \]
\[ \potential{R_n} = 1 + \max_{\gamma \in \Phi (R_n)} \potential{\gamma} \]
where $\Phi (R_n)$ denotes the set of productions of $R_n$ which do not use the non-terminal symbol $R_n$. Note that such a definition of potential is almost identical to the notion of tree size. The potential of $\alpha$ is the sum of $\alpha$'s size and the weighted sum of all non-terminal grammar symbols occurring in $\alpha$. 

Immediately from the definition we get $\potential{R_0} =
1$. Moreover, $\potential{R_{n+1}} > \potential{R_n}$ for any $n \in \mathbb{N}$.  Indeed, let $\alpha \in R_n$ be the witness of $R_n$'s potential. Clearly, $(K \alpha\, \mathcal{C}) \in \Phi(R_{n+1})$ and so $R_{n+1}$ has necessarily greater potential. Moreover, $\potential{\alpha} > \potential{\beta}$ if $\beta$ is a subtree of $\alpha$. It follows that the notion of tree potential is a good candidate for the intuitive tree complexity measure.

\subsection{Soundness}\label{sec:soundness}

In this section we are interested in the soundness of \textsc{Reduction Grammar}. In particular, we prove that it is computable, terminates on all legal inputs and, for given $n$, constructs a reduction grammar $R_n$ generating only terms that require exactly $n$ steps to normalize.

Let us start with showing that the rewriting relation is decidable.

\begin{prop}
    It is decidable to check whether $\alpha \trianglerighteq \beta$.
\end{prop}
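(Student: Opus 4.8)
The plan is to reduce the question to the termination of a recursive decision procedure that peels off the outermost structure of $\alpha$. To begin with, the one-step relation $\triangleright$ is trivially decidable: $\alpha \triangleright \beta$ holds exactly when either $\alpha = \mathcal{C}$ and $\beta = R_n$ for some $n$ (a purely syntactic test), or $\alpha \to \beta$; and $\alpha \to \beta$ means that $\beta$ is obtained from $\alpha$ by replacing one non-terminal leaf --- the symbol $\mathcal{C}$, with its three fixed productions, or some $R_k$, with its finitely many productions --- so that only finitely many candidates need to be compared with $\beta$. (Strictly, this presupposes that each $R_k$ has already been recognized as a finite, well-defined grammar; since the construction of $R_k$ itself invokes the rewriting test on trees of bounded degree, the clean formulation is a simultaneous induction on $k$ establishing at once that $R_0,\dots,R_k$ are finite and that $\trianglerighteq$ is decidable on trees of degree at most $k+1$, the base case referring only to the explicitly given $R_0$; I suppress this bookkeeping below.)

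For the reflexive-transitive closure I would use the following structural characterization, obtained by noting that only non-terminal leaves are ever rewritten --- so the head combinator and the length of a complex tree are invariant under $\trianglerighteq$ --- together with the fact that $\mathcal{C}$ rewrites to every tree (expand $\mathcal{C}$ into any application spine by iterating $\mathcal{C}\to\mathcal{C}\mathcal{C}$ and discharge each resulting leaf with $\mathcal{C}\triangleright S$, $\mathcal{C}\triangleright K$, or $\mathcal{C}\triangleright R_n$). Namely, $\alpha \trianglerighteq \beta$ holds iff: $\alpha = \beta$; or $\alpha = \mathcal{C}$; or $\alpha = R_n$ and some production of $R_n$ rewrites to $\beta$; or $\alpha$ is complex, $\beta$ is similar to $\alpha$, say $\alpha = X\alpha_1\cdots\alpha_m$ and $\beta = X\beta_1\cdots\beta_m$, and $\alpha_i\trianglerighteq\beta_i$ for every $i$; in the only remaining case, $\alpha\in\{S,K\}$ with $\alpha\neq\beta$, it fails. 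Reading these clauses in order defines the procedure $\mathrm{rew}(\alpha,\beta)$, and its correctness is a routine induction.

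The hard part will be termination of $\mathrm{rew}$, which is not automatic because of \emph{self-referencing productions}: $R_n$ may contain productions such as $S R_n$, $K R_n$, $S R_0 R_n$ and $S R_n R_0$, so the call $\mathrm{rew}(S R_n,\beta)$ triggered from $\mathrm{rew}(R_n,\beta)$ is larger than its parent in both the size and the potential of its first argument, and neither of those can serve as the termination measure. Instead I would track the potential $\potential{\beta}$ of the \emph{second} argument, and claim that along every branch of the recursion $\potential{\beta}$ strictly decreases at least once within any two consecutive calls. Indeed: a call with complex $\alpha$ recurses only into calls whose second component is some $\beta_i$, which by the definition of potential satisfies $\potential{\beta_i} < \potential{\beta}$; a call with $\alpha = R_n$ recurses into calls $\mathrm{rew}(\gamma,\beta)$ with $\beta$ unchanged, but every production $\gamma$ of $R_n$ is either a bare combinator $S$ or $K$, which spawns no further call, or a complex tree, in which case the subsequent call --- if there is one --- already decreases $\potential{\beta}$; and for any other shape of $\alpha$ the procedure halts immediately. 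Since $\potential{\beta}$ is a non-negative integer, every branch of the recursion has length linear in $\potential{\beta}$, and since the branching is finite (one sub-call per production of $R_n$, or per argument position of a complex tree), the recursion tree is finite, $\mathrm{rew}$ terminates, and $\alpha\trianglerighteq\beta$ is thereby decided.
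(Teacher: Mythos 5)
Your proof is correct and follows essentially the same route as the paper's: both arguments turn on the tree potential, and both dispose of the problematic self-referencing productions of $R_n$ by observing that such a production must be matched against a similar $\beta$, so the recursive test lands on a strictly smaller piece of $\beta$. The only real difference is the termination bookkeeping --- you track $\potential{\beta}$ alone and argue an amortized decrease over every two consecutive calls, whereas the paper inducts on $\potential{\alpha}+\potential{\beta}$ and absorbs the non-decreasing step by unfolding one extra level of structure; the two devices are interchangeable here, and the paper itself uses your amortized phrasing when remarking on the termination of \textsc{MeshSet}.
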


\begin{proof}
        Induction over $n = \potential{\alpha} + \potential{\beta}$.  If $\alpha
        = X$, then the only tree $\alpha$ rewrites to is $X$. On the other hand,
        if $\alpha = \mathcal{C}$, then $\alpha$ rewrites to any $\beta$. And so,
        it is decidable to check whether $\alpha \trianglerighteq \beta$ in case
        $n = 0$. Now, let us assume that $n > 0$. We have two remaining cases to
        consider.
        \begin{enumerate}[(i)]
            \item If $\alpha = X \alpha_1 \ldots \alpha_m$, then $\alpha
                    \trianglerighteq \beta$ if and only if $\beta = X \beta_1
                    \ldots \beta_m$ and $\alpha_i \trianglerighteq \beta_i$ for
                    all $i \in \{1,\ldots,m\}$. Since the total potential of
                    $\potential{\alpha_i} + \potential{\beta_i}$ is less than $n$, we can use
                    the induction hypothesis to decide whether all arguments of
                    $\alpha$ rewrite to the respective arguments of $\beta$. It
                    follows that we can decide whether $\alpha \trianglerighteq
                    \beta$.
           \item If $\alpha = R_k$, then clearly $\alpha \trianglerighteq \beta$
            if and only if $\beta = R_k$ or there exists a production $\gamma \in 
            	R_k$ such that
                   $\gamma \trianglerighteq \beta$. Let us assume that $\gamma$
                   is a production of $R_k$. Note that if $\gamma
                   \trianglerighteq \beta$, then $\gamma$ and $\beta$ are
                   similar. And so, since similarity is decidable, we can
                   rephrase our previous observation as $\alpha \trianglerighteq
                   \beta$ if and only if $\beta = R_k$ or there exists a production $\gamma \in
                   R_k$ such that $\gamma$ is similar to $\beta$ and $\gamma
                   \trianglerighteq \beta$. Checking whether $\beta = R_k$ is trivial, so let us assume the other option and start with the case when
                   $\gamma$ is a short production referencing $R_k$.

                   If $\gamma = X R_k$ is similar to $\beta = X \beta_1$, we
                   know that $\gamma \trianglerighteq \beta$ if and only if $R_k
                   \trianglerighteq \beta_1$. Since $\potential{R_k} +
                   \potential{\beta_1} < n$, we know that checking whether $R_k
                   \trianglerighteq \beta_1$ is decidable, hence so is 
                   $\gamma \trianglerighteq \beta$.
                   
                   Let us assume w.l.o.g.~that $\gamma = S R_k R_0$. Clearly,
                   $\beta = S \beta_1 \beta_2$. And so, $\gamma \trianglerighteq
                   \beta$ if and only if $R_k \trianglerighteq \beta_1$ and $R_0
                   \trianglerighteq \beta_2$. Notice that $\potential{R_k} +
                   \potential{\beta_1} < n$ as well as $\potential{R_0} +
                   \potential{\beta_2} < n$. Using the induction hypothesis to
                   both, we get that checking $R_k \trianglerighteq \beta_1$ and
                   $R_0 \trianglerighteq \beta_2$ is decidable, hence so is $\alpha
                   \trianglerighteq \beta$.

                   Finally, if $\gamma$ is a long production we can rewrite it
                   as $\gamma = X \gamma_1 \ldots \gamma_m$, and so reduce this case to
                   the previous one when both trees are complex, as $\potential{\gamma}$ is necessarily smaller than $n$.
        \end{enumerate}
\end{proof}

\begin{prop}\label{prop-meshset-rewrites}
    Let $\alpha,\beta$ be two trees.  Then, both $\alpha \trianglerighteq
    \gamma$ and $\beta \trianglerighteq \gamma$ for arbitrary $\gamma \in
    \MeshSet{\alpha}{\beta}$.
\end{prop}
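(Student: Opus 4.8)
The plan is to induct on the combined potential $n = \potential{\alpha} + \potential{\beta}$, following exactly the case structure of the \textsc{MeshSet} pseudo-code, since that is how $\MeshSet{\alpha}{\beta}$ is actually produced. Note first that $\MeshSet{\alpha}{\beta}$ is nonempty only when $\alpha$ and $\beta$ are \emph{similar} complex trees, or when one of them is some $R_k$ and the other is complex; in every other case the set is empty and there is nothing to prove. So we only need to handle those two cases.

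In the first case, $\alpha = X \alpha_1 \ldots \alpha_m$ and $\beta = X \beta_1 \ldots \beta_m$, and every $\gamma \in \MeshSet{\alpha}{\beta}$ has the form $\gamma = X \gamma_1 \ldots \gamma_m$ where $\gamma_i$ is one of the elements of the $i$th family returned by \textsc{Mesh}. Inspecting \textsc{Mesh}, there are three subcases for each $i$: if $\alpha_i \trianglerighteq \beta_i$ then $\gamma_i = \beta_i$, and we need $\alpha_i \trianglerighteq \beta_i$ (given) and $\beta_i \trianglerighteq \beta_i$ (reflexivity); symmetrically if $\beta_i \trianglerighteq \alpha_i$; and if $\alpha_i \parallel \beta_i$ then $\gamma_i \in \MeshSet{\alpha_i}{\beta_i}$, so since $\potential{\alpha_i} + \potential{\beta_i} < n$ the induction hypothesis gives both $\alpha_i \trianglerighteq \gamma_i$ and $\beta_i \trianglerighteq \gamma_i$. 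In all three subcases we obtain $\alpha_i \trianglerighteq \gamma_i$ and $\beta_i \trianglerighteq \gamma_i$. It then remains to lift this componentwise information to $\alpha \trianglerighteq \gamma$ and $\beta \trianglerighteq \gamma$; this follows because $\triangleright$ applied inside a complex tree acts on a single argument at a time, so $\trianglerighteq$ is closed under simultaneous rewriting of all arguments (a short separate induction on $\sum_i(\potential{\alpha_i}+\potential{\gamma_i})$, or simply the observation that one may rewrite the arguments one after another).

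In the second case, say $\alpha = R_k$ and $\beta$ is complex (the mirror case is identical). Then $\MeshSet{R_k}{\beta} = \bigcup_{\delta \in \operatorname{productions}(R_k)} \MeshSet{\delta}{\beta}$, so any $\gamma$ in it lies in $\MeshSet{\delta}{\beta}$ for some production $\delta$ of $R_k$. Here the key point is the potential bookkeeping: if $\delta$ does not reference $R_k$ then $\delta \in \Phi(R_k)$, so $\potential{\delta} \le \potential{R_k} - 1 < \potential{R_k}$, and the combined potential has strictly decreased, letting us apply the induction hypothesis to conclude $\delta \trianglerighteq \gamma$ and $\beta \trianglerighteq \gamma$; since $R_k \to \delta$ we get $R_k \trianglerighteq \gamma$ by transitivity. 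The genuinely delicate point is a self-referencing production $\delta$, where naively $\potential{\delta}$ need not be below $\potential{R_k}$. I expect this to be the main obstacle, and the way around it is the same device used in the decidability proof above: a self-referencing short production $X R_k$ or $S R_k R_0$ can only contribute to $\MeshSet{\delta}{\beta}$ through recursive \textsc{MeshSet} calls on its arguments, and for those arguments the relevant combined potential is $\potential{R_k} + \potential{\beta_i} < \potential{R_k} + \potential{\beta} = n$ (using that $\beta_i$ is a proper subtree of $\beta$, hence of strictly smaller potential), so the induction hypothesis still applies; a long self-referencing $\delta = X \delta_1 \ldots \delta_m$ reduces to the complex--complex case already handled, again with strictly smaller potential since each $\delta_i$ is a proper subtree of $\delta$. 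Assembling these pieces gives $\alpha \trianglerighteq \gamma$ and $\beta \trianglerighteq \gamma$ in every case, completing the induction.
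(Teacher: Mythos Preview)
Your proof is correct and follows essentially the same strategy as the paper's: induction on the combined potential $\potential{\alpha}+\potential{\beta}$, with the same two cases (similar complex trees, and $R_k$ versus complex) and the same device of handling self-referencing productions by pushing the recursion onto the arguments, where the combined potential strictly drops. One minor remark: your final clause about a \emph{long} self-referencing $\delta$ is vacuous, since by construction the only self-referencing productions of $R_k$ are the short ones $X R_k$ and $S R_i R_{k-i}$ with $i\in\{0,k\}$; the justification you sketch for that clause would otherwise need more care (the induction parameter is $\potential{R_k}+\potential{\beta}$, not $\potential{\delta}+\potential{\beta}$), so it is safest simply to drop it.
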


\begin{proof}
        Induction over $n = \potential{\alpha} + \potential{\beta}$. Let $M =
        \MeshSet{\alpha}{\beta}$.  Clearly, it suffices to consider such
        $\alpha,\beta$ that $M \neq \emptyset$.
        
        Let us assume that both $\alpha = X \alpha_1 \ldots \alpha_m$ and $\beta
        = X \beta_1 \ldots \beta_m$. If $\alpha_i \bowtie \beta_i$ for all $i \in
        \{1,\ldots,m\}$, then $M$ consists of a single tree $\gamma = X \gamma_1
        \ldots \gamma_m$ for which $\alpha_i,\beta_i \trianglerighteq \gamma_i$.
        Evidently, our claim holds. Suppose that there exists an $i \in
        \{1,\ldots,m\}$ such that $\alpha_i \parallel \beta_i$. Since
        $\potential{\alpha_i} + \potential{\beta_i} < n$, we can apply the
        induction hypothesis to $\MeshSet{\alpha_i}{\beta_i}$.  The set
        $M' = \MeshSet{\alpha_i}{\beta_i}$ cannot be empty and so let $\delta_i$ be
        an arbitrary mesh in $M'$. We know that $\alpha_i,\beta_i
        \trianglerighteq \delta_i$. And so, if we consider an arbitrary $\gamma
        = X \gamma_i \ldots \gamma_m \in M$, we get $\alpha_i,\beta_i
        \trianglerighteq \gamma_i$ for all $i \in \{1,\ldots,m\}$, which implies
        our claim.

        What remains is to consider the case when either $\alpha = R_k$ and
        $\beta$ is complex or, symmetrically, $\beta = R_k$ and $\alpha$ is
        complex. Let us
        assume w.l.o.g.~the former case.
        From the definition, $\MeshSet{R_k}{\beta}$ depends
        on the union of $\MeshSet{\gamma}{\beta}$ for $\gamma \in R_k$. Clearly,
        $R_k$ rewrites to any of its productions. Let $\gamma \in R_k$ be a
        production referencing $R_k$. We have to consider two cases based on
        the structure of $\gamma$.
        \begin{enumerate}[(i)]
                \item   Let $\gamma = X R_k$. Then, $\potential{\gamma} =
                        \potential{R_k} + 1$ and so we cannot use the induction
                        hypothesis to $\MeshSet{\gamma}{\beta}$ directly. Note
                        however, that we can assume that $\beta = X \beta_1$,
                        since otherwise $\MeshSet{\gamma}{\beta}$ would be
                        empty. Therefore, we know that $\MeshSet{R_k}{\beta_1} \neq
                        \emptyset$ to which we can now use the induction hypothesis,
                        as $\potential{R_k} + \potential{\beta_1} < n$.
                        Immediately, we get that $R_k,\beta \trianglerighteq
                        \gamma$. 
                \item W.l.o.g.~let $\gamma = S R_k R_{0}$.  Then,
                        $\potential{\gamma} = 3 + \potential{R_k}$. Again, we
                        cannot directly use the induction hypothesis. Note
                        however, that we can assume that $\beta = S \beta_1
                        \beta_2$. And so we get $\potential{R_k} +
                        \potential{\beta_1} < n$ and $\potential{R_0} +
                        \potential{\beta_2} < n$. Using the induction hypothesis
                        to both parts we conclude that $R_k,\beta
                        \trianglerighteq \gamma$ in this case as well.
        \end{enumerate}

        To finish the proof we need to show that our claim holds for all $\gamma \in
        R_k$ which do not reference $R_k$. Indeed, any such production has
        necessarily smaller potential than $R_k$, and so, we can use the
        induction hypothesis directly to the resulting mesh set. Evidently, our
        claim holds.
\end{proof}

In other words, $\MeshSet{\alpha}{\beta}$ is in fact a set of \emph{meshes}, i.e.~trees generating a joint portion of $L(\alpha)$ and $L(\beta)$. Note, that along the lines of proving the above proposition, we have also showed that indeed $\MeshSet{\alpha}{\beta}$ terminates on all legal inputs, as the number of recursive calls cannot exceed $2 (\potential{\alpha} + \potential{\beta})$ -- in the worst case, every second recursive call decreases the total potential sum of its inputs.

\begin{prop}\label{prop-rewritingset-rewrites}
    Let $\alpha,\beta$ be two trees. Then, $\alpha \trianglerighteq
    \varphi_r$ and $\beta \trianglerighteq \varphi_l \varphi_r$ for arbitrary
    $\varphi_l \varphi_r \in \RewritingSet{\alpha}{\beta}$.
\end{prop}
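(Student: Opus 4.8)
The plan is to mirror the structure of the proof of Proposition~\ref{prop-meshset-rewrites}, doing induction on $n = \potential{\alpha} + \potential{\beta}$ and following the case analysis dictated by the \textsc{RewritingSet} pseudo-code. First I would dispose of the trivial cases: if $\beta = S$ or $\beta = K$ the rewriting set is empty, so there is nothing to prove; if $\beta = \mathcal{C}$ the only output is $\mathcal{C}\alpha$, and we must check $\alpha \trianglerighteq \alpha$ (immediate by reflexivity of $\trianglerighteq$) and $\mathcal{C} \trianglerighteq \mathcal{C}\alpha$, which holds since $\mathcal{C} \triangleright \mathcal{C}\mathcal{C}$ — wait, more directly $\mathcal{C} \to \mathcal{C}\mathcal{C}$ is a production and $\mathcal{C}\mathcal{C} \trianglerighteq \mathcal{C}\alpha$ follows from $\mathcal{C}\trianglerighteq\alpha$ applied argument-wise; in any case $\mathcal{C}$ rewrites to every complex tree with head $\mathcal{C}$, so $\beta = \mathcal{C} \trianglerighteq \mathcal{C}\alpha = \varphi_l\varphi_r$ is clear.

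Next I would handle the main structural case $\beta = X\beta_1\ldots\beta_m$, which splits into the three branches of the code according to the comparison of $\alpha$ and $\beta_m$. If $\alpha \trianglerighteq \beta_m$, the output is $X\beta_1\ldots\beta_m = \beta$ itself, viewed as $\varphi_l\varphi_r$ with $\varphi_l = X\beta_1\ldots\beta_{m-1}$ and $\varphi_r = \beta_m$; then $\beta \trianglerighteq \varphi_l\varphi_r$ trivially (reflexivity) and $\alpha \trianglerighteq \varphi_r = \beta_m$ by hypothesis. If $\beta_m \trianglerighteq \alpha$, the output is $X\beta_1\ldots\beta_{m-1}\alpha$, so $\varphi_r = \alpha$ gives $\alpha \trianglerighteq \varphi_r$ by reflexivity, and $\beta = X\beta_1\ldots\beta_m \trianglerighteq X\beta_1\ldots\beta_{m-1}\alpha = \varphi_l\varphi_r$ because $\beta_m \trianglerighteq \alpha$ and $\trianglerighteq$ propagates through a single argument. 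The interesting subcase is $\alpha \parallel \beta_m$: here each output is $X\beta_1\ldots\beta_{m-1}\mu$ for $\mu \in \MeshSet{\alpha}{\beta_m}$, so $\varphi_r = \mu$ and $\varphi_l = X\beta_1\ldots\beta_{m-1}$. By Proposition~\ref{prop-meshset-rewrites} we get $\alpha \trianglerighteq \mu$ and $\beta_m \trianglerighteq \mu$; the latter, propagated through the last argument, yields $\beta \trianglerighteq X\beta_1\ldots\beta_{m-1}\mu = \varphi_l\varphi_r$.

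The remaining case, and the one I expect to carry the real weight, is $\beta = R_k$, where $\RewritingSet{\alpha}{R_k} = \bigcup_{\gamma \in R_k}\RewritingSet{\alpha}{\gamma}$. Since $R_k \trianglerighteq \gamma$ for every production $\gamma$ of $R_k$, it suffices to show $\alpha \trianglerighteq \varphi_r$ and $\gamma \trianglerighteq \varphi_l\varphi_r$ for each $\varphi_l\varphi_r \in \RewritingSet{\alpha}{\gamma}$, and then compose with $R_k \trianglerighteq \gamma$ to get $R_k \trianglerighteq \varphi_l\varphi_r$. The obstacle is exactly the one met in Proposition~\ref{prop-meshset-rewrites}: self-referencing productions $\gamma$ (such as $X R_k$ or $S R_k R_0$) have potential $\geq \potential{R_k}$, so the induction hypothesis does not apply to $\RewritingSet{\alpha}{\gamma}$ directly. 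I would resolve this precisely as before — observe that such a $\gamma$ only contributes when $\beta$... rather, when the structure forces a shape match, so the recursion on $\gamma$ descends into a strictly smaller argument (e.g. $\RewritingSet{\alpha}{X R_k}$ reduces to a computation on $R_k$ in final-argument position where one compares $\alpha$ with $R_k$, and $\potential{\alpha}+\potential{R_k} < n$ since $\potential{R_k} < \potential{\gamma} \le \ldots$; for $S R_k R_0$ one lands on $\RewritingSet{\alpha}{R_0}$-type calls with $\potential{\alpha}+\potential{R_0} < n$). For non-self-referencing $\gamma \in R_k$, $\potential{\gamma} < \potential{R_k}$ so the induction hypothesis applies outright. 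Assembling the subcases gives the claim, and as a byproduct one sees that \textsc{RewritingSet} terminates on all legal inputs, the number of recursive calls being bounded in terms of $\potential{\alpha}+\potential{\beta}$.
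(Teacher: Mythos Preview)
Your case analysis is correct and covers exactly the right branches, but you have over-engineered the argument. The paper's proof is a \emph{direct} case analysis with no induction on potential at all: once the complex case $\beta = X\beta_1\ldots\beta_m$ is handled (using Proposition~\ref{prop-meshset-rewrites} for the non-rewritable subcase, exactly as you do), the case $\beta = R_k$ reduces to it immediately, because every production $\gamma\in R_k$ is either a primitive combinator (empty output) or complex, and in the complex case \textsc{RewritingSet} never recurses on itself---it only calls \textsc{MeshSet}. So there is no inductive loading to worry about, and the self-referencing productions cause no trouble whatsoever here.

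Your attempt to force this into the mould of Proposition~\ref{prop-meshset-rewrites} leads you into a genuine muddle in the $R_k$ paragraph: you write that ``$\potential{\alpha}+\potential{R_k} < n$ since $\potential{R_k} < \potential{\gamma}$'', but $n = \potential{\alpha}+\potential{\beta} = \potential{\alpha}+\potential{R_k}$ in this branch, so the inequality is false. Fortunately this does not matter, because the ``induction hypothesis'' you are trying to invoke is never actually needed---the comparison of $\alpha$ with the last argument of $\gamma$ lands in \textsc{MeshSet}, which is already dealt with by Proposition~\ref{prop-meshset-rewrites}. Drop the induction scaffolding entirely and the proof becomes both shorter and correct as written.
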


\begin{proof}
    We can assume that $\RewritingSet{\alpha}{\beta} \neq \emptyset$, as otherwise our claim trivially holds.
    Let $\varphi_l \varphi_r \in \RewritingSet{\alpha}{\beta}$. Based on the structure of $\beta$, we have to three cases to consider.
    \begin{enumerate}[(i)]
        \item If $\beta = \mathcal{C}$, then $\varphi_l \varphi_r = \mathcal{C}
                \alpha$. Clearly, $\alpha \trianglerighteq
                    \alpha$ and $\mathcal{C} \trianglerighteq \mathcal{C}\,
                    \alpha$.
        \item If $\beta = X \beta_1 \ldots \beta_m$, then we have again exactly three possibilities. Both cases when $\alpha \bowtie \beta_m$ are
                trivial, so let us assume that $\alpha \parallel \beta_m$. It
                follows that there exists such a $\gamma \in
                \MeshSet{\alpha}{\beta_m}$ that $\varphi_l \varphi_r = X \beta_1
                \ldots \beta_{m-1} \gamma$. Due to
                Proposition~\ref{prop-meshset-rewrites}, we know that
                $\alpha,\beta_m \trianglerighteq \gamma$ and so directly that
                $\alpha \trianglerighteq \varphi_r$ and 
                $\beta \trianglerighteq \varphi_l \varphi_r$.
        \item Finally, suppose that $\beta = R_n$. Then, there exists a production $\gamma \in
                R_n$ such that $\varphi_l \varphi_r \in \RewritingSet{\alpha}{\gamma}$.
                Note however, that in this case $\gamma = X \gamma_1 \ldots
                \gamma_m$ and so we can reduce this case to the already considered case above.
    \end{enumerate}
\end{proof}

Now we are ready to give the anticipated soundness theorem.

\begin{theorem}[Soundness]\label{the-correctness}
        If $x \in L(R_n)$, then $x$ reduces in $n$ steps.
\end{theorem}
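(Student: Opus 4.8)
The plan is to prove the statement by induction on $n$, relying on the recursive structure of \textsc{Reduction Grammar} and on Propositions~\ref{prop-K} and~\ref{prop-S}, which already tell us how a single normal-order step acts on terms generated by a \textsc{K-} or \textsc{S-Expansion}. The base case $n = 0$ is immediate: by construction $R_0$ generates precisely the redex-free terms, and these are exactly the terms requiring $0$ reduction steps, as argued informally right after the definition of $R_0$. For the inductive step, fix $n \geq 1$, assume the claim for all $R_m$ with $m < n$, and take $x \in L(R_n)$. Since $x$ is derived from the axiom $R_n$, it is generated by one of the productions listed in \texttt{reductionGrammar n}; I would split into cases according to which family that production belongs to.

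The short productions $S R_n$, $K R_n$, and $S R_i R_{n-i}$ for $i \in \{0,\ldots,n\}$ are handled directly. If $x \in L(S R_n)$ or $x \in L(K R_n)$, then $x = S\,x_1$ or $x = K\,x_1$ with $x_1 \in L(R_n)$; this term has no head redex, so its normal-order reduction is exactly the normal-order reduction of $x_1$, which by the (same-$n$) claim for $R_n$ — or rather by an inner induction on tree potential $\potential{\cdot}$ to avoid the self-reference — takes $n$ steps. For $x \in L(S R_i R_{n-i})$ we have $x = S\,x_1\,x_2$ with $x_1 \in L(R_i)$, $x_2 \in L(R_{n-i})$; again there is no head redex, and normal-order reduction normalizes $x_1$ first (in $i$ steps by the induction hypothesis) and then $x_2$ (in $n-i$ steps), for a total of exactly $n$. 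The production $K R_{n-1}\mathcal{C}$ gives $x = K\,x_1\,z$ with $x_1 \in L(R_{n-1})$; one head step sends $x$ to $x_1$, which by the induction hypothesis normalizes in $n-1$ further steps, giving $n$ in total.

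The two remaining families are the long productions, obtained as $\KExpansions{\alpha}$ and $\SExpansions{\alpha}$ for productions $\alpha \in R_{n-1}$. If $x \in L(\beta)$ for $\beta = K(X\alpha_1\ldots\alpha_k)\mathcal{C}\,\alpha_{k+1}\ldots\alpha_m \in \KExpansions{\alpha}$, then $x$ begins with a $K$-redex, so its first normal-order step is the head reduction, and by Proposition~\ref{prop-K} the reduct $y$ lies in $L(X\alpha_1\ldots\alpha_m) = L(\alpha)$; since $\alpha \in R_{n-1}$, the induction hypothesis gives that $y$ normalizes in $n-1$ steps, hence $x$ in $n$ steps. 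The \textsc{S-Expansion} case is analogous using Proposition~\ref{prop-S}: $x$ begins with an $S$-redex, so its first normal-order step is the head reduction, and the reduct $y$ lies in $L(X\alpha_1\ldots\alpha_k\,\varphi_r(\varphi_l\varphi_r)\alpha_{k+3}\ldots\alpha_m)$; here I would invoke Proposition~\ref{prop-rewritingset-rewrites}, which guarantees $\alpha_{k+1}\trianglerighteq\varphi_r$ and $\alpha_{k+2}\trianglerighteq\varphi_l\varphi_r$, together with the stated monotonicity $L(\delta')\subseteq L(\delta)$ whenever $\delta\trianglerighteq\delta'$, to conclude that $y \in L(X\alpha_1\ldots\alpha_m) = L(\alpha)$ with $\alpha \in R_{n-1}$; the induction hypothesis then finishes the case.

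The main obstacle I anticipate is the self-reference in the short productions $S R_n$ and $K R_n$: the claim for $L(R_n)$ is being used while proving the claim for $L(R_n)$. This is precisely what the tree-potential measure was introduced for, so I would restructure the induction to be on $\potential{\cdot}$ of the generating tree (or on the pair $(n,\potential{\cdot})$ lexicographically), using that $\potential{X\alpha_1} < \potential{X\alpha_1\ldots\alpha_m}$-style strict decrease when we peel off the head $S$ or $K$, so that the term $x_1$ inside $S\,x_1$ is generated by a tree of strictly smaller potential than $S R_n$. A secondary point of care is making precise the claim "$S\,x_1\,x_2$ normalizes in $i + (n-i)$ steps" — this needs the standardization-flavoured observation, already used informally in the paper, that normal-order reduction of a redex-free-headed application proceeds by fully normalizing the arguments left to right without ever creating a head redex; I would state this as a small lemma and appeal to it in the short-production cases.
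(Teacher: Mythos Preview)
Your overall case analysis matches the paper's proof exactly, including the use of Propositions~\ref{prop-K}, \ref{prop-S}, and~\ref{prop-rewritingset-rewrites}. The one place where your proposal does not go through is the handling of the self-referencing productions $S R_n$ and $K R_n$. You propose to induct on the tree potential $\pi(\cdot)$ of the generating tree, but this does not give a well-founded descent: if $x = S\,x_1$ is derived via $R_n \to S R_n$, then $x_1 \in L(R_n)$ and is itself derived via some production $\alpha' \in R_n$; nothing prevents $\alpha' = S R_n$ again, with the \emph{same} potential. Passing from $S R_n$ to the non-terminal $R_n$ does decrease potential, but unfolding $R_n$ to one of its productions can increase it right back, so there is no strict decrease along the chain of productions you actually encounter.

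The paper resolves this by inducting over pairs $(n,m)$ where $m$ is the length of a minimal derivation of $x$ from the axiom $R_n$: peeling off the leading $S$ or $K$ strictly shortens the remaining derivation, which handles all the short productions uniformly without any appeal to potential. Equivalently, inducting on $(n,|x|)$ lexicographically would work, since $|x_1| < |x|$ in $x = X\,x_1$ and $|x_i| < |x|$ in $x = S\,x_1\,x_2$; this may be closer to what you intended. Tree potential is simply the wrong tool here --- in the paper it is introduced for the termination arguments around \textsc{MeshSet} and \textsc{RewritingSet}, where self-referencing productions interact with \emph{other} trees and size alone fails, not for this theorem.
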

\begin{proof}
        Induction over pairs $(n,m)$ where $m$ denotes the length of a
        minimal, in terms of length, derivation $\Sigma$ of $x \in L(R_n)$.  Let
        $n = 0$ and so $x \in L(R_0)$. If $m = 1$, then $x \in \{S,K\}$ hence
        $x$ is already in normal form. Suppose that $m > 1$. Clearly, $x \not \in
        \{S,K\}$. Let $R_0 \to \alpha$ be the first production rule used in
        derivation $\Sigma$. Using the induction hypothesis to the reminder of
        the derivation, we know that $x$ does not contain any nested redexes.
        Moreover, $\alpha$ avoids any head redexes and so we get that $x$ is in
        normal form.  
        
        Let $n > 0$. We have to consider several cases based on the
        choice of the first production rule $R_n \to \alpha$ used in the
        derivation $\Sigma$.
        \begin{enumerate}[(i)]
            \item $\alpha = S R_n$ or $\alpha = K R_n$. Using the
                    induction hypothesis we know that $x = X y$ where $y$
                    reduces in $n$ steps. Clearly, so does $x$.
            \item $\alpha = S R_{n-i} R_i$ for some $i \in \{0,\ldots,n\}$.
                    Then, $x = S y z$ where $y \in L(R_{n-i})$ and $z \in L(R_i)$.
                    Note that both their derivations are in fact shorter than the
                    derivation of $x$ and thus applying the induction hypothesis
                    to both $y$ and $z$ we know that they reduce in $n-i$ and $i$ steps,
                    respectively. Following the normal-order reduction strategy,
                    we note that $y$ and $z$ and reduce sequentially in $x$.
                    Since $x$ does not contain a head redex itself, we reduce it
                    in total of $n$ reductions.
            \item $\alpha = K R_{n-1} \mathcal{C}$. Directly from the induction
                    hypothesis we know that $x = K y z$ where $y$ reduces in
                    $n-1$ steps. And so $x \to_w y$, implying that $x$ reduces
                    in $n$ steps.
            \item $\alpha = K(X \alpha_1 \ldots \alpha_k) \mathcal{C}
                    \alpha_{k+1} \ldots \alpha_m$. Let $x \in L(\alpha)$.
                    Clearly, $x$ has a head redex and so let $x \to_w y$. Using
                    Proposition~\ref{prop-K}, we know that $y \in L(X \alpha_1
                    \ldots \alpha_m)$. Moreover, by the construction of $R_n$
                    we get $\alpha \in \KExpansions{X \alpha_1 \ldots \alpha_m}$
                    and therefore $y \in L(R_{n-1})$. It follows that $y$
                    reduces in $n-1$ steps and so $x$ in $n$ steps.
            \item $\alpha = S(X \alpha_1 \ldots \alpha_k) \varphi_l \varphi_r
                    \alpha_{k+3} \ldots \alpha_m$. Let $x \in L(\alpha)$.
                    Clearly, $x$ has a head redex and so let $x \to_w y$.
                    Due to Proposition~\ref{prop-S} we get that $y \in L(X
                    \alpha_1 \ldots \alpha_k \varphi_r (\varphi_l\, \varphi_r) \alpha_{k+3} \ldots
                    \alpha_m)$. In order to show that $x$ reduces in $n$ steps
                    it suffices to show that $y \in L(R_{n-1})$. Let us consider
                     $\beta$ such that $\alpha \in \SExpansions{\beta}$. From the
                     structure of $\alpha$ we can rewrite
                     it as $\beta = X \alpha_1 \ldots \alpha_k \alpha_{k+1}
                     \alpha_{k+2} \ldots \alpha_{m}$. Moreover, from
                    Proposition~\ref{prop-rewritingset-rewrites} we know that
                    $\alpha_{k+1} \trianglerighteq \varphi_r$ and $\alpha_{k+2}
                    \trianglerighteq \varphi_l\, \varphi_r$. Clearly, $y \in
                    L(\beta)$, which finishes the proof.
        \end{enumerate}
\end{proof}

Combining the above result with the fact that each normalizing combinatory logic term reduces in a determined number of normal-order reduction steps, gives us the following corollary.
\begin{cor}\label{cor-Rn-diff-Rm}
        If $L(R_n) \cap L(R_m) \neq \emptyset$, then $n = m$.
\end{cor}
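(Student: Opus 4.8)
The plan is to derive this corollary directly from the Soundness theorem (Theorem~\ref{the-correctness}) together with the determinacy of normal-order reduction. Suppose $x \in L(R_n) \cap L(R_m)$. Since $x \in L(R_n)$, Theorem~\ref{the-correctness} tells us that $x$ reduces in exactly $n$ normal-order reduction steps. Likewise, since $x \in L(R_m)$, the same theorem gives that $x$ reduces in exactly $m$ steps. The key point is that ``reduces in $n$ steps'' refers to the \emph{normal-order} reduction sequence, which is deterministic: at each stage there is a unique leftmost-outermost redex (or none), so the length of the reduction sequence from $x$ to its normal form is a well-defined invariant of $x$. Hence $n$ and $m$ must both equal this invariant, giving $n = m$.

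The only thing worth spelling out carefully is why ``$x$ reduces in $n$ steps'' is an unambiguous statement about $x$, i.e.\ that a combinator cannot simultaneously reduce in $n$ and in $m \neq n$ normal-order steps. I would note that the normal-order strategy $\to_w$ is a function (partial function on non-normal-forms): given $x$ not in normal form, its leftmost-outermost redex is unique, so $x \to_w y$ determines $y$ uniquely. Therefore the maximal $\to_w$-sequence starting from $x$ is unique, and a normalizing $x$ has a single such sequence terminating at its (unique) normal form after a fixed number of steps. This is exactly the content of Curry and Feys's standardization theorem already invoked in the introduction, so I would simply cite that. Consequently the phrase ``reduces in $k$ steps'' picks out a single value of $k$ for each weakly normalizing $x$, and the corollary follows.

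I do not anticipate a real obstacle here; the statement is essentially a bookkeeping consequence of Theorem~\ref{the-correctness}. The one subtlety to phrase correctly is that Soundness is being used in the direction ``membership in $L(R_n)$ forces a specific reduction length,'' and then two such constraints on the same term are compared; there is no need for completeness or unambiguity of the grammars at this point. So the proof is a two-line argument: apply Theorem~\ref{the-correctness} to the common element twice and invoke determinacy of $\to_w$ to conclude $n = m$.
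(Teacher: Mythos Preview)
Your proposal is correct and matches the paper's own justification essentially verbatim: the paper states the corollary immediately after Theorem~\ref{the-correctness}, noting it follows by ``combining the above result with the fact that each normalizing combinatory logic term reduces in a determined number of normal-order reduction steps.'' Your additional remarks on the determinacy of $\to_w$ simply spell out what the paper leaves implicit.
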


\subsection{Completeness}\label{sec:completeness}
In this section we are interested in the completeness of \textsc{Reduction Grammar}. In other words, we show that every term normalizing in exactly $n$ steps is generated by $R_n$.

W start with some auxiliary lemmas showing the completeness of \textsc{MeshSet} and, in consequence, \textsc{RewritingSet}.

\begin{lemma}\label{lem-meshset-crutial}
        Let $\alpha,\beta$ be two non-rewritable trees.
        Let $x$ be a term. Then, $x \in
        L(\alpha) \cap L(\beta)$ if and only if there exists a mesh $\gamma
        \in \MeshSet{\alpha}{\beta}$ such that $x \in L(\gamma)$.
\end{lemma}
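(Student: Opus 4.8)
The plan is to prove the two implications separately; the backward one is immediate and the forward one carries the weight. For $\Leftarrow$, if $\gamma \in \MeshSet{\alpha}{\beta}$ and $x \in L(\gamma)$, then Proposition~\ref{prop-meshset-rewrites} gives $\alpha \trianglerighteq \gamma$ and $\beta \trianglerighteq \gamma$, and since $\trianglerighteq$ is language-decreasing ($\mu \trianglerighteq \nu$ implies $L(\nu) \subseteq L(\mu)$) we obtain $x \in L(\alpha) \cap L(\beta)$.

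For $\Rightarrow$ I would induct on $n = \potential{\alpha} + \potential{\beta}$, following the same case analysis as the proof of Proposition~\ref{prop-meshset-rewrites}. Fix $x \in L(\alpha) \cap L(\beta)$. First I would dispose of the configurations in which \textsc{MeshSet} returns $\emptyset$ outright (both arguments primitive or reduction grammars, or one primitive and the other complex, etc.): in each such case the hypothesis $\alpha \parallel \beta$, together with the length/head rigidity of normal trees --- or Corollary~\ref{cor-Rn-diff-Rm} when $\alpha = R_k$ and $\beta = R_j$ with $k \ne j$ --- forces $L(\alpha) \cap L(\beta) = \emptyset$, contradicting the existence of $x$. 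Two genuine cases then remain. If $\alpha = X\alpha_1\ldots\alpha_m$ and $\beta = X'\beta_1\ldots\beta_{m'}$ are both complex, then because normal trees preserve length we have $x = Xx_1\ldots x_m = X'x_1\ldots x_{m'}$, hence $X = X'$ and $m = m'$ (so $\alpha$ and $\beta$ are similar) and $x_i \in L(\alpha_i) \cap L(\beta_i)$ for all $i$. For each $i$ I select a position-$i$ mesh $\gamma_i$ emitted by \textsc{Mesh} with $x_i \in L(\gamma_i)$: take $\gamma_i = \beta_i$ if $\alpha_i \trianglerighteq \beta_i$, take $\gamma_i = \alpha_i$ if only $\beta_i \trianglerighteq \alpha_i$, and invoke the induction hypothesis on $\MeshSet{\alpha_i}{\beta_i}$ (legitimate, since $\alpha_i \parallel \beta_i$ and $\potential{\alpha_i}+\potential{\beta_i}<n$) in the remaining case. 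Recombining through \textsc{Cartesian}, the tree $\gamma = X\gamma_1\ldots\gamma_m$ lies in $\MeshSet{\alpha}{\beta}$ and satisfies $x \in L(\gamma)$.

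The remaining case is $\alpha = R_k$ with $\beta$ complex (the mirror case is symmetric). Since $x \in L(R_k)$, a derivation of $x$ begins with some production $R_k \to \gamma$, so $x \in L(\gamma)$, and $\gamma$ is complex ($x$ is, being in $L(\beta)$, and a production equal to a primitive combinator cannot generate a complex term). Now either $\gamma \parallel \beta$, or $\gamma \bowtie \beta$; in the latter case necessarily $\beta \trianglerighteq \gamma$ (since $\gamma \trianglerighteq \beta$ would give $R_k \trianglerighteq \beta$, against $\alpha \parallel \beta$), whence $\gamma$ is similar to $\beta$, every position-$i$ mesh produced by \textsc{Mesh} is a singleton whose language contains $x_i$, and some $\delta \in \MeshSet{\gamma}{\beta} \subseteq \MeshSet{R_k}{\beta}$ has $x \in L(\delta)$. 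Assume then $\gamma \parallel \beta$. If $\gamma$ does not reference $R_k$ then $\potential{\gamma} < \potential{R_k}$, so $\potential{\gamma} + \potential{\beta} < n$ and the induction hypothesis on $\MeshSet{\gamma}{\beta} \subseteq \MeshSet{R_k}{\beta}$ concludes the case. Otherwise $\gamma$ is self-referencing --- $\gamma = X R_k$ or $\gamma = S R_k R_0$, up to the symmetric $S$-variant --- with $\potential{\gamma} \geq \potential{R_k}+1$, so $\MeshSet{\gamma}{\beta}$ is too large for a direct appeal to the hypothesis. Instead, since $x \in L(\gamma)$ and $x \in L(\beta)$ with $\beta$ complex, $\beta$ must share the head shape of $\gamma$, and unfolding \textsc{MeshSet}$(\gamma,\beta)$ one level reduces each position either to a singleton (when the corresponding arguments are rewritable, where only a membership fact already in hand is needed) or to a \textsc{MeshSet} call on arguments --- $\MeshSet{R_k}{\beta_i}$ or $\MeshSet{R_0}{\beta_j}$ --- of total potential strictly below $n$, to which the induction hypothesis applies; \textsc{Cartesian} then reassembles the chosen argument meshes into a mesh of $\gamma$ and $\beta$, hence of $R_k$ and $\beta$, containing $x$.

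The step I expect to be the main obstacle is this final one: for a self-referencing production the induction quantity does not decrease in a single move, so --- exactly as in the proof of Proposition~\ref{prop-meshset-rewrites} --- one must unfold \textsc{MeshSet} a level before the potential sum drops, all the while transporting the membership witness $x \in L(\cdot)$ through the unfolding and verifying that \textsc{Mesh} and \textsc{Cartesian} recombine the argument meshes into a tree actually emitted by \textsc{MeshSet}. The surrounding degenerate cases --- where \textsc{MeshSet} is empty and one must instead exhibit $L(\alpha)\cap L(\beta) = \emptyset$ using $\alpha \parallel \beta$, the length/head rigidity of normal trees, and Corollary~\ref{cor-Rn-diff-Rm} --- are routine by comparison.
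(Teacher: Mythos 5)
Your proof is correct, and the case decomposition (both trees complex versus one a grammar non-terminal and the other complex, with the degenerate empty-\textsc{MeshSet} configurations dismissed via disjointness of languages) matches the paper's. The genuine difference is the induction measure. You induct on $\potential{\alpha} + \potential{\beta}$, which is why you must treat self-referencing productions of $R_k$ as a special obstacle: for $\gamma = X R_k$ or $\gamma = S R_k R_0$ the potential does not drop when $R_k$ is unfolded to $\gamma$, so you are forced to peel one more level --- down to $\MeshSet{R_k}{\beta_1}$ or $\MeshSet{R_0}{\beta_2}$ --- before the measure decreases, exactly mirroring the gymnastics in the proof of Proposition~\ref{prop-meshset-rewrites}. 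The paper instead inducts on the size $|x|$ of the witness term. Under that measure the grammar case is dispatched in one line: pick any production $\gamma \in R_k$ with $x \in L(\gamma)$ (necessarily complex), and reduce to the complex--complex case with the \emph{same} witness $x$; the measure only needs to drop when that case descends into the arguments $x_i$, where $|x_i| < |x|$ automatically, regardless of whether $\gamma$ references $R_k$. So both arguments are sound, but the paper's choice of measure makes the step you flag as the main obstacle disappear entirely, at the small cost of a "the size does not decrease here, but watch what happens next" remark; your version is self-contained but longer, and its correctness hinges on the one-level unfolding you carry out explicitly.
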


\begin{proof}
    It suffices to show the necessary part, the sufficiency is clear from
    Proposition~\ref{prop-meshset-rewrites}. We show this result using
    induction over the size $|x|$ of $x$. Let $x \in L(\alpha) \cap L(\beta)$.
    Let us start with noticing that $|\alpha| + |\beta| > 0$. Moreover, there
    are only two cases where $x \in L(\alpha) \cap L(\beta)$, i.e.~when either $\alpha = X \alpha_1 \ldots \alpha_m$ and $\beta = X \beta_1
    \ldots \beta_m$ or when exactly one of them is equal to some $R_n$ and the other
    is complex. And so, let us consider these cases separately.
    \begin{enumerate}[(i)]
        \item Suppose that $\alpha = X \alpha_1 \ldots \alpha_m$ and $\beta = X \beta_1
                \ldots \beta_m$. It follows that we can rewrite $x$ as $X x_1
                \ldots x_m$ such that $x_i \in L(\alpha_i) \cap L(\beta_i)$.
                Clearly, if all $\alpha_i \bowtie \beta_i$, then there exists a mesh
                $\gamma$ such that $x \in L(\gamma)$. Let us assume that some
                $\alpha_i$ and $\beta_i$ are non-rewritable. Then, using
                the induction hypothesis we find a mesh $\gamma_i \in
                \MeshSet{\alpha_i}{\beta_i}$ such that $x_i \in L(\gamma_i)$.
                Immediately, we get that there exists a mesh in $\MeshSet{\alpha}{\beta}$
                which generates $x$.
        \item Let us assume w.l.o.g.~that $\alpha = R_n$ and $\beta = X \beta_1
                \ldots \beta_m$. Since $x \in L(R_n)$, there must be such a
                production $\gamma \in R_n$ that $x \in L(\gamma)$. Although the
                size of $x$ does not decrease, note that we can reduce this case
                to the one considered above since both $\gamma$ and $\beta$ are
                complex. Clearly, it follows that we can find a suiting
                mesh $\delta \in \MeshSet{\gamma}{\beta}$ such that $x \in
                L(\delta)$. Immediately, we get $\delta \in
                \MeshSet{\alpha}{\beta}$ which finishes the proof.
    \end{enumerate}
\end{proof}

\begin{lemma}\label{lem-rewritingset-crutial}
    Let $\alpha,\beta$ be two trees. Let $x, yx$ be two terms. Then,
    $x \in L(\alpha)$ and $y x \in L(\beta)$ if and only if there exists
    such a $\varphi_l \varphi_r \in \RewritingSet{\alpha}{\beta}$
    that $x \in L(\varphi_r)$ and $y x \in L(\varphi_l \varphi_r)$.
\end{lemma}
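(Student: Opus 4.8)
The plan is to mirror the proof of Lemma~\ref{lem-meshset-crutial}: prove the non-trivial (necessary) direction by a case analysis on the structure of $\beta$, following the case split in the \textsc{RewritingSet} pseudo-code, while the sufficient direction follows immediately from Proposition~\ref{prop-rewritingset-rewrites} together with the basic property that $\gamma \trianglerighteq \delta$ implies $L(\delta) \subseteq L(\gamma)$. So assume $x \in L(\alpha)$ and $yx \in L(\beta)$; I must exhibit a pair $\varphi_l \varphi_r \in \RewritingSet{\alpha}{\beta}$ with $x \in L(\varphi_r)$ and $yx \in L(\varphi_l\,\varphi_r)$. First note $\beta$ cannot be $S$ or $K$, since $yx$ is an application and hence of positive size, so it cannot lie in $L(S)$ or $L(K)$; this rules out the two cases where \textsc{RewritingSet} returns the empty set.

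The remaining cases follow the pseudo-code. If $\beta = \mathcal{C}$, then since $yx$ is a term of combinatory logic we have $yx \in L(\mathcal{C})$ trivially, and $\RewritingSet{\alpha}{\mathcal{C}} = \{\mathcal{C}\,\alpha\}$; taking $\varphi_l = \mathcal{C}$ and $\varphi_r = \alpha$ works because $x \in L(\alpha)$ by assumption and $yx \in L(\mathcal{C}\,\alpha)$ follows from $x \in L(\alpha)$ and $y \in L(\mathcal{C})$. If $\beta = R_k$, then $yx \in L(R_k)$ forces $yx \in L(\gamma)$ for some production $\gamma \in R_k$; every such $\gamma$ is complex (it starts with a combinator and, since $yx$ is an application, has positive length), so this reduces to the complex case below applied to $\gamma$, and the resulting pair lies in $\RewritingSet{\alpha}{\gamma} \subseteq \RewritingSet{\alpha}{R_k}$. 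The heart of the argument is the case $\beta = X \beta_1 \ldots \beta_m$ with $m \geq 1$. Here $yx$ is complex of length $m$, say $yx = X w_1 \ldots w_m$ with $w_i \in L(\beta_i)$. Since $yx = X w_1 \ldots w_{m-1} w_m$ and also $yx = (X w_1 \ldots w_{m-1})\, w_m$, matching against $yx = y\,x$ gives $y = X w_1 \ldots w_{m-1}$ and $x = w_m$. Thus $x = w_m \in L(\beta_m)$, and combined with the hypothesis $x \in L(\alpha)$ we get $x \in L(\alpha) \cap L(\beta_m)$. Now split on whether $\alpha \bowtie \beta_m$: if $\alpha \trianglerighteq \beta_m$ then $L(\beta_m) \subseteq L(\alpha)$ already and $\RewritingSet{\alpha}{\beta}$ contains $X\beta_1 \ldots \beta_m$, so $\varphi_r = \beta_m$, $\varphi_l = X\beta_1\ldots\beta_{m-1}$ works; if $\beta_m \trianglerighteq \alpha$ then $L(\alpha) \subseteq L(\beta_m)$ and $\RewritingSet{\alpha}{\beta}$ contains $X\beta_1\ldots\beta_{m-1}\alpha$, so $\varphi_r = \alpha$ works; and if $\alpha \parallel \beta_m$, then by Lemma~\ref{lem-meshset-crutial} applied to the non-rewritable pair $\alpha,\beta_m$ and the term $x = w_m$, there is a mesh $\gamma \in \MeshSet{\alpha}{\beta_m}$ with $x \in L(\gamma)$, and $\RewritingSet{\alpha}{\beta}$ contains $X\beta_1\ldots\beta_{m-1}\gamma$, so we set $\varphi_r = \gamma$, $\varphi_l = X\beta_1\ldots\beta_{m-1}$. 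In each subcase $yx = y\,x = (X w_1 \ldots w_{m-1})\, x = X w_1 \ldots w_{m-1} \varphi_r' $ where $\varphi_r'$ ranges over $\beta_m$, $\alpha$, or $\gamma$ and in all cases $w_i \in L(\beta_i)$, $x \in L(\varphi_r)$, so $yx \in L(\varphi_l\,\varphi_r)$, completing the case.

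The main obstacle — exactly as in Lemma~\ref{lem-meshset-crutial} — is that the $\beta = R_k$ reduction to the complex case does not decrease $|x|$ (nor any size parameter of $\beta$ in an obvious way), so one cannot blindly induct on term size or on $\potential{\beta}$. The clean way around this is to not set up an induction on the statement itself at all, but to observe that the only genuine recursion in \textsc{RewritingSet} that does the work is the call into \textsc{MeshSet} in the $\alpha \parallel \beta_m$ subcase, and that is handled wholesale by the already-proved Lemma~\ref{lem-meshset-crutial}; the $R_k$ case is a single one-step unfolding (a production of $R_k$ is always complex, so it lands directly in the complex case, not back in the $R_k$ case). Care is also needed to justify that the \textsc{Cartesian} subroutine produces exactly the pairs $X\beta_1\ldots\beta_{m-1}\gamma$ for $\gamma \in \MeshSet{\alpha}{\beta_m}$, matching the informal description of \textsc{RewritingSet}, and to keep straight that a pair $\varphi_l\varphi_r$ is a single tree of the form $\mathrm{App}\,\varphi_l\,\varphi_r$ whose final argument plays the role of $\varphi_r$ — a notational subtlety but one the $S$-Expansion machinery depends on.
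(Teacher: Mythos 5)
Your proposal is correct and follows essentially the same route as the paper's proof: sufficiency from Proposition~\ref{prop-rewritingset-rewrites}, and for necessity a case analysis on $\beta$ ($\mathcal{C}$; complex with the three subcases $\alpha\trianglerighteq\beta_m$, $\beta_m\trianglerighteq\alpha$, $\alpha\parallel\beta_m$ via Lemma~\ref{lem-meshset-crutial}; and $R_k$ unfolded once to a complex production). Your extra remarks --- that $\beta\neq S,K$ because $yx$ has positive size, that $x$ must be the last argument $w_m$ by left-association, and that no induction is needed since the $R_k$ case unfolds in one step --- only make explicit what the paper leaves implicit.
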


\begin{proof}
        Due to Proposition~\ref{prop-rewritingset-rewrites} the sufficiency part
        is clear. What remains is to show the necessary part. Let $x \in
        L(\alpha)$ and $y x \in L(\beta)$. Consider the structure of
        $\beta$. If $\beta = \mathcal{C}$, then $\mathcal{C} \alpha \in
        \RewritingSet{\alpha}{\beta}$ and so $\varphi_l = \mathcal{C}, \varphi_r
        = \mathcal{\alpha}$.  Clearly, our claim holds. Now, consider the case
        when $\beta = X \beta_1 \ldots \beta_m$. Based on the rewritability of
        $\alpha$ and $\beta_m$ we distinguish three subcases.
        \begin{enumerate}[(i)]
            \item If $\alpha \trianglerighteq \beta_m$, then $X \beta_1 \ldots
                    \beta_m \in \RewritingSet{\alpha}{\beta}$. Since $y x 
                    \in L(\beta)$, we get $x \in L(\beta_m)$ and in
                     consequence $x \in L(\varphi_r)$.
            \item If $\beta_m \trianglerighteq \alpha$, then $X \beta_1 \ldots
                    \beta_{m-1} \alpha \in \RewritingSet{\alpha}{\beta}$. Since
                    $\beta_m \trianglerighteq \alpha$, we know that $L(\alpha)
                    \subseteq L(\beta_m)$ and so $y x \in 
                    L(X \beta_1 \ldots \beta_{m-1} \alpha)$.
            \item If $\alpha \parallel \beta_m$, then we know that $x \in
                    L(\alpha) \cap L(\beta_m)$. If not, then $y x$ could not be
                    a term of $L(\beta)$. And so, using Lemma~\ref{lem-meshset-crutial}
                    we find a
                    mesh $\gamma \in \MeshSet{\alpha}{\beta_m}$ such that $x \in
                    L(\gamma)$. We know that $X \beta_1 \ldots \beta_{m-1}
                    \gamma \in \RewritingSet{\alpha}{\beta}$. Clearly, it is the
                    tree we were looking for.
        \end{enumerate}

        It remains to consider the case when $\beta = R_k$. Note however, that
        it can be reduced to the case when $\beta = X \beta_1 \ldots \beta_m$.
        Indeed, since $x \in L(R_k)$, then there exists a production $\gamma \in
        R_k$ such that $x \in L(\gamma)$. From the previous arguments we
        know that we can find a tree satisfying our claim.
\end{proof}

Using the above completeness results for \textsc{MeshSet} and \textsc{RewritingSet}, we are ready to give the anticipated completeness result of ${\{R_n\}}_{n \in \mathbb{N}}$.

\begin{theorem}[Completeness]\label{the-completness}
    If $x$ reduces in $n$ steps, then $x \in L(R_n)$.
\end{theorem}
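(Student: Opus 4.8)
\emph{Proof plan.} The plan is to argue by induction on the pair $(n,|x|)$ ordered lexicographically, where $|x|$ is the size (number of applications) of $x$; the $n=0$ base case, namely that normal forms land in $L(R_0)$, will fall out of the very same scheme, since the smallest element $(0,0)$ forces $x\in\{S,K\}$. Everything splits on whether the normal-order redex of $x$, if any, is a \emph{head} redex, i.e.~whether $x$ itself is the leftmost-outermost subterm containing a redex.

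Suppose first that $x$ has a head redex; then $n\geq 1$ and $x\to_w y$ by contracting it, with $y$ reducing in exactly $n-1$ steps, so the induction hypothesis (the first coordinate drops) gives $y\in L(\alpha)$ for some production $\alpha$ of $R_{n-1}$. If the head redex is a $K$-redex, write $x=K\,a\,b\,x_3\ldots x_m$ with $m\geq 2$ and $a=X a_1\ldots a_j$, so $y=X a_1\ldots a_j\,x_3\ldots x_m$. When $m=2$ we have $y=a\in L(R_{n-1})$ and $x\in L(K\,R_{n-1}\,\mathcal{C})$, a production of $R_n$; when $m\geq 3$ the tree $\alpha$ is complex, $\alpha=X\alpha_1\ldots\alpha_{j+m-2}$ with $a_i\in L(\alpha_i)$ and the trailing $\alpha$'s matching $x_3,\ldots,x_m$, and taking $k=j$ in $\KExpansions{\alpha}$ (legal because $m\geq 3$) yields a rule of $R_n$ whose language contains $x$, using $b\in L(\mathcal{C})$. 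If the head redex is an $S$-redex, write $x=S\,a\,b\,c\,x_4\ldots x_m$ with $m\geq 3$ and $a=X a_1\ldots a_j$, so $y=X a_1\ldots a_j\,c\,(bc)\,x_4\ldots x_m$ and $\alpha=X\alpha_1\ldots\alpha_{j+m-1}$ with $c\in L(\alpha_{j+1})$ and $bc\in L(\alpha_{j+2})$. Lemma~\ref{lem-rewritingset-crutial} then supplies $\varphi_l\varphi_r\in\RewritingSet{\alpha_{j+1}}{\alpha_{j+2}}$ with $c\in L(\varphi_r)$ and $bc\in L(\varphi_l\varphi_r)$; since the root of $\varphi_l\varphi_r$ is an application, the latter forces $b\in L(\varphi_l)$ and $c\in L(\varphi_r)$, and taking $k=j$ in $\SExpansions{\alpha}$ gives the production $S (X\alpha_1\ldots\alpha_j) \varphi_l \varphi_r \alpha_{j+3}\ldots\alpha_{j+m-1}$ of $R_n$ generating $x$.

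Suppose instead that $x$ has no head redex, and write $x=X x_1\ldots x_m$. If $m=0$ then $x\in\{S,K\}$ is normal, $n=0$, and $x\in L(R_0)$ directly. If $m=1$ then all $n$ reduction steps of $x$ take place inside $x_1$, so $x_1$ reduces in $n$ steps; since $|x_1|<|x|$ the induction hypothesis gives $x_1\in L(R_n)$, whence $x=X x_1\in L(X R_n)\subseteq L(R_n)$ via the short production $X R_n$. If $m=2$ then $X=S$ (otherwise there is a head $K$-redex), normal-order reduction normalizes $x_1$ and then $x_2$ (both normalize because $x$ does), so $x_1$ reduces in $n_1$ and $x_2$ in $n_2$ steps with $n_1+n_2=n$; for each $i$, either $n_i<n$, and the hypothesis on the first coordinate gives $x_i\in L(R_{n_i})$, or $n_i=n$, and then $|x_i|<|x|$ and the hypothesis on the second coordinate gives $x_i\in L(R_{n_i})=L(R_n)$. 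In all cases $x=S x_1 x_2\in L(S R_{n_1} R_{n_2})$, and $S R_{n_1} R_{n_2}=S R_{n-n_2} R_{n_2}$ is one of the productions $S R_{n-i} R_i$ of $R_n$.

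The step I expect to be the main obstacle is the headless case: because the arguments of $x$ may still require all $n$ reductions, a plain induction on $n$ does not close, which is exactly why the term size must be carried along as a secondary measure; one must also check carefully that the a priori productions $S R_n$, $K R_n$, $S R_{n-i} R_i$ and $K R_{n-1}\mathcal{C}$, together with the $\textsc{K-}$ and $\textsc{S-Expansions}$ of the productions of $R_{n-1}$, genuinely exhaust every term reducing in $n$ steps. The $S$-redex subcase of the headed case is the other sensitive point, hinging entirely on the completeness of \textsc{RewritingSet} (Lemma~\ref{lem-rewritingset-crutial}) and on reading off the unique factorisation $b\in L(\varphi_l)$, $c\in L(\varphi_r)$ from $bc\in L(\varphi_l\varphi_r)$.
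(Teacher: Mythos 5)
Your proof is correct and follows essentially the same route as the paper's: induction on the lexicographic pair $(n,|x|)$, the a priori short productions for the headless cases, $K R_{n-1}\mathcal{C}$ and \textsc{K-Expansions} for $K$-redexes, and Lemma~\ref{lem-rewritingset-crutial} plus \textsc{S-Expansions} for $S$-redexes. The only (harmless) differences are organizational — you split on head-redex versus headless rather than on term shape, absorb the $n=0$ base case into the same scheme, and spell out the factorization $b\in L(\varphi_l)$, $c\in L(\varphi_r)$ that the paper leaves implicit.
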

\Needspace{3\baselineskip}

\begin{proof}
    Induction over pairs $(n,s)$ where $s$ denotes the size of $x$.
    The base case $n = 0$ is clear due to the completeness of
    $R_0$. Let $n > 0$. 
    
    Let us start with considering short terms.
    Let $x = X y$ be a term of size $s$. Since $x$ has no head redex, $y$ must
    reduce in $n$ steps as well. Now, we can apply the induction hypothesis to
    $y$ and deduce that $y \in L(R_n)$. It follows that $x \in L(X R_n)$.
    Clearly, $X R_n$ is a production of $R_n$ and so $x \in L(R_n)$. Now, assume
    that $x = S y z$. Since $x$ reduces in $n$ steps and does not contain a head
    redex, there exists such an $i \in \{0,\ldots,n\}$ that $y$ reduces in $i$
    steps and $z$ reduces in $n-i$ steps. Applying the induction hypothesis to
    both $y$ and $z$, we get that $y \in L(R_i)$ whereas $z \in L(R_{n-i})$.
    Immediately, we get that $x \in L(R_n)$ as $S R_i R_{n-i} \in R_n$.
   
    What remains is to consider long terms. Let $x = K x_1 x_2$.
    Note that $x_1$ must reduce in $n-1$ steps, as $x \to_w x_1$. And so, from
    the induction hypothesis we get that $x_1 \in L(R_{n-1})$. Now we have
    $x \in L(K R_{n-1} \mathcal{C})$ and hence $x \in L(R_n)$ as $K R_{n-1}
    \mathcal{C}$ is a production of $R_n$.
    
    Now, let $x = K x_1 \ldots x_m$ for $m
    \geq 3$. Since $x$ has a head redex, we know that $x \to_w y = x_1 x_3
    \ldots x_m$, which itself reduces in $n-1$ steps. Let us rewrite $y$ as
	$X y_1 \ldots y_k x_3 \ldots x_m$ where $x_1 = X y_1 \ldots y_k$. We know that
    there exists a production $\alpha \in R_{n-1}$ such that $y \in L(\alpha)$.
    Let $\alpha = X \overline{\alpha_1} \ldots \overline{\alpha_k} \alpha_3 \ldots
    \alpha_m$. Clearly, there exists a $\beta = K (X \overline{\alpha_1} \ldots
    \overline{\alpha_k}) \mathcal{C} \alpha_3 \ldots \alpha_m \in
    \KExpansions{\alpha}$. We claim that $x \in L(\beta)$. Indeed, $y \in
    L(\alpha)$ implies that $y_i \in L(\overline{\alpha_i})$ and $x_j \in
    L(\alpha_j)$ for any $i$ and $j$ in proper ranges. Since $x_2 \in L(\mathcal{C})$, we conclude that $x \in L(\beta)$ and hence $x \in L(R_n)$.

    Let $x = S x_1 \ldots x_m$ for $m \geq 3$. Since $x$ has a head redex $x \to_w y = x_1 x_3 (x_2 x_3) x_4 \ldots x_m$ which reduces in $n-1$
    steps. Again, let us rewrite $y$ as $X y_1 \ldots
    y_k x_3 (x_2 x_3) x_4 \ldots x_m$ where $x_1 = X y_1 \ldots y_k$. Now, since $y \in L(R_{n-1})$, there exists a production $\alpha = X \overline{\alpha_1} 
    \ldots \overline{\alpha_k} \alpha_3 \gamma
    \alpha_4 \ldots \alpha_m \in R_{n-1}$ such that $y \in L(\alpha)$. We claim
    that there must be a production $\beta \in \SExpansions{\alpha}$ such that
    $x \in L(R_n)$. If so, the proof would be complete. Notice that $x_3 \in
    L(\alpha_3)$ and $x_2 x_3 \in L(\gamma)$. Using Lemma~\ref{lem-rewritingset-crutial}
    we know that there exists a tree $\varphi_l \varphi_r \in
    \RewritingSet{\alpha_3}{\gamma}$ such that $x_3 \in L(\varphi_r)$ and $(x_2
    x_3) \in L(\varphi_l \varphi_r)$. And so $y \in L(X \overline{\alpha_1} 
    \ldots \overline{\alpha_k} \varphi_r (\varphi_l \varphi_r)
    \alpha_4 \ldots \alpha_m)$. Moreover, due to the fact that $\varphi_l \varphi_r \in
    \RewritingSet{\alpha_3}{\gamma}$, we know that the tree $\beta = S (X \overline{\alpha_1} 
    \ldots \overline{\alpha_k}) \varphi_l \varphi_r
    \alpha_4 \ldots \alpha_m \in \SExpansions{\alpha}$ and so also $\beta \in
    R_n$. Since $x_2 \in L(\varphi_l)$, we get that $x \in L(\beta)$.
\end{proof}

\subsection{Unambiguity}\label{sec:unambiguity}

In this section we show that reduction grammars are in fact \emph{unambiguous}, i.e.~every term $x \in L(R_n)$ has exactly one derivation. Due to the mutual recursive nature of \textsc{MeshSet}, \textsc{RewritingSet} and \textsc{ReductionGrammar}, we split the proof into two separate parts. In the following lemma, we show that \textsc{MeshSet} returns unambiguous meshes under the assumption that $R_0,\ldots,R_n$ up to some $n$ are  themselves unambiguous. In the corresponding theorem we use inductive reasoning which supplies the aforementioned assumption and thus, as a consequence, allows us to prove the main result.

\begin{lemma}\label{lem-meshset-disjoint}
        Let $\alpha, \beta$ be two trees such that $\gamma,\overline{\gamma} \in
        \MeshSet{\alpha}{\beta}$ where in addition $\rank{\alpha},\rank{\beta}
        \leq r + 1$.  If $R_0,\ldots,R_r$ are unambiguous and $L(\gamma) \cap
        L(\overline{\gamma}) \neq \emptyset$, then $\gamma = \overline{\gamma}$.
\end{lemma}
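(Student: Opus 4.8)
The plan is to run an induction on $n = \potential{\alpha}+\potential{\beta}$, exactly in the style of Propositions~\ref{prop-meshset-rewrites} and~\ref{prop-rewritingset-rewrites}, with the case split dictated by the definition of \textsc{MeshSet}. The parameter $r$, and the standing hypothesis that $R_0,\ldots,R_r$ are unambiguous, will stay fixed throughout the induction: every recursive invocation of \textsc{MeshSet} is made either on a pair of argument subtrees of $\alpha,\beta$, or on a pair $(\delta,\beta)$ where $\delta$ is a production of some $R_k$ with $\rank{R_k}=k+1\leq\rank{\alpha}\leq r+1$, so the ranks of all arguments encountered remain bounded by $r+1$. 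If $\MeshSet{\alpha}{\beta}$ has at most one element --- in particular when $\potential{\alpha}+\potential{\beta}=0$, when one of $\alpha,\beta$ is $\mathcal{C}$, or when both are reduction grammars --- the claim holds trivially, so I only need to handle the two cases in which \textsc{MeshSet} genuinely branches, and the first of these is the base case.

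In the first case $\alpha = X\alpha_1\ldots\alpha_m$ and $\beta = X\beta_1\ldots\beta_m$ are similar and complex, so $\gamma = X\gamma_1\ldots\gamma_m$ and $\overline{\gamma} = X\overline{\gamma_1}\ldots\overline{\gamma_m}$ with each pair $\gamma_i,\overline{\gamma_i}$ produced by $\textsc{Mesh}(\alpha_i,\beta_i)$. I would pick a term $x = Xx_1\ldots x_m \in L(\gamma)\cap L(\overline{\gamma})$ --- using that normal trees preserve length --- so that $x_i \in L(\gamma_i)\cap L(\overline{\gamma_i})$ for every $i$. When $\alpha_i \bowtie \beta_i$, the call $\textsc{Mesh}(\alpha_i,\beta_i)$ returns a singleton and $\gamma_i = \overline{\gamma_i}$ outright; when $\alpha_i \parallel \beta_i$, both $\gamma_i,\overline{\gamma_i}$ lie in $\MeshSet{\alpha_i}{\beta_i}$, whose argument potentials sum to strictly less than $n$ and whose ranks are $\leq r+1$, so the induction hypothesis gives $\gamma_i = \overline{\gamma_i}$. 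Hence $\gamma = \overline{\gamma}$.

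The interesting case, and the one I expect to be the main obstacle, is when (without loss of generality) $\alpha = R_k$ and $\beta = X\beta_1\ldots\beta_m$ is complex, so $\gamma \in \MeshSet{\delta}{\beta}$ and $\overline{\gamma}\in\MeshSet{\overline{\delta}}{\beta}$ for some productions $\delta,\overline{\delta}$ of $R_k$. Fixing $x \in L(\gamma)\cap L(\overline{\gamma})$ and using Proposition~\ref{prop-meshset-rewrites} --- so $\delta\trianglerighteq\gamma$ and $\overline{\delta}\trianglerighteq\overline{\gamma}$, whence $L(\gamma)\subseteq L(\delta)$ and $L(\overline{\gamma})\subseteq L(\overline{\delta})$ --- I obtain $x \in L(\delta)\cap L(\overline{\delta})\subseteq L(R_k)$. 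Since $\rank{\alpha}=k+1\leq r+1$ we have $k\leq r$, so $R_k$ is among the unambiguous grammars; were $\delta\neq\overline{\delta}$, the term $x$ would have two distinct derivations in $R_k$, differing already in their first step $R_k\to\delta$ versus $R_k\to\overline{\delta}$, a contradiction. Thus $\delta=\overline{\delta}$, and $\gamma,\overline{\gamma}$ both come from a single $\MeshSet{\delta}{\beta}$. If $\delta$ does not reference $R_k$, then $\potential{\delta}<\potential{R_k}$, the total potential drops below $n$, and the induction hypothesis closes the case (here $\delta$ is necessarily complex, as otherwise $\MeshSet{\delta}{\beta}=\emptyset$). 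If $\delta$ does reference $R_k$ --- so $\delta$ is one of $SR_k$, $KR_k$, $SR_kR_0$, $SR_0R_k$ --- then, exactly as in Proposition~\ref{prop-meshset-rewrites}, non-emptiness of $\MeshSet{\delta}{\beta}$ forces $\beta$ into the matching shape ($X\beta_1$ in the first two cases, $S\beta_1\beta_2$ in the last two), and the computation of $\gamma,\overline{\gamma}$ reduces, coordinate by coordinate, to singletons or to applications of the induction hypothesis to $\MeshSet{R_k}{\beta_1}$ (and, in the $S$-cases, to $\MeshSet{R_0}{\beta_2}$), whose potential sums are strictly smaller than $n$ while all ranks remain $\leq r+1$. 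In every subcase we conclude $\gamma=\overline{\gamma}$.

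The one genuinely delicate point is this last subcase: the potential fails to decrease when one steps from the non-terminal $R_k$ to a self-referencing production $\delta$, so the induction must be routed through the argument trees of $\delta$ in the same manner as in Proposition~\ref{prop-meshset-rewrites}; and it is precisely the hypothesis $\rank{R_k}\leq r+1$ (equivalently $k\leq r$) that legitimises the appeal to the unambiguity of $R_k$. Everything else is a straightforward unfolding of the definition of \textsc{MeshSet} together with the length-preservation of normal trees.
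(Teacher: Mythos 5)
Your proof is correct and follows essentially the same route as the paper's: induction on the total potential $\potential{\alpha}+\potential{\beta}$, the same two-way case split on the structure of $\alpha$ and $\beta$, and the same appeal to the unambiguity of $R_k$ (legitimised by $\rank{\alpha}\leq r+1$) in the non-terminal case. If anything, your treatment of the case $\alpha=R_k$ is more careful than the paper's: the paper's contradiction argument only covers the subcase where the two productions $\delta,\overline{\delta}$ are distinct, whereas you also handle $\delta=\overline{\delta}$ explicitly, descending into $\MeshSet{\delta}{\beta}$ and routing the induction through the argument subtrees of the self-referencing productions, which is exactly what is needed there.
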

\Needspace{3\baselineskip}

\begin{proof}
        Induction over $n = \potential{\alpha} + \potential{\beta}$. Let $x \in
        L(\gamma) \cap L(\overline{\gamma})$.  We can assume that
        $|\MeshSet{\alpha}{\beta}|$ is greater than $1$ as the case for
        $|\MeshSet{\alpha}{\beta}| = 1$ is trivial. In consequence, the base case $n = 0$ is clear as the resulting \textsc{MeshSet} for two trees of potential $0$ has to be necessarily empty.  Hence, we have to
        consider two cases based on the structure of $\alpha$ and $\beta$.
        
        \begin{enumerate}[(i)]
        \item Let $\alpha = X \alpha_1 \ldots \alpha_m$ and $\beta = X \beta_1
                \ldots \beta_m$. Clearly, $x$ is in form of $x = X x_1 \ldots
                x_m$.  Let $\alpha_i \parallel \beta_i$ be an arbitrary
                non-rewritable pair of arguments in $\alpha,\beta$.  It follows that $x_i
                \in L(\alpha_i) \cap L(\beta_i)$ and so, due to
                Lemma~\ref{lem-meshset-crutial}, there exists a mesh
                $\delta \in \MeshSet{\alpha_i}{\beta_i}$ such that $x_i \in
                L(\delta)$. Let $M_i = \MeshSet{\alpha_i}{\beta_i}$. Since
                $\potential{\alpha_i} + \potential{\beta_i} < n$ we can use the
                induction hypothesis to $M_i$ and immediately conclude that
                $\delta$ is the only mesh in $M_i$ generating $x_i$. And so, we
                know that $\gamma$ and $\overline{\gamma}$ are equal on the
                non-rewritable arguments of $\alpha,\beta$. Note that if $\alpha_i \bowtie 	\beta_i$, then both contribute a single mesh at position $i$. Immediately, we get that both $\gamma$ and $\overline{\gamma}$ are also equal on the rewritable arguments of $\alpha$ and $\beta$, hence finally $\gamma = \overline{\gamma}$.

        \item W.l.o.g.~let $\alpha = R_k$ and $\beta = X \beta_1 \ldots
                \beta_m$. Clearly, as $\rank{\alpha} \leq r + 1$, we know that $R_k$
                is unambiguous.  From the definition of \textsc{MeshSet} there
                exist productions $\delta,\overline{\delta} \in R_k$ such that
                $\gamma \in \MeshSet{\delta}{\beta}$ and $\overline{\gamma} \in
                \MeshSet{\overline{\delta}}{\beta}$. We claim that $\gamma =
                \overline{\gamma}$ as otherwise $\delta,\overline{\delta}$ would
                generate a common term. Suppose that $\gamma \neq
                \overline{\gamma}$.         
                  From Lemma~\ref{lem-meshset-crutial} we
                know that $L(\gamma) \subseteq L(\delta)$ and $L(\overline{\gamma})
                \subseteq L(\overline{\delta})$. Since $x \in
        		L(\gamma) \cap L(\overline{\gamma})$, we get that $x \in L(\delta)
                \cap L(\overline{\delta})$ and therefore a contradiction with the fact 					that $R_k$ is unambiguous. It follows that $\gamma =
                \overline{\gamma}$, which finishes the proof.
        \end{enumerate}
\end{proof}

\begin{theorem}[Unambiguity]\label{th-unambiguity}
    Let $\alpha, \beta \in R_n$. If $L(\alpha) \cap L(\beta) \neq \emptyset$,
    then $\alpha = \beta$.
\end{theorem}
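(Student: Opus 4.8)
The plan is to prove the theorem by induction on $n$, taking as inductive hypothesis that the statement already holds for $R_0,\ldots,R_{n-1}$ --- equivalently, that those grammars are unambiguous, which is exactly the side condition Lemma~\ref{lem-meshset-disjoint} requires with $r=n-1$. Before the main argument I would first record the \textsc{RewritingSet} companion of Lemma~\ref{lem-meshset-disjoint}: if $\rank{\alpha},\rank{\beta}\le r+1$, the grammars $R_0,\ldots,R_r$ are unambiguous, and two elements of $\RewritingSet{\alpha}{\beta}$ generate a common term, then they coincide. This reduces to a finite case split on the structure of $\beta$: the cases $\beta\in\{S,K,\mathcal{C}\}$ and the rewritable cases $\alpha\bowtie\beta_m$ yield singleton rewriting sets; when $\alpha\parallel\beta_m$ the two candidates share the prefix $X\beta_1\ldots\beta_{m-1}$ and differ only in a mesh from $\MeshSet{\alpha}{\beta_m}$, so Lemma~\ref{lem-meshset-disjoint} applies; and when $\beta=R_k$, Proposition~\ref{prop-rewritingset-rewrites} together with the assumed unambiguity of $R_k$ forces a common generated term to route both candidates through one production $\delta\in R_k$ (necessarily complex), whereupon the already-settled case applies.

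For the theorem itself, fix $\alpha,\beta\in R_n$ with $x\in L(\alpha)\cap L(\beta)$. Since normal trees preserve the head combinator and the length of the terms they generate, non-similar productions have disjoint languages, so it suffices to treat $\alpha,\beta$ similar. Reading off the construction of $R_n$, its productions split into four similarity-closed groups. The base case $n=0$ is immediate, as the five productions of $R_0$ have pairwise distinct combinator/length signatures. For $n\ge1$: (i) the only length-$1$ productions are $SR_n$ and $KR_n$, which are not similar; (ii) the only length-$2$ productions are $KR_{n-1}\mathcal{C}$ and the family $\{SR_{n-i}R_i:0\le i\le n\}$, and for the latter a common term $Syz$ forces $y\in L(R_{n-i})\cap L(R_{n-j})$, so $i=j$ by Corollary~\ref{cor-Rn-diff-Rm}; (iii) every length-$\ge 3$ production with head $K$ is a \textsc{K-Expansion}, say $\alpha\in\KExpansions{\mu}$ and $\beta\in\KExpansions{\nu}$ with $\mu,\nu\in R_{n-1}$, and since $x$ then has a head redex, its unique normal-order reduct $y$ lies in $L(\mu)\cap L(\nu)$ by Proposition~\ref{prop-K}, whence $\mu=\nu$ by the inductive hypothesis and then $\alpha=\beta$ because two similar \textsc{K-Expansions} of one production split it at the same position.

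Case (iv) --- length-$\ge 3$ productions with head $S$, which are precisely the \textsc{S-Expansions} --- is where the real work sits. Writing $\alpha\in\SExpansions{\mu}$ and $\beta\in\SExpansions{\nu}$ with $\mu,\nu\in R_{n-1}$, the unique normal-order reduct $y$ of $x$ lies in $L(\mu)\cap L(\nu)$ by Propositions~\ref{prop-S} and~\ref{prop-rewritingset-rewrites}, so $\mu=\nu$ by the inductive hypothesis; writing $\mu=X\mu_1\ldots\mu_m$, similarity forces $\alpha$ and $\beta$ to expand $\mu$ at the same position $k$, so they can differ only in the rewriting pair chosen from $\RewritingSet{\mu_{k+1}}{\mu_{k+2}}$. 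Stripping the common prefix and trailing arguments off $x\in L(\alpha)\cap L(\beta)$ exhibits a subterm $wz$ lying in $L(\varphi_l\varphi_r)\cap L(\varphi'_l\varphi'_r)$ for the two candidate pairs, and the \textsc{RewritingSet} companion lemma --- applicable since $\rank{\mu_{k+1}},\rank{\mu_{k+2}}\le n$ --- gives $\varphi_l\varphi_r=\varphi'_l\varphi'_r$, hence $\alpha=\beta$.

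I expect the bookkeeping --- the combinator/length signature of each production family, and the degree bounds needed when invoking Lemma~\ref{lem-meshset-disjoint} --- to be routine. The main obstacle is the \textsc{RewritingSet} companion lemma, and within it the case $\beta=R_k$: one must argue carefully that a shared generated term routes both rewriting elements through one and the same production of $R_k$ before Lemma~\ref{lem-meshset-disjoint} can be brought to bear. Every other step is a matter of matching normal-order reducts via Propositions~\ref{prop-K} and~\ref{prop-S} and peeling identical prefixes off similar trees.
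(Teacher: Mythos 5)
Your proposal is correct and follows essentially the same route as the paper: induction on $n$, reduction to similar productions, the case split by head combinator and length, matching normal-order reducts via Propositions~\ref{prop-K} and~\ref{prop-S} to invoke the inductive hypothesis on the source productions in $R_{n-1}$, and Lemma~\ref{lem-meshset-disjoint} plus the unambiguity of $R_k$ to settle the \textsc{RewritingSet} ambiguity in the \textsc{S-Expansion} case. The only difference is presentational --- you factor the \textsc{RewritingSet} uniqueness argument out as a standalone companion lemma, whereas the paper carries out exactly that case analysis inline in case (iv).
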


\begin{proof}
    Induction over $n$. Let $x \in L(\alpha) \cap L(\beta)$. Note that if $x \in L(\alpha) \cap L(\beta)$, then both $\alpha, \beta$ must be similar. We can therefore focus on similar productions of $R_n$. For that reason, we
    immediately notice that $R_0$ satisfies our claim.
    
     Let $n > 0$. Since $R_n$
    does not contain combinators as productions, we can rewrite both
    $\alpha$ as $X \alpha_1 \ldots \alpha_m$ and $\beta$ as $X
    \beta_1 \ldots \beta_m$. Let us consider several cases based on their common
    structure.
    \begin{enumerate}[(i)]
        \item Let $X = K$. If $m = 1$, then $\alpha$ and $\beta$ are
            equal as there is exactly one short $K$-production in $R_n$. If
            $m=2$, then again $\alpha = \beta$, since there is a unique
            $K$-production $K R_{n-1} \mathcal{C}$ of length two in $R_n$. If $m >
            2$, then both are \textsc{K-Expansions} of some productions in
            $R_{n-1}$. And so
            \[ \alpha = K (X \overline{\alpha_1} \ldots \overline{\alpha_k}) \mathcal{C}
            \alpha_3 \ldots \alpha_m \in \KExpansions{\gamma}, \]
            \[ \beta = K (X \overline{\beta_1} \ldots \overline{\beta_k}) \mathcal{C}
            \beta_3 \ldots \beta_m \in \KExpansions{\delta}, \]
            where
            \[\gamma = X \overline{\alpha_1}
            \ldots \overline{\alpha_k} \alpha_3 \ldots \alpha_m,\]
            \[\delta = X \overline{\beta_1}
            \ldots \overline{\beta_k} \beta_3 \ldots \beta_m.\]
            Since $x \in L(\alpha) \cap L(\beta)$, we can assume that $x$ is in
            form of $K (X y_1 \ldots y_k) x_2 x_3 \ldots x_m$ where $y_i \in
            L(\overline{\alpha_i}) \cap L(\overline{\beta_i})$ and $x_j \in L(\alpha_j)
            \cap L(\beta_j)$. It follows that we can use the induction
            hypothesis to $\gamma, \delta \in R_{n-1}$ obtaining
            $\overline{\alpha_i} = \overline{\beta_i}$ and $\alpha_j = \beta_j$.
            Immediately, we get $\alpha = \beta$.
        
        \item Let $X = S$. If $m = 1$, then $\alpha$ and $\beta$ are equal due to
                the fact that there is exactly one $S$-production of length one
                in $R_n$. If $m = 2$, then $\alpha,\beta$ are in form of $\alpha
                = S R_i R_{n-i}$ and $\beta = S R_j R_{n-j}$. Hence, $x = S x_1
                x_2$ for some terms $x_1,x_2$. Since $x_1 \in L(R_i) \cap
                L(R_j)$ and $x_2 \in L(R_{n-i}) \cap L(R_{n-j})$, we know that
                $i = j$ due to Corollary~\ref{cor-Rn-diff-Rm} and thus $\alpha =
                \beta$. It remains to consider long $S$-productions. Let
                \[ \alpha = S (X \overline{\alpha_1} \ldots \overline{\alpha_k}) \varphi_l
                        \varphi_r \alpha_4
                    \ldots \alpha_m \in \SExpansions{\gamma},\]
                    \[ \beta = S (X \overline{\beta_1} \ldots \overline{\beta_k})
                            \overline{\varphi_l} \overline{\varphi_r} \beta_4
                    \ldots \beta_m \in \SExpansions{\delta},\]
                where
                \[ \gamma = X \overline{\alpha_1} \ldots \overline{\alpha_k} \alpha_2
                \alpha_3 \alpha_4 \ldots \alpha_m, \]
                \[ \delta = X \overline{\beta_1} \ldots \overline{\beta_k} \beta_2
                \beta_3 \beta_4 \ldots \beta_m. \]
                It follows that we can rewrite $x$ as $S (X y_1 \ldots y_k) w z
                x_4 \ldots x_m$. Let us focus on the reduct $x \to_w y = X y_1
                \ldots y_k z (w z) x_4 \ldots x_m$. Evidently, $y \in L(\gamma)
                \cap L(\delta)$ and so according to the induction hypothesis we
                know that $\gamma = \delta$, in particular $\alpha_2 = \beta_2$
                and $\alpha_3 = \beta_3$. Hence, both $\varphi_l \varphi_r$ and
                $\overline{\varphi_l}\overline{\varphi_r}$ are elements of the same
                \textsc{RewritingSet}. If we could guarantee that $\varphi_l
                \varphi_r = \overline{\varphi_l}\overline{\varphi_r}$, then immediately
                $\alpha = \beta$ and the proof is finished.  From the
                construction of the \textsc{RewritingSet} we have two cases left to
                consider.
                    \begin{enumerate}[(i)]
                        \item If $\alpha_3 = X \gamma_1 \ldots \gamma_m$, then
                                both $\varphi_l \varphi_r$ and $\overline{\varphi_l}
                                \overline{\varphi_r}$ are either in form of $X \gamma_1
                                \ldots \gamma_{m-1}\,
                                \varphi_r$ or $X \gamma_1
                                \ldots \gamma_{m-1}\,
                                \overline{\varphi_r}$.
                                It follows that
                                $\varphi_l = \overline{\varphi_l}$. It remains to
                                show that $\varphi_r = \overline{\varphi_r}$.  Note
                                that $\rank{\alpha_2}, \rank{\alpha_3} \leq n$ since both $\gamma,\delta \in R_{n-1}$.
                                Moreover, from the induction hypothesis we know
                                that $R_0,\ldots,R_{n-1}$ are unambiguous. And
                                so, since $z \in L(\varphi_r) \cap
                                L(\overline{\varphi_r})$, we can use
                                Lemma~\ref{lem-meshset-disjoint} to conclude
                                that $\varphi_r = \overline{\varphi_r}$.
                        \item If $\alpha_3 = R_k$, then necessarily there exist such
                                productions $\eta,\overline{\eta} \in R_k$ that
                                $\varphi_l \varphi_r \in
                                \RewritingSet{\alpha_2}{\eta}$ whereas
                                $\overline{\varphi_l} \overline{\varphi_r} \in \RewritingSet{\alpha_2}{\overline{\eta}}$. Due to
                                Proposition~\ref{prop-rewritingset-rewrites}, we
                                know that $L(\varphi_l \varphi_r) \subseteq
                                L(\eta)$ and $L(\overline{\varphi_l}
                                \overline{\varphi_r}) \subseteq L(\overline{\eta})$. It
                                implies that $w z \in L(\eta) \cap L(\overline{\eta})$,
                                however, since $k < n$, we know from the induction
                                hypothesis that $R_k$ is unambiguous. Hence
                                $\eta = \overline{\eta}$. Finally, it means that
                                we can reduce this case to one of the previous cases when
                                $\alpha_3$ is complex, concluding that
                                $\varphi_l \varphi_r =
                                \overline{\varphi_l}\overline{\varphi_r}$.
                    \end{enumerate}
    \end{enumerate}
\end{proof}

\subsection{Generating functions}\label{sec:gen-funs}
Fix an arbitrary normal-order reduction grammar $R_n$. Let us consider the counting sequence $\{r_{n,k}\}_{k\in \mathbb{N}}$ where $r_{n,k}$ denotes the number of $S K$-combinators of size $k$ reducing in $n$ normal-order reduction steps. Suppose we associate with it a formal power series $R_n(z)$ defined as
\[R_n(z) = \sum_{k=0}^{\infty} r_{n,k}\, z^k. \]

In the following theorem we present a recursive method of computing the closed-form solution of $R_n(z)$ using the regular tree grammars $R_0,\ldots,R_n$ and the inductive use of the \emph{Symbolic Method} developed by Flajolet and Sedgewick~\cite{Flajolet:2009:AC:1506267}.\Needspace{3\baselineskip}

\begin{theorem}
For each $n \geq 0$, the ordinary generating function $R_n(z)$ corresponding to the sequence $\{r_{n,k}\}_{k\in \mathbb{N}}$ has a computable closed form solution.
\end{theorem}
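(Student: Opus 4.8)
The plan is to proceed by strong induction on $n$, constructing $R_n(z)$ from the generating functions of all trees appearing as productions of $R_n$, which in turn reference only $\mathcal{C}$ and the lower-index grammars $R_0,\ldots,R_{n-1}$. The base case is immediate: $R_0(z)$ satisfies $R_0(z) = 2z^0 \cdot 1$ \dots more precisely, reading off the grammar $R_0 := S \mid K \mid S R_0 \mid K R_0 \mid S R_0 R_0$ and recalling that size counts applications, we get the algebraic equation $R_0(z) = 2 + 2 z R_0(z) + z^2 R_0(z)^2$, which is a quadratic with a computable closed-form (algebraic) solution. Likewise, the generating function $C(z)$ for $\mathcal{C} := S \mid K \mid \mathcal{C}\mathcal{C}$ satisfies $C(z) = 2 + z C(z)^2$, again closed-form.

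For the inductive step, I would first observe that by Theorem~\ref{th-unambiguity} the grammar $R_n$ is unambiguous, so $L(R_n)$ is the disjoint union of $L(\alpha)$ over the productions $\alpha$ of $R_n$, and hence $R_n(z) = \sum_{\alpha \in R_n} A_\alpha(z)$ where $A_\alpha(z)$ is the generating function of $L(\alpha)$. It therefore suffices to show that each production of $R_n$, as a normal tree $\alpha = X\alpha_1\ldots\alpha_m$, has a computable generating function. Since normal trees preserve length, $L(X\alpha_1\ldots\alpha_m)$ is in size-preserving bijection with $L(\alpha_1) \times \cdots \times L(\alpha_m)$ with $m$ extra application nodes, so by the Symbolic Method $A_\alpha(z) = z^m \prod_{i=1}^m A_{\alpha_i}(z)$; this recursion bottoms out at the leaves, where $A_S(z) = A_K(z) = 1$, $A_{\mathcal{C}}(z) = C(z)$, and $A_{R_k}(z) = R_k(z)$ for $k < n$ (available by the induction hypothesis). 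The only subtlety is the \emph{self-referencing} productions $S R_n$, $K R_n$ and $S R_i R_{n-i}$ with $i \in \{0, n\}$, which contain $R_n$ itself; collecting these, the equation for $R_n(z)$ becomes linear in $R_n(z)$ of the shape $R_n(z) = 2 z R_n(z) + 2 z^2 R_0(z) R_n(z) + T_n(z)$, where $T_n(z)$ is a finite sum of already-computed closed forms coming from the remaining (non-self-referencing) productions. Solving, $R_n(z) = T_n(z) / \bigl(1 - 2z - 2z^2 R_0(z)\bigr)$, which is again a computable closed form built from algebraic functions and finitely many arithmetic operations.

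To make the induction actually go through, I would first need to argue that $R_n$ has only finitely many productions and that each has finite degree strictly controlled — this follows because \textsc{ReductionGrammar} adds the finitely many short productions, $K R_{n-1}\mathcal{C}$, and the \textsc{K-} and \textsc{S-Expansions} of the finitely many productions of $R_{n-1}$, and each of \textsc{kExpansions}, \textsc{sExpansions} (and the underlying \textsc{RewritingSet}, \textsc{MeshSet}) returns a finite list, with termination already established in the analysis sections. Moreover every production of $R_n$ other than the self-referencing ones has degree at most $n$, i.e.~references only $R_0,\ldots,R_{n-1}$ and $\mathcal{C}$, so the induction hypothesis genuinely applies to compute $A_{\alpha_i}(z)$ for every non-self-referencing argument. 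I would then assemble $T_n(z)$ explicitly and invoke closure of the class of algebraic (indeed, of computable closed-form) functions under $+$, $\times$, and division by a nonzero element to conclude.

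The main obstacle I anticipate is purely bookkeeping rather than conceptual: correctly isolating \emph{all} the occurrences of the non-terminal $R_n$ inside the productions of $R_n$ so that the resulting functional equation is genuinely linear in $R_n(z)$ and the coefficient $1 - 2z - 2z^2 R_0(z)$ is a nonzero power series (its constant term is $1 - 0 - 0 = 1$, so this is fine), and verifying that no production of $R_n$ secretly references $R_n$ through a \textsc{K-} or \textsc{S-Expansion} of a $R_{n-1}$-production — which it cannot, since expansions only ever introduce references to $\mathcal{C}$ and to grammars already named in the expanded production, all of index $\le n-1$. Once that is pinned down, the rest is a routine application of the Symbolic Method and elementary algebra.
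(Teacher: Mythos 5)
Your proposal is correct and follows essentially the same route as the paper: use unambiguity to sum the symbolic-method contributions of the productions, isolate the four self-referencing productions $S R_n$, $K R_n$, $S R_0 R_n$, $S R_n R_0$ to obtain a linear equation $R_n(z) = 2z R_n(z) + 2z^2 R_0(z) R_n(z) + T_n(z)$, and solve using the inductively computed closed forms. The paper merely takes $C(z)$ and $R_0(z)$ as known from prior work and additionally simplifies the denominator $1 - 2z - 2z^2 R_0(z)$ to $\sqrt{1-4z-4z^2}$, but these are cosmetic differences.
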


\begin{proof}
Induction over $n$.  Let us start with giving previously computed closed-form solutions for $C(z)$, i.e.~the generating function corresponding to the set of all $S K$-combinators, and $R_0(z)$~\cite{Bendkowski2015}:
\begin{equation}\label{eq:r0(z)}
C(z) = \frac{1 - \sqrt{1 - 8z}}{2 z} \qquad \quad R_0(z) = \frac{1 - 2 z - \sqrt{1- 4 z - 4 z^2}}{2 z^2}.
\end{equation}
Clearly, both $C(z)$ and $R_0(z)$ are computable.

Now, suppose that $n \geq 1$. Recall that in its construction, $R_n$ might depend on previous reduction grammars $R_0, \ldots, R_{n-1}$, the set $\mathcal{C}$ of all $S K$-combinators and itself, via self-referencing productions. Due to Theorem~\ref{th-unambiguity}, $R_n$ is unambiguous and so we can express its generating function $R_n(z)$ as the unique solution of
\begin{equation}\label{eq:rn(z)-fun-eq}
R_n(z) = \sum_{\alpha \in R_n} z^{k(\alpha)} {C(z)}^{c(\alpha)} \prod_{i=0}^{n} {R_i(z)}^{r_i(\alpha)},
\end{equation}
where $k(\alpha)$, $c(\alpha)$ and $r_i(\alpha)$ denote respectively, the number of applications, the number of non-terminal symbols $C$ and the number of non-terminal symbols $R_i$ in $\alpha$.

Note that $R_n$ has exactly four self-referencing productions, i.e.~$S R_n$, $K R_n$, $S R_0 R_n$ and $S R_n R_0$. It means that by converting them into appropriate functional equations, we can further rewrite~(\ref{eq:rn(z)-fun-eq}) as
\begin{equation}\label{eq:rn(z)-fun-eq-2}
R_n(z) = 2 z R_n(z) + 2 z^2 R_0(z) R_n(z) + \sum_{\alpha \in \Phi(R_n)} z^{k(\alpha)} {C(z)}^{c(\alpha)} \prod_{i=0}^{n-1} {R_i(z)}^{r_i(\alpha)},
\end{equation}
where $\Phi(R_n)$ denotes the set of productions $\alpha \in R_n$ which do not reference $R_n$. By the induction hypothesis, we can compute the closed-form solutions for $R_0(z), \ldots, R_{n-1}(z)$ turning~(\ref{eq:rn(z)-fun-eq-2}) into a linear equation in $R_n(z)$. Simplifying~(\ref{eq:r0(z)}) for $R_0(z)$, we derive the final closed-form solution
\begin{equation*}\label{eq:rn(z)-fun-eq-simplified}
R_n(z) = \frac{1}{\sqrt{1-4 z - 4 z^2}} \sum_{\alpha \in \Phi(R_n)} z^{k(\alpha)} {C(z)}^{c(\alpha)} \prod_{i=0}^{n-1} {R_i(z)}^{r_i(\alpha)}.
\end{equation*}
\end{proof}

\subsection{Other applications}\label{sec:applications}
In this section we highlight some interesting consequences of the existence of normal-order reduction grammars. In particular, we prove that terms reducing in $n$ steps have necessarily bounded length.  Moreover, we show that the problem of deciding whether a given term reduces in $n$ steps, can be done in memory independent of the size of the term.

\begin{prop}\label{prop-LRn-production-length}
If $\alpha \in R_n$, then $\alpha$ has length at most $2n + 2$.
\end{prop}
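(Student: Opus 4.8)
The plan is to prove the bound by induction on $n$, tracking how each way of forming a production of $R_n$ affects length. For the base case $n=0$, one inspects the five productions of $R_0$ directly and checks that each has length at most $2\cdot 0 + 2 = 2$; indeed $S$ and $K$ have length $0$, $S R_0$ and $K R_0$ have length $1$, and $S R_0 R_0$ has length $2$.

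For the inductive step, assume every production of $R_{n-1}$ has length at most $2(n-1)+2 = 2n$, and consider a production $\alpha \in R_n$. By the pseudo-code of \textsc{ReductionGrammar}, $\alpha$ is one of: a short production ($S R_n$, $K R_n$, $S R_{n-i} R_i$, each of length at most $2$), the production $K R_{n-1}\mathcal{C}$ (length $2$), a \textsc{K-Expansion} of some $\gamma \in R_{n-1}$, or an \textsc{S-Expansion} of some $\gamma \in R_{n-1}$. Write $\gamma = X\gamma_1 \ldots \gamma_m$ with $m \leq 2n$ by the induction hypothesis. A \textsc{K-Expansion} has the form $K(X\gamma_1\ldots\gamma_k)\,\mathcal{C}\,\gamma_{k+1}\ldots\gamma_m$, whose length is $1 + 1 + (m-k) \leq m + 2 \leq 2n + 2$. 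An \textsc{S-Expansion} has the form $S(X\gamma_1\ldots\gamma_k)\,\varphi_l\,\varphi_r\,\gamma_{k+3}\ldots\gamma_m$, whose length is $1 + 1 + 1 + (m-k-2) = m - k + 1 \leq m + 1 \leq 2n + 1 \leq 2n+2$. In every case the length of $\alpha$ is at most $2n+2$, completing the induction.

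I expect the only point requiring care to be the bookkeeping in the \textsc{K-} and \textsc{S-Expansion} cases: one must correctly read off from the defining formulas that a \textsc{K-Expansion} replaces a prefix $X\gamma_1\ldots\gamma_k$ (contributing one argument) together with the inserted $\mathcal{C}$ (one more argument) and keeps the suffix $\gamma_{k+1}\ldots\gamma_m$ (that is $m-k$ arguments), so the worst case $k=0$ gives length $m+2$; and that an \textsc{S-Expansion} collapses the two arguments $\gamma_{k+1},\gamma_{k+2}$ into the three arguments $\varphi_l,\varphi_r$ after the prefix, which is a net change of $+1$ relative to $\gamma$'s length in the worst case $k=0$. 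There is no genuine obstacle here — length, unlike size or potential, is not affected by self-referencing productions (the short self-referencing productions all have length $\leq 2$), so ordinary induction on $n$ suffices without needing the potential machinery.
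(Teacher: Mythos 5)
Your proof is correct and follows essentially the same route as the paper's: induction on $n$, with the worst case $k=0$ giving length $m+2$ for a \textsc{K-Expansion} and $m+1$ for an \textsc{S-Expansion} of a production of length $m \leq 2n$. The only difference is that you spell out the short productions and the base case explicitly, which the paper leaves as "clear."
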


\begin{proof}
    Induction over $n$. The base case $n = 0$ is clear from the shape of $R_0$.
    Fix $n > 0$. Let us consider long productions in $R_n$. If $\beta$ is a
    \textsc{K-Expansion} of some $X \alpha_1 \ldots \alpha_m \in R_{n-1}$, then
    \[ \beta = K(X \alpha_1 \ldots \alpha_k) \mathcal{C} \alpha_{k+1} \ldots
    \alpha_m \qquad \text{for} \quad 0 \leq k \leq m-1. \]
   
    Since setting $k = 0$ maximizes the length of $\beta$, we note that $\beta$
    is of length $m + 2$ and so by the induction hypothesis at most $2n + 2$.
    Now, let us consider the case when $\beta$ is a \textsc{S-Expansion} of some
    $X \alpha_1 \ldots \alpha_m \in R_{n-1}$. Then,
    \[ \beta = S(X \alpha_1 \ldots \alpha_k) \varphi_l \varphi_r \alpha_{k+3}
    \ldots \alpha_m \qquad \text{for} \quad 0 \leq k \leq m -2. \]
    where in addition $(\varphi_l\, \varphi_r) \in \RewritingSet{\alpha_{k+1}}{\alpha_{k+2}}$.
    Again, setting $k = 0$ maximizes the length of $\beta$. It follows that
    $\beta$ is of length at most $m + 1$ and so also at most $2n + 1$.
\end{proof}

In other words, terms reducing in $n$ steps cannot be too long as their length is tightly bounded by $2n + 2$. Now, let us consider the following two problems.

\begin{figure}[H]
\flushleft
\noindent \textbf{Problem:} \textsc{n-step-reducible}\\
\textbf{Input:} A combinatory logic term $x \in L(\mathcal{C})$.\\
\textbf{Output:} \textsc{yes} if and only if $x$ reduces in $n$ steps.
\end{figure}
\begin{figure}[H]
\flushleft
\noindent \textbf{Problem:} \textsc{reduces-in-n-steps}\\
\textbf{Input:} A combinatory logic term $x \in L(\mathcal{C})$ and a number $n
        \in \mathbb{N}$.\\
\textbf{Output:} \textsc{yes} if and only if $x$ reduces in $n$ steps.
\end{figure}

Since $n$ in not a part of the input, we can compute $R_n$ in constant time and memory. Using $R_n$ we build a bottom-up tree automaton recognizing $L(R_n)$~\cite{tata2007} and use it to check whether $x \in L(R_n)$ in time
$O(|x|)$, without using additional memory. On the other hand, the \textsc{Naive} algorithm requires $O(|x|)$ time and additional memory. At each reduction step, the considered term doubles at most in size, as $S x y z \to_w x z (y z)$. In order to find the next redex we spend up to linear time in the current size of $x$, therefore both size and time are bounded by

\begin{eqnarray*}
|x| + 2 |x| + 4 |x| + \cdots + 2^n |x| &=& |x| \Big(1 + 2 + 4 + \cdots + 2^n \Big) \\
 &=& |x| \Big( 2^{n+1} - 1 \Big) = O(|x|).
\end{eqnarray*}

As a natural extension, we get the following corollary.
\begin{cor}
        The \textsc{reduces-in-n-steps} problem is decidable in space
        depending exclusively on $n$, independently of $|x|$.
\end{cor}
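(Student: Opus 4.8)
The plan is to reduce \textsc{reduces-in-n-steps} to a membership test in the regular tree language $L(R_n)$, exploiting the fact that the whole grammar $R_n$ is determined by $n$ alone. First I would invoke soundness and completeness: by Theorem~\ref{the-correctness} and Theorem~\ref{the-completness}, a term $x$ reduces in exactly $n$ normal-order steps if and only if $x \in L(R_n)$. Hence it suffices to decide $x \in L(R_n)$ in work space bounded by a function of $n$, independently of $|x|$.

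Given the input $(x, n)$, the algorithm first runs \textsc{ReductionGrammar}$(n)$, which terminates (as established in Section~\ref{sec:soundness}) and returns a finite grammar whose shape and size are functions of $n$ only --- indeed, Section~\ref{sec:upper-bound} provides an explicit primitive recursive bound on $|R_n|$. Thus constructing and storing $R_n$ costs space depending on $n$ but not on $|x|$. From $R_n$ I would then build, via the standard construction for regular tree grammars, a bottom-up tree automaton $\mathcal{A}_n$ recognizing $L(R_n)$; its description again depends only on $n$. The procedure finishes by running $\mathcal{A}_n$ on $x$, kept on the read-only input tape, and answering \textsc{yes} exactly when $x$ is accepted; correctness is immediate from the equivalence above.

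The delicate point --- and the step I expect to require the most care --- is arguing that running $\mathcal{A}_n$ over $x$ does not consume memory that grows with $|x|$. A priori this is not obvious, since terms reducing in $n$ steps can be arbitrarily deep (\eg $K S w \to_w S$ in one step for any combinator $w$), so a naive traversal that stacks one state per ancestor of the current node would use space proportional to the height of $x$. Here I would lean on exactly the observation already used for the \textsc{n-step-reducible} problem: the automaton is evaluated while scanning $x$ using only its $n$-dependent transition table together with a bounded amount of bookkeeping. This is supported by the structure of $R_n$: by Proposition~\ref{prop-LRn-production-length} every production has length at most $2n+2$ and references only the non-terminals $\mathcal{C}$ and $R_0, \ldots, R_n$, so matching $x$ against a production involves only a bounded amount of genuine structure, while every subterm matched against $\mathcal{C}$ is simply consumed without being inspected or stored. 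Combining the three pieces, the total work space is a function of $n$ and nothing else, which is the claim.
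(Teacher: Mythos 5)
Your proposal is essentially the paper's own argument: by soundness and completeness reduce the question to membership in $L(R_n)$, construct $R_n$ and a bottom-up tree automaton for it in space depending only on $n$, and run the automaton over $x$. One caveat on your ``delicate point'': the claim that all unbounded depth in $x$ is confined to subterms matched against $\mathcal{C}$ is not accurate, since self-referencing productions such as $S R_n$ (and already $S R_0$ in $R_0$) let the derivation nest non-$\mathcal{C}$ non-terminals arbitrarily deeply (e.g.\ $S(S(\cdots(S)))$), so a bottom-up evaluation still has to cope with unbounded recursion depth; the paper glosses over exactly the same point by simply asserting the automaton runs ``without using additional memory,'' so your write-up is no weaker than the original, but that particular justification should not be leaned on.
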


\subsection{Upper bound}\label{sec:upper-bound}
In this section we focus on the upper bound on the number of productions in $R_n$. We show that there exists a primitive recursive function $f : \mathbb{N} \to \mathbb{N}$ such that $|R_n| \leq f(n)$.

Following the scheme of the soundness proofs in Section~\ref{sec:soundness}, we construct suitable upper bounds using the notions of tree potential and degree. In the end of this section, we show that these values are in fact bounded in each $R_n$, thus giving the desired upper bound.

\begin{lemma}\label{lem-meshset-upperbound}
    Let $\alpha,\beta$ be two trees of degree at most $n$ such that their total potential $\potential{\alpha} + \potential{\beta}$ is equal to $p$. Then, the number of distinct meshes in $\MeshSet{\alpha}{\beta}$ is bounded by ${|R_n|}^{e\, p!}$.
\end{lemma}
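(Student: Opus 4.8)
The plan is to induct on the total potential $p = \potential{\alpha} + \potential{\beta}$, mirroring the structure of the termination argument already established after Proposition~\ref{prop-meshset-rewrites}, and to track how the number of meshes multiplies as we recurse through the three branches of \textsc{MeshSet}. First I would fix the degree bound $n$ (noting $\rank{\alpha},\rank{\beta}\le n$ is preserved by all recursive calls, since every subtree and every production of an $R_k$ with $k<n$ has degree at most $n$) so that throughout the recursion the ``base'' quantity $|R_n|$ from the statement remains the correct one. The base case $p=0$ is immediate: both trees are among $S,K,\mathcal{C}$, and \textsc{MeshSet} returns at most a singleton, so the bound $|R_n|^{e\cdot 0!}=|R_n|^e\ge 1$ holds.

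For the inductive step I would split on the three cases of the subroutine. In the ``both complex'' case $\alpha = X\alpha_1\ldots\alpha_m$, $\beta = X\beta_1\ldots\beta_m$, the output is the Cartesian product over $i$ of the argument meshes; for each $i$ with $\alpha_i\parallel\beta_i$ we apply the induction hypothesis with potential $p_i = \potential{\alpha_i}+\potential{\beta_i}$, and $\sum_i p_i \le p - m < p$, so the total count is at most $\prod_i |R_n|^{e\, p_i!} = |R_n|^{e\sum_i p_i!}$, and I would use superadditivity of the factorial, $\sum_i p_i! \le (\sum_i p_i)! \le p!$, to collapse this to $|R_n|^{e\,p!}$. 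In the ``one is a reduction grammar'' case, say $\alpha = R_k$ with $k\le n$ and $\beta$ complex, \textsc{MeshSet} unions over the productions $\delta\in R_k$; there are at most $|R_k|\le |R_n|$ of them, and for each we recurse on $\MeshSet{\delta}{\beta}$. Here $\potential{\delta}$ can exceed $\potential{R_k}$ only through self-referencing productions $X R_k$ or $S R_k R_0$ (adding at most $3$), and exactly as in Proposition~\ref{prop-meshset-rewrites} the shape of $\beta$ forces the recursion immediately into the arguments, where the potential genuinely drops below $p$; so by induction each summand is bounded by $|R_n|^{e\,(p')!}$ with $p'<p$, and multiplying by the $|R_n|$ productions gives at most $|R_n|\cdot|R_n|^{e\,(p-1)!} = |R_n|^{1 + e\,(p-1)!}$, which is $\le |R_n|^{e\,p!}$ once $e\,p! \ge e\,(p-1)! + 1$ — true for all $p\ge 1$ since $e\,p! - e\,(p-1)! = e(p-1)(p-1)! \ge e - 1 > 1$ for $p\ge 2$ and directly for $p=1$.

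The delicate point — and the step I expect to be the main obstacle — is exactly this interplay between self-referencing productions and the potential budget: a naive induction ``on subtrees'' fails because $\potential{X R_k}>\potential{R_k}$, so I need to carry the same trick used in Proposition~\ref{prop-meshset-rewrites}, namely that when \textsc{MeshSet} meets $X R_k$ against a complex $\beta$, similarity forces $\beta = X\beta_1$ (else the set is empty) and the recursion lands on $\MeshSet{R_k}{\beta_1}$ with strictly smaller potential; likewise $S R_k R_0$ against $\beta$ forces $\beta = S\beta_1\beta_2$. Making this bookkeeping precise, and checking that the constant $e$ in the exponent absorbs the extra factor of $|R_n|$ picked up at each grammar-expansion step, is where the real work lies; the factorial in the exponent is chosen precisely because the Cartesian-product case can nest meshes to depth proportional to $p$, so the worst case is a tower of products whose sizes are governed by $\sum p_i! \le p!$. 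The remaining ``otherwise'' case of \textsc{MeshSet}, where it returns the empty set, is trivial. Assembling these three estimates completes the induction and yields $|\MeshSet{\alpha}{\beta}| \le |R_n|^{e\,p!}$.
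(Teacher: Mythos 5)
Your proposal is correct and follows essentially the same route as the paper: induction on the total potential with the same case split (both trees complex versus one of them a grammar symbol $R_k$), the same bound of the grammar case by $|R_k|\le|R_n|$ times the worst production, and the same trick of descending into the arguments of the self-referencing productions $X R_k$ and $S R_k R_0$ to recover a strictly smaller potential. The only cosmetic difference is that the paper first establishes the intermediate bound $|\MeshSet{\alpha}{\beta}|\le f_n(p)$ for the auxiliary recurrence $f_n(k)={\left(|R_n|\,f_n(k-1)\right)}^k$ and then solves it, whereas you carry the closed form ${|R_n|}^{e\,p!}$ directly through the induction via factorial superadditivity; both close (your aside that the grammar case holds ``directly for $p=1$'' is false as stated, but that case forces $p\ge 2$ anyway, and the unchecked sub-case $S R_k R_0$ needs only $2(p-2)!\le(p-1)!$, which holds since there $p\ge 3$).
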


\begin{proof}
        Induction over the total potential $p$. Consider the following primitive recursive function $f_n :\mathbb{N} \to \mathbb{N}$.
    
    \begin{equation*}
  f_n(k) = \left\{\begin{array}{r@{}l@{\qquad}l}
    & 1 & \text{if }\ k = 0, \\[\jot]
    & {\left(|R_n| \cdot f_n(k-1)\right)}^{k} & \text{otherwise.}
  \end{array}\right.
\end{equation*}

    We claim that $|\MeshSet{\alpha}{\beta}| \leq f_n(p)$. Note that it suffices
    to consider such $\alpha,\beta$ that $|\MeshSet{\alpha}{\beta}| > 1$ since
    $f_n$ is an increasing function attaining positive values for any given input.
	It follows that the base case $p = 0$ is clear, as if $\potential{\alpha} + \potential{\beta} = 0$, then $\MeshSet{\alpha}{\beta}$ is necessarily empty.
	Now, let us assume that $p > 0$. From the construction of the common mesh set $M$ of $\alpha$ and $\beta$, we can distinguish two cases left to consider.
    \begin{enumerate}[(i)]
        \item Suppose that $\alpha = X \alpha_1 \ldots \alpha_m$ and
                $\beta = X \beta_1 \ldots \beta_m$. In order to maximize the
                size of $M$, we can furthermore assume that none of the pairs
                 $\alpha_i,\beta_i$ are rewritable. And so, the total
                number of meshes in $M$ is equal to the product of all meshes in
                corresponding mesh sets for $\alpha_i$ and $\beta_i$. The degree
                of $\alpha_i$ and $\beta_i$ is still at most $n$,
                however $\potential{\alpha_i} + \potential{\beta_i} \leq p-2$. Hence,
                using the induction hypothesis we get
                $ |\MeshSet{\alpha_i}{\beta_i}| \leq f_n(p-2)$.
                Since both $\alpha,\beta$ are of length $m \leq p$ we can furthermore
                state that
                \begin{eqnarray*}
                        |M| &\leq& \left(f_n(p-2)\right)^m \leq \left(f_n(p-2)\right)^p\\
                            &\leq& \left(f_n(p-1)\right)^p \leq 
                            \left(|R_n| \cdot f_n(p-1)\right)^p\\
                            &=& f_n(p).
                \end{eqnarray*}
        \item Let us assume w.l.o.g.~that $\alpha = R_i$ and $\beta$ is complex.
                In order to maximize the total number of meshes in $M$, we can
                moreover assume that all productions $\gamma \in R_i$ are
                similar to $\beta$ and generate disjoint sets of meshes. We claim
                 that $\MeshSet{\gamma}{\beta} \leq f_n(p-1)$. 
                 Clearly,
                 if $\gamma$ does not reference $R_i$, then our claim is trivially true.
                Suppose that $\gamma$ is a self-referencing production. If 
                $\gamma = X R_i$, then $\beta$ is in form of $X \beta_1$.
                From the construction of $M$, we get that \[|\MeshSet{\gamma}{\beta}| =
                 |\MeshSet{R_i}{\beta_1}|.\] As $\potential{R_i} + \potential{\beta_1}
                  \leq p - 1$, we can apply the induction hypothesis to 
                  $\MeshSet{R_i}{\beta_1}$ and immediately obtain 
                  $|\MeshSet{\gamma}{\beta}| \leq f_n(p-1)$. Now, suppose w.l.o.g.~that
                $\gamma = S R_i R_0$ and hence $\beta = S \beta_1 \beta_2$.
                Again, from the construction of $M$ we know that
                \[|\MeshSet{\gamma}{\beta}| = |\MeshSet{R_i}{\beta_1}| \cdot
                |\MeshSet{R_0}{\beta_2}|.\]
                Due to the fact that both $\potential{R_i} + \potential{\beta_1}
                 \leq p - 2$ and $\potential{R_0} + \potential{\beta_2} \leq p - 2$,
                  we can use the induction hypothesis and immediately get that
                \begin{eqnarray*}
                        |\MeshSet{\gamma}{\beta}| &=& |\MeshSet{R_i}{\beta_1}| \cdot
                |\MeshSet{R_0}{\beta_2}|\\
                        &\leq& f_n(p-2)\, f_n(p-2).
                \end{eqnarray*}
                Note that ${(f_n(p-2))}^2 \leq f_n(p-1)$ for $p \geq 2$ and, in
                 consequence, $|\MeshSet{\gamma}{\beta}| \leq f_n(p-1)$. Indeed, 
                 if $p = 2$, then
                 ${(f_n(p-2))}^2 = 1 \leq f_n(1) = |R_n|$. Otherwise if 
                 $p > 2$, then
                \begin{eqnarray*}
                        f_n(p-1) &=& {\left(|R_n| \cdot f_n(p-2)\right)}^{p-1}\\
                                 &=& {\left( {|R_n|}^{p-1} {(f_n(p-3))}^{p-2} \right)}^{p-1}\\
                                 &\geq& {\left( {|R_n|}^{p-2} {(f_n(p-3))}^{p-2} \right)}^{p-1}\\
                                 &=& {\left( |R_n| \cdot
        f_n(p-3)\right)}^{(p-1)(p-2)}.
                \end{eqnarray*}
                As
                $2 (p-2) \leq (p-1)(p-2)$ for $p > 2$, we finally obtain
				\begin{eqnarray*}
				{\left( |R_n| \cdot f_n(p-3)\right)}^{(p-1)(p-2)} &\geq& {\left(|R_n| \cdot f_n(p-3)\right)}^{2 (p-2)}\\
				&=& {(f_n(p-2))}^2.
				\end{eqnarray*}                
                
				We know therefore that $\MeshSet{\gamma}{\beta} \leq f_n(p-1)$ for 
				each $\gamma \in R_i$. Finally, using the fact that 
				$|R_i| \leq |R_n|$, we get
				\begin{eqnarray*}
                        |M| &\leq& |R_n| \cdot f_n(p-1)\\
                            &\leq& \left(|R_n| \cdot f_n(p-1)\right)^p\\
                            &=& f_n(p).
                \end{eqnarray*}
    \end{enumerate}
    And so, we know that $|\MeshSet{\alpha}{\beta}| \leq f_n(p)$. Solving the
    recurrence for $f_n(p)$, using e.g.~Mathematica \textregistered~\cite{mathematicaSoft}, we obtain the following closed form expression
    \[f_n(p) = {|R_n|}^{e\, p\, \gammaF{p}}, \]
    where
    \[ \Gamma(s,x) = (s - 1)!\, e^{-x} \sum_{k=0}^{s-1} \frac{x^k}{k!} \]
    is the upper incomplete gamma function (see e.g.~\cite{abramowitz-stegun1974}).
    Simplifying the above expression in the case $x = 1$ and using the observation
    that $\sum_{k=0}^{s-1} \frac{1}{k!} \leq e$ for arbitrary $s$, we finally
    obtain the anticipated upper bound
    \[ f_n(p) \leq {|R_n|}^{e\, p!}. \]
\end{proof}

\begin{lemma}\label{lem-rewritingset-size-upperbound}
    Let $\alpha,\beta$ be two trees of degree at most $n$ such that their total potential $\potential{\alpha} + \potential{\beta}$ is equal to $p$. Then, the number of distinct trees in $\RewritingSet{\alpha}{\beta}$ is bounded by ${|R_n|}^{1 + e\, p!}$.
\end{lemma}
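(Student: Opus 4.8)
The plan is to mimic, almost line for line, the inductive argument of Lemma~\ref{lem-meshset-upperbound}. Recall that its proof introduced the primitive recursive function $f_n$ with $f_n(0)=1$ and $f_n(k)=\bigl(|R_n|\cdot f_n(k-1)\bigr)^{k}$ for $k>0$, showed that $|\MeshSet{\alpha}{\beta}|\le f_n(\potential{\alpha}+\potential{\beta})$ whenever $\alpha,\beta$ have degree at most $n$, and finally derived the closed form $f_n(p)\le{|R_n|}^{e\,p!}$. My aim is to prove, by induction on $p=\potential{\alpha}+\potential{\beta}$, the sharper bound
\[ |\RewritingSet{\alpha}{\beta}|\ \le\ |R_n|\cdot f_n(p), \]
whence the lemma follows immediately, since $|R_n|\cdot f_n(p)\le|R_n|\cdot{|R_n|}^{e\,p!}={|R_n|}^{1+e\,p!}$.

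For the base case $p=0$ both $\alpha$ and $\beta$ have potential $0$, so $\beta\in\{S,K,\mathcal{C}\}$; then $\RewritingSet{\alpha}{\beta}$ is either empty or the singleton $\{\mathcal{C}\alpha\}$, and $1\le|R_n|=|R_n|\cdot f_n(0)$. For $p>0$ I would case-split on the structure of $\beta$, following the branches of the \textsc{RewritingSet} pseudo-code. If $\beta\in\{S,K\}$ or $\beta=\mathcal{C}$ the output has at most one element. If $\beta=X\beta_1\ldots\beta_m$ is complex, then either $\alpha\bowtie\beta_m$ and the output is a singleton, or $\alpha\parallel\beta_m$ and the output is the Cartesian product $\{X\beta_1\ldots\beta_{m-1}\}\times\MeshSet{\alpha}{\beta_m}$, of size $|\MeshSet{\alpha}{\beta_m}|$. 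Since $\beta_m$ is a subtree of $\beta$ we have $\rank{\beta_m}\le\rank{\beta}\le n$ and $\potential{\alpha}+\potential{\beta_m}\le p-1$, so Lemma~\ref{lem-meshset-upperbound} gives $|\MeshSet{\alpha}{\beta_m}|\le f_n(p-1)\le f_n(p)\le|R_n|\cdot f_n(p)$.

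The substantive case is $\beta=R_k$. Here $\rank{\beta}=k+1\le n$, so $k\le n-1$, every production $\gamma\in R_k$ has degree at most $n$, and $|R_k|\le|R_n|$; since $\RewritingSet{\alpha}{R_k}=\bigcup_{\gamma\in R_k}\RewritingSet{\alpha}{\gamma}$ up to removal of duplicates, it suffices to show $|\RewritingSet{\alpha}{\gamma}|\le f_n(p)$ for every $\gamma\in R_k$, which then yields $|\RewritingSet{\alpha}{R_k}|\le|R_k|\cdot f_n(p)\le|R_n|\cdot f_n(p)$. If $\gamma$ does not reference $R_k$, then $\potential{\gamma}\le\potential{R_k}-1$, hence $\potential{\alpha}+\potential{\gamma}\le p-1$ and the induction hypothesis gives $|\RewritingSet{\alpha}{\gamma}|\le|R_n|\cdot f_n(p-1)\le f_n(p)$, the last inequality because $f_n(p)=\bigl(|R_n|\cdot f_n(p-1)\bigr)^{p}\ge|R_n|\cdot f_n(p-1)$ for $p\ge1$.

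The one genuinely delicate point --- the same obstacle that arises in Lemma~\ref{lem-meshset-upperbound} --- is when $\gamma$ is a self-referencing production of $R_k$, \ie $\gamma\in\{SR_k,KR_k,SR_0R_k,SR_kR_0\}$ (and for $k=0$ the productions $SR_0,KR_0,SR_0R_0$), because then $\potential{\gamma}$ may exceed $\potential{R_k}$ and the induction hypothesis on $p$ is unavailable. The key observation is that the final argument of any such $\gamma$ is either $R_k$ or $R_0$, so the complex-tree branch of \textsc{RewritingSet} either returns a singleton or delegates directly to \textsc{MeshSet} --- it never makes a further recursive \textsc{RewritingSet} call. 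Thus $|\RewritingSet{\alpha}{\gamma}|$ is bounded by $\max\{1,|\MeshSet{\alpha}{R_k}|\}$ or $\max\{1,|\MeshSet{\alpha}{R_0}|\}$; in the first situation $\potential{\alpha}+\potential{R_k}=p$, and in the second $\potential{\alpha}+\potential{R_0}=\potential{\alpha}+1\le\potential{\alpha}+\potential{R_k}=p$ since $\potential{R_0}=1\le\potential{R_k}$, so both are $\le f_n(p)$ by Lemma~\ref{lem-meshset-upperbound}. This closes the case $\beta=R_k$ and completes the induction; the final step $|R_n|\cdot f_n(p)\le{|R_n|}^{1+e\,p!}$ is just the closed form for $f_n$ recorded in Lemma~\ref{lem-meshset-upperbound}. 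The main work, and the only nontrivial step, is this treatment of self-referencing productions where the potential fails to decrease; everything else is a routine walk through the definition of \textsc{RewritingSet} glued to the mesh-set estimate.
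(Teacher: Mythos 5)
Your proof is correct and follows essentially the same route as the paper: every branch of \textsc{RewritingSet} either returns at most one tree or delegates to a single \textsc{MeshSet} call, so the whole estimate reduces to Lemma~\ref{lem-meshset-upperbound} together with a factor $|R_n|$ for the union over the productions of $R_k$. The induction on $p$ that you wrap around this is harmless but not really necessary: as you yourself observe, productions of a grammar are never bare non-terminals, so the recursion in \textsc{RewritingSet} bottoms out after one unfolding, and the paper simply argues directly. One point where your version is actually more careful: the paper asserts $\potential{\alpha}+\potential{\gamma_m}\le p-1$ for \emph{every} complex production $\gamma=X\gamma_1\ldots\gamma_m\in R_i$, which fails for the self-referencing productions whose last argument is $R_i$ itself (there the total potential is exactly $p$, not $p-1$). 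Your separate treatment of $S R_k$, $K R_k$, $S R_0 R_k$ and $S R_k R_0$, invoking the mesh-set bound at potential $p$ rather than $p-1$, is the right way to close that case, and it still lands comfortably within ${|R_n|}^{1+e\,p!}$.
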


\begin{proof}
	If $|\RewritingSet{\alpha}{\beta}| \leq 1$, then our claim is trivially true. Let us focus therefore on the remaining cases when either $\beta = X \beta_1 \ldots \beta_m$ and both $\beta_m$ and $\alpha$ are non-rewritable, or $\beta = R_i$. 
	
	First, consider the former case. Note that the resulting rewriting set is of equal size as
    $\MeshSet{\alpha}{\beta_m}$. Since $\potential{\alpha} + \potential{\beta_m}
    \leq p - 1$, we can use Lemma~\ref{lem-meshset-upperbound} to deduce that
    \[ |\RewritingSet{\alpha}{\beta}| = |\MeshSet{\alpha}{\beta_m}| \leq  {|R_n|}^{e\, (p-1)!} < {|R_n|}^{1 + e\, p!}.\]
	
	Now, let us consider the latter case. In order to maximize
    the resulting rewriting set we assume that each production $\gamma \in
    R_i$ generates a disjoint set of trees. We claim that each production
     $\gamma$ contributes at most ${|R_n|}^{e\, p!}$ new trees to the resulting
      rewriting set and therefore $|\RewritingSet{\alpha}{\beta}| \leq {|R_n|}^{1 + e\, p!}$, as there are at most $|R_n|$ productions in $R_i$. Indeed, consider an arbitrary $\gamma \in R_i$. Evidently, if $|\RewritingSet{\alpha}{\gamma}| \leq 1$, then our claim is true. Hence, let us assume that $|\RewritingSet{\alpha}{\gamma}| > 1$. It follows that $\gamma$ is complex. Let us rewrite it as $X \gamma_1 \ldots \gamma_m$. Note that as in the previous case, the resulting rewriting set is of equal size as $\MeshSet{\alpha}{\gamma_m}$. Since $\potential{\alpha} + \potential{\gamma_m} \leq p - 1$ we use Lemma~\ref{lem-meshset-upperbound} and get
      \[ |\RewritingSet{\alpha}{\gamma}| = |\MeshSet{\alpha}{\gamma_m}| \leq  {|R_n|}^{e\, (p-1)!} < {|R_n|}^{e\, p!}.\]
\end{proof}

\begin{lemma}\label{lem-meshset-potential-upperbound}
    Let $\alpha,\beta$ be two trees of total potential
    $\potential{\alpha} + \potential{\beta}$ equal to $p$. Then, each mesh in
    $\MeshSet{\alpha}{\beta}$ has potential bounded by $p! (1 + e)$.
\end{lemma}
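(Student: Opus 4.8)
The plan is to argue by induction on the total potential $p = \potential{\alpha} + \potential{\beta}$, establishing that every $\gamma \in \MeshSet{\alpha}{\beta}$ satisfies $\potential{\gamma} \leq p!\,(1+e)$. The base case is vacuous, since for small $p$ --- in particular $p = 0$ --- the set $\MeshSet{\alpha}{\beta}$ is empty. For the inductive step we may assume $\MeshSet{\alpha}{\beta} \neq \emptyset$, which by inspection of \textsc{MeshSet} leaves exactly two situations: either $\alpha = X \alpha_1 \ldots \alpha_m$ and $\beta = X \beta_1 \ldots \beta_m$ are similar and complex, or, up to symmetry, $\alpha = R_i$ and $\beta$ is complex. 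Throughout we use two elementary arithmetic facts: $a \leq a!$ for every $a \geq 1$, and $\sum_j a_j! \leq \big(\sum_j a_j\big)!$ whenever the $a_j$ are positive integers.

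In the first situation each mesh has the shape $\gamma = X \gamma_1 \ldots \gamma_m$, where $\gamma_i$ is returned by \textsc{Mesh} on the pair $\alpha_i, \beta_i$. Writing $p_i = \potential{\alpha_i} + \potential{\beta_i}$, we have $\sum_{i} p_i = p - 2m$ and $p_i < p$ for every $i$. If $p_i = 0$ then $\gamma_i$ has potential $0$; if $p_i \geq 1$ then $\gamma_i$ is either one of $\alpha_i, \beta_i$ --- and then $\potential{\gamma_i} \leq p_i \leq p_i!$ --- or a mesh of $\MeshSet{\alpha_i}{\beta_i}$, to which the induction hypothesis applies and yields $\potential{\gamma_i} \leq p_i!\,(1+e)$. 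Hence $\potential{\gamma} = m + \sum_i \potential{\gamma_i} \leq m + (1+e) \sum_{i : p_i \geq 1} p_i! \leq m + (1+e)\,(p-2m)!$, and since $1 \leq m \leq p/2$ and $(p - 2m)! \leq (p-2)!$, comparison with $(1+e)\,p! \geq (1+e)\,p(p-1)(p-2)!$ shows $m + (1+e)\,(p - 2m)! \leq (1+e)\,p!$.

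In the second situation $\gamma$ comes from $\MeshSet{\delta}{\beta}$ for some production $\delta \in R_i$. If $\delta$ does not reference $R_i$, then $\potential{\delta} < \potential{R_i}$ by definition of potential, so $\potential{\delta} + \potential{\beta} \leq p - 1 < p$ and the induction hypothesis gives $\potential{\gamma} \leq (p-1)!\,(1+e) \leq p!\,(1+e)$. The remaining subcase is that of the self-referencing productions, handled --- as in the proof of Proposition~\ref{prop-meshset-rewrites} --- by exploiting that $\delta$ must then be of one of the shapes $X R_i$, $S R_i R_0$, $S R_0 R_i$, so that peeling off its head yields a recursive \textsc{MeshSet} call of strictly smaller total potential. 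If $\delta = X R_i$, then $\beta = X \beta_1$ and $\gamma = X \gamma_1$ where $\gamma_1$ arises from \textsc{Mesh} on $R_i$ and $\beta_1$; since $\potential{R_i} + \potential{\beta_1} = p - 1$, the induction hypothesis (or the direct estimate when $\gamma_1 \in \{R_i, \beta_1\}$) gives $\potential{\gamma_1} \leq (p-1)!\,(1+e)$, hence $\potential{\gamma} = 1 + \potential{\gamma_1} \leq 1 + (p-1)!\,(1+e) \leq p!\,(1+e)$. If $\delta = S R_i R_0$ (the case $S R_0 R_i$ is symmetric), then $\beta = S \beta_1 \beta_2$ and $\gamma = S \gamma_1 \gamma_2$ with $\gamma_1, \gamma_2$ coming from \textsc{Mesh} on $R_i, \beta_1$ and on $R_0, \beta_2$; setting $a = \potential{R_i} + \potential{\beta_1}$ and $b = \potential{R_0} + \potential{\beta_2}$ we have $a + b = p - 1$ with $a, b \geq 1$, so $\potential{\gamma_1} \leq a!\,(1+e)$, $\potential{\gamma_2} \leq b!\,(1+e)$, and therefore $\potential{\gamma} = 2 + \potential{\gamma_1} + \potential{\gamma_2} \leq 2 + (1+e)(a! + b!) \leq 2 + (1+e)\,(p-1)! \leq p!\,(1+e)$.

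The main obstacle is the one recurring in every potential-based argument of Section~\ref{sec:soundness}: a naive induction on $\potential{\alpha} + \potential{\beta}$ fails on self-referencing productions, whose potential strictly exceeds that of their own grammar. The escape is exactly the one used for Proposition~\ref{prop-meshset-rewrites} and Lemma~\ref{lem-meshset-disjoint} --- there are only the three shapes $X R_i$, $S R_i R_0$, $S R_0 R_i$, and after removing the outermost combinator the \textsc{MeshSet} recursion descends to a call whose total potential has dropped below $p$. Everything else is routine bookkeeping: the additive terms $m$, $1$ and $2$, together with the step $a! + b! \leq (p-1)!$, are absorbed into the factorial bound because $p!$ exceeds $(p-1)!$ (and a fortiori $(p-2)!$) by the factor $p$ (respectively $p(p-1)$); only a handful of small values of $p$ need to be inspected by hand. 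This lemma is the potential-tracking counterpart of Lemma~\ref{lem-meshset-upperbound}, which tracked the cardinality of the mesh set rather than the potential of individual meshes.
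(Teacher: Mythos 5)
Your proof is correct and follows essentially the same route as the paper: the same induction on total potential, the same two-case decomposition, and the same device of peeling the head off the three self-referencing production shapes $X R_i$, $S R_i R_0$, $S R_0 R_i$ to recover a strictly smaller recursive call. The only difference is cosmetic --- you verify the bound $p!\,(1+e)$ directly via elementary factorial inequalities, whereas the paper first proves the bound for an auxiliary recurrence $f(k) = k\,(f(k-1)+1)$ and then solves that recurrence in closed form.
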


\begin{proof}
        Induction over total potential $p$. Again, it suffices to consider such $\alpha, \beta$ that $\MeshSet{\alpha}{\beta}$ is not empty. Immediately, the base case $p = 0$ is clear. Let us assume that $p > 0$. Consider the following primitive recursive function $f :\mathbb{N} \to \mathbb{N}$.
\begin{equation*}
  f(k) = \left\{\begin{array}{r@{}l@{\qquad}l}
    & 1 & \text{if }\ k = 0, \\[\jot]
    & k \cdot \left( f(k-1) + 1 \right) & \text{otherwise.}
  \end{array}\right.
\end{equation*} 
    
	Let $\gamma \in \MeshSet{\alpha}{\beta}$. We claim that $\potential{\gamma}
    \leq f(p)$. Note that
    $f$ is an increasing function attaining positive values for any input. 
    We have two cases to consider.
    \begin{enumerate}[(i)]
        \item Suppose that $\alpha = X \alpha_1 \ldots \alpha_m$
                and $\beta = X \beta_1 \ldots \beta_m$. Note that
                $\potential{\alpha_i} + \potential{\beta_i} \leq p - 2$ 
                for each pair of corresponding arguments $\alpha_i, \beta_i$. Using
                 the induction hypothesis to pairs $\alpha_i, \beta_i$ and 
                 the fact that $\gamma \in \MeshSet{\alpha}{\beta}$ is 
                 similar to both $\alpha$ and $\beta$, we bound $\gamma$'s potential by
                \[\potential{\gamma} \leq m \cdot f(p-2) + m \leq p \cdot (f(p-2) + 1) \leq f(p). \]
        \item Assume w.l.o.g.~that $\alpha = R_i$ and $\beta$ is complex.
                It follows that $\gamma \in \MeshSet{\delta}{\beta}$ for some
                $\delta \in R_i$. If $\delta$ does not reference $R_i$, then
                clearly $\potential{\delta} \leq \potential{R_i} - 1$ and
                therefore $\potential{\gamma} \leq f(p-1)$. Now, suppose that
                $\delta$ is a self-referencing production of $R_i$. 
                
                If $\delta = X R_i$, then $\beta$ is in form of $X \beta_1$ and
                similarly $\gamma = X \gamma_1$. It follows that $\potential{\delta}
                 = \potential{R_i} + 1$ and therefore $\potential{\delta} +
                  \potential{\beta} = p + 1$. Note however that $\potential{\gamma_1}
                 \leq f(p-1)$ as $\potential{R_i} + \potential{\beta_1} \leq p - 1$.
                 Due to that, $\potential{\gamma} = 1 + f(p - 1) \leq f(p)$.
                 
                 Let us assume w.l.o.g.~that $\delta = S R_i R_0$. Immediately,
                 $\beta$ is in form of $S \beta_1 \beta_2$ whereas $\gamma = S \gamma_1 	
                 \gamma_2$. Moreover, $\potential{\delta} = \potential{R_i} + 3$. Note 
                 however that both $\potential{R_i} + \potential{\beta_1} \leq p - 2$
                 and $\potential{R_0} + \potential{\beta_2} \leq p - 2$. We can
                  therefore use the induction hypothesis and conclude that
                  \[\potential{\gamma} = 2 + \potential{\gamma_1} + \potential{\gamma_2}
                  \leq 2 + 2 \cdot f(p-2).\]
				Since $\potential{\delta} \geq 4$, we know that $p \geq 3$ and so we can further bound $\potential{\gamma}$ by
				\begin{eqnarray*}
				\potential{\gamma} &= 2 \left(1 + f(p-2)\right)\\
				&\leq (p-1) \left( 1 + f(p-2) \right)\\
				&=f(p-1) \leq f(p).
				\end{eqnarray*}
    \end{enumerate}
    Finally, we know that $\potential{\gamma} \leq f(p)$. What remains is to solve
    the recursion, using e.g.~Mathematica \textregistered~\cite{mathematicaSoft}, for $f$ and give its closed form solution. It follows that
    \begin{eqnarray*}
        f(p) &= \Gamma (1 + p) + e\, p\, \gammaF{p}\\
             &\leq p! + e\, p!\\
             &= p! (1 + e)
    \end{eqnarray*}
    where \[ \Gamma(n) = (n - 1)! \]
\end{proof}

\begin{lemma}\label{lem-rewritingset-potential-upperbound}
    Let $\alpha,\beta$ be two trees of potential
    $\potential{\alpha} + \potential{\beta} = p$. Then, each tree in
    $\RewritingSet{\alpha}{\beta}$ has potential bounded by $p! (1 + e) + p$.
\end{lemma}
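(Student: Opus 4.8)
The plan is to follow the template of Lemma~\ref{lem-meshset-potential-upperbound}, arguing by induction on the total potential $p = \potential{\alpha} + \potential{\beta}$ and splitting on the shape of $\beta$ exactly as the \textsc{RewritingSet} subroutine does. We may assume $\RewritingSet{\alpha}{\beta} \neq \emptyset$, so the cases $\beta \in \{S,K\}$ are vacuous, and the case $\beta = \mathcal{C}$ produces the single tree $\mathcal{C}\, \alpha$ of potential $1 + \potential{\alpha} = 1 + p$, which lies below $p!(1+e)+p$; in particular this settles the base case $p = 0$.

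When $\beta = X\beta_1 \ldots \beta_m$ is complex, the three rewritability branches yield, respectively, $\beta$ itself (potential $\potential{\beta} \leq p$), the tree $X\beta_1 \ldots \beta_{m-1}\, \alpha$ (potential $\potential{\beta} - \potential{\beta_m} + \potential{\alpha} \leq p$), or a tree $X\beta_1 \ldots \beta_{m-1}\, \delta$ with $\delta \in \MeshSet{\alpha}{\beta_m}$. In the last branch $\beta_m$ is a proper subtree of $\beta$, so $\potential{\alpha} + \potential{\beta_m} \leq p - 1$ and Lemma~\ref{lem-meshset-potential-upperbound} bounds $\potential{\delta}$ by $(p-1)!(1+e) \leq p!(1+e)$; since $\potential{\beta} - \potential{\beta_m} \leq p$, the whole tree stays within $p!(1+e) + p$.

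The case $\beta = R_k$, where $\RewritingSet{\alpha}{R_k} = \bigcup_{\gamma \in R_k} \RewritingSet{\alpha}{\gamma}$, is the main obstacle, precisely because the self-referencing productions $\gamma$ of $R_k$ (the productions $S R_k$, $K R_k$, $S R_0 R_k$ and $S R_k R_0$) satisfy $\potential{\gamma} > \potential{R_k} = \potential{\beta}$, so $\RewritingSet{\alpha}{\gamma}$ cannot be fed directly to the induction hypothesis. Every other production $\gamma \in R_k$ has $\potential{\gamma} < \potential{R_k}$, hence total potential $\potential{\alpha} + \potential{\gamma} \leq p - 1$, and being complex its rewriting set is handled by the induction hypothesis and stays strictly below $p!(1+e) + p$. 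For a self-referencing $\gamma$, the key observation is that \textsc{RewritingSet} only inspects the last argument $\gamma_m \in \{R_k, R_0\}$ of $\gamma$, so one unfolds a single level: the result is either $\gamma$ itself or $\gamma$ with $\gamma_m$ replaced by $\alpha$ --- whose potential is a small constant plus $\min\{\potential{R_k}, \potential{\alpha}\}$ or $\potential{\alpha}$, comfortably within budget since $\potential{\alpha} \leq p - \potential{R_k}$ --- or else $\gamma$ with $\gamma_m$ replaced by some $\delta \in \MeshSet{\alpha}{\gamma_m}$, where $\potential{\alpha} + \potential{\gamma_m} \leq p$ so Lemma~\ref{lem-meshset-potential-upperbound} bounds $\potential{\delta}$ by $p!(1+e)$. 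One then checks that the fixed prefix contributed by $\gamma$, together with the spare additive term $p$ in the target bound, absorbs the overhead; the finitely many degenerate instances where this generic estimate is slightly too coarse (notably $\beta = R_0$ with $\gamma = S R_0 R_0$ and $\potential{\alpha} = 1$) are verified by direct computation. Collecting these bounds over all productions of $R_k$ closes the induction.
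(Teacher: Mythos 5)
Your proof is correct and follows essentially the same route as the paper: a case split on the shape of $\beta$, with the non-rewritable branch reduced to Lemma~\ref{lem-meshset-potential-upperbound} and the case $\beta = R_k$ handled by unfolding into the productions of $R_k$ with separate treatment of the self-referencing ones. The only divergence is bookkeeping in that last case --- the paper descends into the sub-arguments of a self-referencing production so that the mesh lemma is invoked at total potential $p-1$ or $p-2$, whereas you invoke it at total potential up to $p$ and absorb the fixed prefix into the additive $+p$ slack, which leaves exactly the one borderline instance ($\gamma = S R_0 R_0$ with $\potential{\alpha}=1$) that you correctly isolate and that does check out by direct computation.
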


\begin{proof}
    Let $\gamma$ be an arbitrary tree in $\RewritingSet{\alpha}{\beta}$. Based
    on the structure of $\beta$ we have several cases to consider.    
    If $\beta = \mathcal{C}$, then $\gamma = \mathcal{C} \alpha$ and so
     $\potential{\gamma} = \potential{\alpha} + 1 = p + 1$. Note
    that $1 < p! (1 + e)$ for any $p$ and thus our bound holds. 
    
    If $\beta = X
    \beta_1 \ldots \beta_m$, then $\potential{\alpha} + \potential{\beta_m} \leq p - 1$. 	In both cases when 	$\alpha \bowtie \beta_m$ the resulting tree has potential bounded
     by $p$ and so also by $p! (1 + e) + p$. Let us assume that $\alpha \parallel
      \beta_m$. We can therefore rewrite $\gamma$ as $X \gamma_1 \ldots \gamma_m$.
    Using Lemma~\ref{lem-meshset-potential-upperbound}, we know that
    $\potential{\gamma_m} \leq (p-1)! (1 + e)$. Moreover, both $\alpha$ and $\beta$
     are similar to $\gamma$. Let us rewrite them as $X \alpha_1 \ldots \alpha_m$ and $X \beta_1, \ldots, \beta_m$, respectively. Note that for each  $i < m$, $\gamma_i$ is  equal to $\alpha_i$ or $\beta_i$. It follows that we can bound the potential of $X \gamma_1 \ldots \gamma_{m-1}$ by $p-1$ and hence $\gamma$'s potential by $(p-1)! (1 + e) + p$.

    Now, if $\beta = R_i$, then $\gamma \in \RewritingSet{\alpha}{\delta}$
    for some $\delta \in R_i$. Clearly, if $\delta$ does not reference $R_i$, we know that $\potential{\delta} \leq \potential{R_i} - 1 \leq p - 1$. Moreover, $\delta$ is complex, as otherwise $\RewritingSet{\alpha}{\delta} = \emptyset$. Using our previous argumentation, we can therefore conclude that $\potential{\gamma} \leq (p-1)! (1 + e) + p$.
    Suppose that $\delta$ is a self-referencing production of $R_i$. If $\delta = X R_i$, then $\alpha$ is in form of $X \alpha_1$ and $\gamma = X \gamma_1$. Immediately,
    $\potential{\alpha} + \potential{\delta} = p + 1$. If $R_i \bowtie \alpha_1$, then $\gamma$ has potential bounded by $p$. Therefore, let us assume
    that $R_i \parallel \alpha_1$. Since $\potential{R_i} + \potential{\alpha_1} = p - 1$, we know from Lemma~\ref{lem-meshset-potential-upperbound} that $\potential{\gamma_1} \leq (p-1)! (1 + e)$. It follows immediately that $\potential{\gamma} \leq (p-1)! (1 + e) + 1 \leq p! (1 + e) + p$. 
    
    Finally, suppose that $\delta = S \delta_1 \delta_2$ and so
	$\alpha = S \alpha_1 \alpha_2$. Immediately, $\gamma = S \gamma_1 \gamma_2$.
	 Again, if $\delta_2 \bowtie \alpha_2$,
	 we can bound $\gamma$'s potential by $p$. Hence, let us assume that $\delta_2 \parallel \alpha_2$. Clearly, $\potential{\alpha} + \potential{\delta} = p + 3$.
	 Note however that $\potential{\alpha_1} + \potential{\delta_1} \leq p - 2$
	 and $\potential{\alpha_2} + \potential{\delta_2} \leq p - 2$, as
	 both $\delta_1$ and $\delta_2$ are non-terminal reduction grammar symbols of positive potential. Using Lemma~\ref{lem-meshset-potential-upperbound} to
	 $\MeshSet{\alpha_2}{\delta_2}$ we conclude that $\potential{\gamma_2} \leq (p-2)! (1 + e)$. It follows that $\potential{\gamma} \leq (p-2)! (1 + e) + p \leq p! (1 + e) + p$.
\end{proof}

\begin{lemma}\label{prop-Rn-potential-upperbound}
    There exists a primitive recursive function $\psi : \mathbb{N} \to \mathbb{N}$ such that $\potential{R_n} \leq \psi(n)$.
\end{lemma}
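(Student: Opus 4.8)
The plan is to argue by induction on $n$, reducing $\potential{R_n}$ to a primitive recursive expression in $\potential{R_{n-1}}$ and then unfolding the recursion. Since $\potential{R_0}=1$, the base case forces $\psi(0)=1$. For the inductive step I would use the identity $\potential{R_n}=1+\max_{\gamma\in\Phi(R_n)}\potential{\gamma}$, so it is enough to bound the potential of every production of $R_n$ that does not reference $R_n$. By the definition of \textsc{ReductionGrammar}, the productions of $R_n$ are $S R_n$, $K R_n$, the family $\{S R_{n-i}R_i\}_{i=0}^{n}$, the production $K R_{n-1}\mathcal{C}$, the \textsc{K-Expansions} of the productions of $R_{n-1}$, and the \textsc{S-Expansions} of the productions of $R_{n-1}$. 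Discarding the self-referencing ones ($S R_n$, $K R_n$, $S R_0 R_n$, $S R_n R_0$), the set $\Phi(R_n)$ is covered by four families: (a) $S R_{n-i}R_i$ with $1\le i\le n-1$, (b) $K R_{n-1}\mathcal{C}$, (c) the \textsc{K-Expansions} of $R_{n-1}$, and (d) the \textsc{S-Expansions} of $R_{n-1}$ --- none of which references $R_n$, since \textsc{RewritingSet} and \textsc{MeshSet} never expand $\mathcal{C}$ into an $R_j$ and never raise the degree of their inputs.

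Next I would establish the auxiliary bound $\potential{\gamma}\le 3+\potential{R_{n-1}}$ for every production $\gamma\in R_{n-1}$: productions in $\Phi(R_{n-1})$ have potential at most $\potential{R_{n-1}}-1$ by the definition of $\potential{R_{n-1}}$, whereas the self-referencing productions of $R_{n-1}$ are among $S R_{n-1}$, $K R_{n-1}$, $S R_0 R_{n-1}$, $S R_{n-1}R_0$, of potentials $1+\potential{R_{n-1}}$ and $3+\potential{R_{n-1}}$ (using $\potential{R_0}=1$). Writing $P:=3+\potential{R_{n-1}}$, families (a) and (b) have potential at most $2+2\potential{R_{n-1}}$ and $2+\potential{R_{n-1}}$ respectively, by monotonicity of $\potential{R_\bullet}$. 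For family (c), a direct expansion of the definition of potential shows that a \textsc{K-Expansion} of $\gamma$ raises its potential by exactly $2$ (the inserted $\mathcal{C}$ adds $1$ to the length but $0$ to the potential, and re-parenthesising $K$ around the first $k$ arguments is potential-neutral), so these have potential at most $P+2$.

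Family (d) is the crux. For $\gamma=X\alpha_1\ldots\alpha_m\in R_{n-1}$ and a corresponding \textsc{S-Expansion} $\beta=S(X\alpha_1\ldots\alpha_k)\varphi_l\varphi_r\alpha_{k+3}\ldots\alpha_m$ with $\varphi_l\varphi_r\in\RewritingSet{\alpha_{k+1}}{\alpha_{k+2}}$, I would compute, straight from the definition of potential together with $\potential{\varphi_l\varphi_r}=1+\potential{\varphi_l}+\potential{\varphi_r}$, that $\potential{\beta}\le\potential{\gamma}+\potential{\varphi_l\varphi_r}$ (the only arguments of $\gamma$ that are dropped are $\alpha_{k+1}$ and $\alpha_{k+2}$, which are replaced by $\varphi_l,\varphi_r$). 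Now Lemma~\ref{lem-rewritingset-potential-upperbound}, applied to $\alpha_{k+1}$ and $\alpha_{k+2}$ whose total potential is at most $\potential{\gamma}\le P$, bounds $\potential{\varphi_l\varphi_r}$ by $P!\,(1+e)+P$. Hence every \textsc{S-Expansion} of $R_{n-1}$ has potential at most $2P+P!\,(1+e)$, which --- using $1+e<4$ and $2P\le P!$ for $P\ge 3$ --- is at most $5\,P!=5\,(3+\potential{R_{n-1}})!$, and this dominates the bounds for families (a)--(c).

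Putting everything together, $\potential{R_n}\le 1+5\,(3+\potential{R_{n-1}})!$. I would then define $\psi$ by $\psi(0)=1$ and $\psi(n)=1+5\,(3+\psi(n-1))!$; since the factorial, addition and multiplication are primitive recursive and primitive recursive functions are closed under composition and under this form of primitive recursion, $\psi$ is primitive recursive, and an immediate induction using the displayed inequality gives $\potential{R_n}\le\psi(n)$. The only real difficulty I foresee is bookkeeping: getting the two potential computations for \textsc{K-} and \textsc{S-Expansions} exactly right, and making sure $\Phi(R_n)$ contains no production family beyond (a)--(d) --- in particular that the rewriting and mesh operations used inside \textsc{S-Expansions} cannot smuggle in references to $R_n$ or to grammars of higher index, which is precisely where the degree-non-increasing behaviour of \textsc{RewritingSet} and \textsc{MeshSet} is needed.
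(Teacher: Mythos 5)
Your proposal is correct and follows essentially the same route as the paper: induction on $n$ via $\potential{R_n}=1+\max_{\gamma\in\Phi(R_n)}\potential{\gamma}$, the same case split into short productions, \textsc{K-Expansions} and \textsc{S-Expansions}, the same auxiliary bound $\potential{\gamma}\le 3+\potential{R_{n-1}}$ for $\gamma\in R_{n-1}$, and the same appeal to Lemma~\ref{lem-rewritingset-potential-upperbound} for the \textsc{S-Expansion} case. The only difference is the cosmetic choice of recurrence ($\psi(n)=1+5\,(3+\psi(n-1))!$ versus the paper's $4(\psi(n-1)+2)!+2\psi(n-1)+5$), which does not affect primitive recursiveness.
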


\begin{proof}
	Consider the following function $\psi : \mathbb{N} \to \mathbb{N}$:
	\begin{equation*}
  \psi(k) = \left\{\begin{array}{r@{}l@{\qquad}l}
    & 1 & \text{if }\ k = 0, \\[\jot]
    & 4 \left( \psi(k-1) + 2 \right)! + 2 \psi(k-1) + 5 & \text{otherwise.}
  \end{array}\right.
\end{equation*}

    Clearly, $\psi$ is an increasing primitive recursive function.
    We show that $\psi(n)$ bounds the potential of $R_n$ using induction over $n$.
    Since $\potential{R_0} = \psi(0) = 1$, the base case is clear.
    Let $n > 0$. In order to prove our claim, we have to check that $\potential{\alpha} \leq \psi(n) - 1$ for all productions $\alpha \in R_n$ which do not reference $R_n$.
     \begin{enumerate}[(i)]
     \item Suppose that $\alpha = S R_{n-i} R_i$. Clearly, the potential of $\alpha$ is equal to $2 + \potential{R_{n-i}} + \potential{R_i}$. Using the induction hypothesis, we know moreover that
     \begin{eqnarray*}
     \potential{\alpha} &\leq& 2 + \psi(n-i) + \psi(i)\\
     &\leq& 2 + 2 \psi(n-1)\\
     &\leq& \psi(n) - 1.
     \end{eqnarray*}
     \item Let $\alpha = K R_{n-1} \mathcal{C}$. Due to the fact that $\potential{\alpha} = 2 + \potential{R_{n-1}}$, we use the induction hypothesis and immediately obtain \[\potential{\alpha} \leq 2 + \psi(n-1) \leq \psi(n) - 1.\]
     \item Suppose that $\alpha \in \KExpansions{\beta}$ for some $\beta \in R_{n-1}$. Note that $\potential{\beta} \leq \psi(n-1) + 3$ as the productions of greatest potential in $R_{n-1}$ are exactly $S R_{n-1} R_0$ and $S R_0 R_{n-1}$. Since $\potential{\alpha} = 2 + \potential{\beta}$, we get
     \[ \potential{\alpha} \leq 5 + \psi(n-1) \leq \psi(n) - 1. \]
     \item Finally, let $\alpha \in \SExpansions{\beta}$ for some $\beta \in R_{n-1}$. Again, $\potential{\beta} \leq \potential{R_{n-1}} + 3$ and hence from the induction hypothesis $\potential{\beta} \leq \psi(n-1) + 3$. Let us rewrite $\alpha$ as $S(X \beta_1 \ldots \beta_k) \varphi_l \varphi_r \beta_{k+3} \ldots \beta_m$ where
     $\beta = X \beta_1 \ldots \beta_m$. Note that
     $\potential{\alpha} \leq \potential{\beta} + \potential{\varphi_l} + \potential{\varphi_r} + 1$. Moreover, as $\potential{\varphi_l \varphi_r} = 1 + \potential{\varphi_l} + \potential{\varphi_r}$, we get
     $ \potential{\alpha} \leq \potential{\beta} + \potential{\varphi_l \varphi_r}$.
     Since $\potential{\beta_{k+1} \beta_{k+2}} \leq \potential{\beta} - 1$ and thus, $\potential{\beta_{k+1} \beta_{k+2}} \leq \psi(n-1) + 2$, we can use Lemma~\ref{lem-rewritingset-potential-upperbound}
     to obtain \[\potential{\varphi_l \varphi_r} \leq
     (\psi(n-1) + 2)! (1 + e) + \psi(n-1) + 2.\]
     It follows therefore that
     \begin{eqnarray*}
     \potential{\alpha} &\leq& \potential{\beta} + \potential{\varphi_l \varphi_r}\\
     &\leq& (\psi(n-1) + 2)! (1 + e) + 2 \psi(n-1) + 5\\
     &\leq& \psi(n) - 1
     \end{eqnarray*}
     where the last inequality follows from the fact that
     \[ \left( 3 - e \right) (\psi(n-1) + 2)! \geq \frac{1}{5} (\psi(n-1) + 2)! \geq \frac{6}{5} \geq 0. \]
     \end{enumerate}
\end{proof}

\begin{theorem}
 There exists a primitive recursive function $\chi : \mathbb{N} \to \mathbb{N}$
    such that the number $|R_n|$ of productions in $R_n$ is bounded by $\chi(n)$.
\end{theorem}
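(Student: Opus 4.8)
The plan is to prove, by induction on $n$, that $|R_n| \le \chi(n)$ for an explicitly defined increasing primitive recursive function $\chi$, built on top of the function $\psi$ of Lemma~\ref{prop-Rn-potential-upperbound}. The starting point is the shape of \textsc{ReductionGrammar}: for $n \ge 1$, $R_n$ is the union of the $n+4$ fixed productions ($S R_n$, $K R_n$; the $n+1$ productions $S R_{n-i} R_i$; and $K R_{n-1} \mathcal{C}$) with $\bigcup_{\alpha \in R_{n-1}} \KExpansions{\alpha}$ and $\bigcup_{\alpha \in R_{n-1}} \SExpansions{\alpha}$. Since discarding duplicates only lowers the count,
\[ |R_n| \le (n+4) + \sum_{\alpha \in R_{n-1}} \bigl( |\KExpansions{\alpha}| + |\SExpansions{\alpha}| \bigr), \]
so it suffices to bound each summand by a quantity primitive recursive in $n$ and in $\chi(n-1)$.

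For the \textsc{K-Expansions}, Proposition~\ref{prop-LRn-production-length} gives every $\alpha \in R_{n-1}$ length at most $2n$, whence $|\KExpansions{\alpha}| \le 2n$. For the \textsc{S-Expansions}, writing $\alpha = X \alpha_1 \ldots \alpha_m$ we have $|\SExpansions{\alpha}| = \sum_{k=0}^{m-2} |\RewritingSet{\alpha_{k+1}}{\alpha_{k+2}}| \le (2n-1) \max_k |\RewritingSet{\alpha_{k+1}}{\alpha_{k+2}}|$, so the task reduces to bounding a single \textsc{RewritingSet} of two sibling arguments of a production of $R_{n-1}$. Each such $\alpha_i$ is a proper subtree of $\alpha$, hence references only $R_0, \ldots, R_{n-1}$ (so its degree is at most $n$) and $\potential{\alpha_{k+1}} + \potential{\alpha_{k+2}} \le \potential{\alpha} - m \le \potential{\alpha} - 2$. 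Combining Lemma~\ref{prop-Rn-potential-upperbound} with the fact that the productions of $R_{n-1}$ of greatest potential are $S R_{n-1} R_0$ and $S R_0 R_{n-1}$, of potential $\potential{R_{n-1}} + 3 \le \psi(n-1) + 3$, we obtain $p := \potential{\alpha_{k+1}} + \potential{\alpha_{k+2}} \le \psi(n-1) + 1$.

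The delicate point, which I expect to be the main obstacle, is that Lemmas~\ref{lem-meshset-upperbound} and~\ref{lem-rewritingset-size-upperbound} bound the sizes of \textsc{MeshSet} and \textsc{RewritingSet} in terms of $|R_n|$ --- exactly the quantity we are trying to bound. The resolution is that the hypothesis ``degree at most $n$'' means the inputs reference only $R_0, \ldots, R_{n-1}$, so the whole recursion of \textsc{RewritingSet} and its nested \textsc{MeshSet} calls touches only those grammars and never produces a tree mentioning $R_n$; inspecting the proofs of the two lemmas shows that $|R_n|$ is used there purely as an upper bound for the sizes of the grammars actually visited. Hence in our situation their bounds hold with $|R_n|$ replaced by $\max_{i < n} |R_i|$, which by the induction hypothesis and monotonicity of $\chi$ is at most $\chi(n-1)$. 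Using $e < 3$ and $\chi(n-1) \ge 1$ we get
\[ |\RewritingSet{\alpha_{k+1}}{\alpha_{k+2}}| \le {\chi(n-1)}^{1 + e\,p!} \le {\chi(n-1)}^{1 + 3(\psi(n-1)+1)!}. \]

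Assembling everything,
\[ |R_n| \le (n+4) + |R_{n-1}| \Bigl( 2n + (2n-1)\, {\chi(n-1)}^{1 + 3(\psi(n-1)+1)!} \Bigr), \]
so I would set $\chi(0) := 5 = |R_0|$ and, for $n \ge 1$,
\[ \chi(n) := (n+4) + \chi(n-1) \Bigl( 2n + (2n-1)\, {\chi(n-1)}^{1 + 3(\psi(n-1)+1)!} \Bigr). \]
This $\chi$ is obtained from $\psi$ by composition with addition, multiplication, exponentiation and the factorial, hence is primitive recursive, and it is visibly increasing; a straightforward induction on $n$ --- feeding $|R_{n-1}| \le \chi(n-1)$ and $\chi(i) \le \chi(n-1)$ for $i < n$ into the displayed inequality --- then yields $|R_n| \le \chi(n)$. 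Apart from the circularity issue noted above, every remaining step is routine bookkeeping with potentials, lengths, and the already-established counting lemmas.
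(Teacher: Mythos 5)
Your proof is correct and follows essentially the same route as the paper: the same decomposition of $R_n$ into the $n+4$ fixed productions plus the \textsc{K-} and \textsc{S-Expansions} of productions of $R_{n-1}$, bounded via Proposition~\ref{prop-LRn-production-length}, Lemma~\ref{prop-Rn-potential-upperbound} and Lemma~\ref{lem-rewritingset-size-upperbound}, yielding a recurrence for $\chi$ of the same shape as the paper's (the constants differ only marginally). The one point where you are more careful is the apparent circularity in invoking Lemma~\ref{lem-rewritingset-size-upperbound}, whose bound is stated in terms of $|R_n|$: the paper simply writes $|R_{n-1}|$ as the base of the exponential, implicitly relying on the recursion visiting only $R_0,\ldots,R_{n-1}$, whereas you explicitly justify replacing that base by $\max_{i<n}|R_i|\le\chi(n-1)$ via the induction hypothesis and the monotonicity of $\chi$ --- a legitimate and welcome clarification rather than a departure from the paper's argument.
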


\begin{proof}
Consider $R_n$ for some $n > 0$. Note that $R_n$ consists of:
\begin{enumerate}[(i)]
\item two productions $S R_n$ and $K R_n$,
\item $n + 1$ short $S$-productions in form of $S R_{n-i} R_i$,
\item an additional $K$-production $K R_{n-1} \mathcal{C}$,
\item $\KExpansions{\alpha}$ for each $\alpha \in R_{n-1}$ and
\item $\SExpansions{\alpha}$ for each $\alpha \in R_{n-1}$.
\end{enumerate}

It suffices therefore to bound the number of \textsc{K-} and \textsc{S-Expansions}, as
the number of other productions in $R_n$ is clear. Let us start with \textsc{K-Expansions}. Suppose that $\alpha$ is of length $m$. Clearly, $|\KExpansions{\alpha}| = m$. Using Proposition~\ref{prop-LRn-production-length}, we know that that each production $\alpha \in R_{n-1}$ is of length at most $2n$. It follows that there are at most $2 n \cdot |R_{n-1}|$ \textsc{K-Expansions} in $R_n$. Now, let us consider \textsc{S-Expansions}. In order to bound the number of \textsc{S-Expansions} in $R_n$, we assume that each production $\alpha \in R_{n-1}$ is of length $2n$ and moreover each \textsc{RewritingSet} of appropriate portions of $\alpha$ generates a worst-case set of trees. And so, assuming that $\alpha$ is of length $2n$ we can rewrite it as $ X \alpha_1 \ldots \alpha_{2n}$. Let $\psi$ denote the upper bound function on the potential of $R_{n-1}$ from Lemma~\ref{prop-Rn-potential-upperbound}. Evidently, $\potential{\alpha} \leq \psi(n-1) + 3$. Now, using Lemma~\ref{lem-rewritingset-size-upperbound} we know that each $\RewritingSet{\alpha_i}{\alpha_{i+1}}$ contributes at most \[{|R_{n-1}|}^{1 + e\big(\psi(n-1) + 3\big)!} \]
new \textsc{S-Expansions}. As there are at most $2n-1$ pairs of indices $(i,i+1)$ yielding \textsc{RewritingSets}, we get that the number of \textsc{S-Expansions} in $R_n$ is bounded by
\[ (2n-1) \cdot |R_{n-1}| \cdot {|R_{n-1}|}^{1 + e\big(\psi(n-1) + 3\big)!} \leq (2n-1) \cdot {|R_{n-1}|}^{2 + 3 \big(\psi(n-1) + 3\big)!}. \]

Finally, since $|R_0| = 5$, we combine the above observations and get the following primitive recursive upper bound on $|R_n|$.
\begin{equation*}
  \chi(k) = \left\{\begin{array}{r@{}l@{\qquad}l}
    & 5 & \text{if }\ k = 0, \\[\jot]
    & 4 + k + 2k \cdot \chi(k-1) &\\
    & + (2k-1) \cdot {\chi(k-1)}^{2 + 3 \big(\psi(k-1) + 3\big)!} & \text{otherwise.}
  \end{array}\right.
\end{equation*}
\end{proof}

\section{Conclusion}\label{sec:conclusion}
We gave a complete syntactic characterization of normal-order reduction for combinatory logic over the set of primitive combinators $S$ and $K$. Our characterization uses regular tree grammars and therefore exhibits interesting
 algorithmic applications, including the computation of corresponding generating functions. We investigated the complexity of the generated reduction grammars, giving a primitive recursive upper bound on the number of their productions. We emphasize the fact that although the size of $R_n$ is bounded by a primitive recursive function of $n$, it seems to be enormously overestimated. Our computer implementation of the \textsc{Reduction Grammar} algorithm~\cite{mb-haskell-implementation} suggests that the first few numbers in the sequence ${\{|R_n|\}}_{n\in\mathbb{N}}$ are in fact
\[ 5, 12, 75, 625, 5673, 53164, 508199, \ldots \]

\vspace{2mm}
The upper bound $\chi(1)$ on the size of $R_1$ is already of order $6 \cdot 10^{84549}$, whereas the actual size of $R_1$ is equal to $12$. Naturally,
we conjecture that ${\{R_n\}}_{n\in\mathbb{N}}$ grows much slower than
${\{\chi(n)\}}_{n\in\mathbb{N}}$, although the intriguing problem of 
giving better approximations on the size of $R_n$ for large $n$
 is still open.
 
\section*{Acknowledgements}
We would like to thank Katarzyna Grygiel for many fruitful discussions and valuable comments.

\bibliographystyle{plain}
\bibliography{references}

\end{document}